 \newtheorem{thm}{\protect\theoremname}
 \theoremstyle{plain}
 \newtheorem{lem}[thm]{\protect\lemmaname}
 \theoremstyle{plain}
 \newtheorem{rem}[thm]{\protect\remarkname}
 \theoremstyle{plain}
 \newtheorem*{lem*}{\protect\lemmaname}
 \theoremstyle{plain}
 \theoremstyle{plain}
   \providecommand{\corollaryname}{Corollary}
   \providecommand{\lemmaname}{Lemma}
   \providecommand{\propositionname}{Proposition}
   \providecommand{\remarkname}{Remark}
 \providecommand{\theoremname}{Theorem}
\newcommand{\Or}{\mathcal{O}}
\newcommand{\wt}{\widetilde}
\newcommand{\Tr}{\mathrm{Tr}}
\newcommand{\dd}{\mathrm{d}}
\newcommand{\ZZ}{\mathbb{Z}}
\newcommand{\ketbra}[2]{\ket{#1}\!\bra{#2}}
\newcommand{\changed}[1]{#1}
\begin{document}

%%%%%%%%%%%%%%%%%%%%%%%%%%%%%%%%%%%%%%%%%%%%%%%%%%%%%%%%%%%%%%%%%%%%%%%%%%%%%%
\title{Provably accurate simulation of gauge theories and bosonic systems}% Force line breaks with \\

\author{Yu Tong}
\affiliation{Google Quantum AI, Venice, CA, USA}
\affiliation{Department of Mathematics, University of California, Berkeley, CA, USA}

\author{Victor V. Albert}
\affiliation{Joint Center for Quantum Information and Computer Science,
NIST and University of Maryland, College Park, MD, USA}

\author{Jarrod R. McClean}
\affiliation{Google Quantum AI, Venice, CA, USA}

\author{John Preskill}
\affiliation{Institute for Quantum Information and Matter, Caltech, Pasadena, CA, USA}
\affiliation{AWS Center for Quantum Computing, Pasadena, CA, USA}

\author{Yuan Su}
\affiliation{Google Quantum AI, Venice, CA, USA}
\affiliation{Institute for Quantum Information and Matter, Caltech, Pasadena, CA, USA}

%  \altaffiliation[Also at ]{Physics Department, XYZ University.}%Lines break automatically or can be forced with \\
% \author{Second Author}%
%  \email{Second.Author@institution.edu}
% \affiliation{%
%  Authors' institution and/or address\\
%  This line break forced with \textbackslash\textbackslash
% }%

\date{April 4th, 2022}

\begin{abstract}
Quantum many-body systems involving bosonic modes or gauge fields have infinite-dimensional local Hilbert spaces which must be truncated to
perform simulations of real-time dynamics on classical or quantum computers. To analyze errors resulting from truncation, we develop methods for bounding the rate of growth of local quantum numbers such as the occupation number of a mode at a lattice site, or the electric field at a lattice link. Our approach applies to various models of bosons interacting with spins or fermions such as the Hubbard-Holstein, Fr\"ohlich, and Dicke models, and also to both abelian and non-abelian gauge theories.
We show that if states in these models are truncated by imposing an upper limit $\Lambda$ on each local quantum number, and if the initial state has low local quantum numbers, then
a truncation error no worse than $\epsilon$ can be achieved by choosing $\Lambda$ to increase polylogarithmically with $\epsilon^{-1}$, an exponential improvement over previous bounds based on energy conservation. 
For the Hubbard-Holstein model, we numerically compute an upper bound on the value of $\Lambda$ that achieves accuracy $\epsilon$, finding significant improvement over previous estimates in various parameter regimes.
We also establish a criterion for truncating the Hamiltonian with a provable guarantee on the accuracy of time evolution.
Building on that result, we formulate quantum algorithms for dynamical simulation of lattice gauge theories and of models with bosonic modes; the gate complexity depends almost linearly on spacetime volume in the former case, and almost quadratically on time in the latter case.
We establish a lower bound showing that there are systems involving bosons for which this quadratic scaling with time cannot be improved.
By applying our results on the truncation error in time evolution, we also prove that spectrally isolated energy eigenstates can be approximated with error at most $\epsilon$ by truncating local quantum numbers at $\Lambda=\textrm{polylog}(\epsilon^{-1})$.

\end{abstract}

%\keywords{Suggested keywords}%Use showkeys class option if keyword
                              %display desired
\maketitle

%\tableofcontents

%%%%%%%%%%%%%%%%%%%%%%%%%%%%%%%%%%%%%%%%%%%%%%%%%%%%%%%%%%%%%%%%%%%%%%%%%%%%%%
\section{Introduction}

    Model physical systems are often formulated on spatial lattices, where the local Hilbert space residing on each site or link of the lattice is infinite dimensional. Examples include condensed-matter systems with bosonic degrees of freedom \cite{KlossReichmanTempelaar2019multiset,DelJavierEtAl2018tensor,GuoWeichselbaumEtAl2012critical,SchroderChin2016simulating,ReinhardMordovinaEtAl2019density,SandhoeferChan2016density,WoodsCramerPlenio2015simulating,MacridinEtAL2018digital,MacridinEtAl2018ElectronPhonon}, lattice gauge theories (LGTs) \cite{PichlerDalmonteEtAl2016real,KuhnZoharCiracEtAl2015non,MagnificoFelserEtAl2021lattice,Wiese2013ultracold,BanerjeeDalmonteEtAl2012atomic,ZoharCiracReznik2013cold,ZoharCiracReznik2012simulating,BanulsCarmenEtAl2017efficient,BenderZoharEtAl2018digital,ShawEtAl2020Schwinger,KanNam2021lattice,AbhishekRoggeroWiebe2013hybridized,Tran2021Faster,BanulsBlattEtAl2020simulating,KlcoSavageStryker20202,KlcoDumitrescuEtAl2018quantum,MuschikHeylMartinezEtAl2017u,ByrnesYamamoto2006simulating,Chakraborty2020DigitalSimulation,Davoudi2021Toward}, and other lattice field theories \cite{JordanLeePreskill2012quantum,JordanLeePreskill2014QuantumCompScatter}.
In such models, it is convenient to characterize the local state of the system in terms of a local quantum number, such as the occupation number of a bosonic mode at a particular site, or the electric field of a gauge variable at a particular link. When simulating a lattice model using a classical or quantum computer, it is typically necessary 
to truncate the local Hilbert space, replacing it by a finite-dimensional space in which the local quantum number has a maximum value. We call this maximum value the \emph{truncation threshold}, and denote it by $\Lambda$. 

Quantum states of the ideal untruncated model, if concentrated on relatively low values of the local quantum numbers, can be accurately approximated within the truncated model. However, in a dynamical simulation governed by a specified Hamiltonian, local quantum numbers may increase as the system evolves. Therefore, even if the initial state is well approximated within the truncated model, the approximation might no longer be accurate after evolution for a sufficiently long time. To ensure that the truncated model can accommodate the evolved state we need to bound the rate of growth of the local quantum numbers in the ideal model.

One way to obtain such a bound is to invoke conservation of the total energy. However, even though the total energy is conserved, the local quantum numbers are not, and we need to worry about whether energy which is initially distributed among many lattice sites might become focused on a much smaller number of sites, pushing the local quantum numbers at some sites beyond the capacity of the truncated local Hilbert space. Using conservation of energy, combined with the Chebyshev inequality to bound the probability of large deviations from mean values, one may infer that (for a fixed evolution time), 
quantum states can be truncated with an error at most $\epsilon$ using a threshold $\Lambda$ scaling polynomially with $\epsilon^{-1}$ \cite{JordanLeePreskill2012quantum,JordanLeePreskill2014QuantumCompScatter}. However, it is unclear whether this \emph{energy-based bound} can be used to truncate Hamiltonians with a provable accuracy guarantee when the local quantum numbers are not conserved under time evolution. We will further clarify this issue in Section~\ref{sec:framework}.

In this work, we develop a unified framework that shows, for a large class of models, this energy-based estimate of $\Lambda$ is far too pessimistic --- a truncation threshold scaling as $\polylog(\epsilon^{-1})$ actually suffices, as previously suggested in  \cite{MacridinEtAL2018digital,MacridinEtAl2018ElectronPhonon}. This model class includes systems involving bosons such as the Hubbard-Holstein model \cite{holstein1959studies}, the Fr{\"o}hlich model \cite{Frohlich1954electrons}, and the Dicke model \cite{Dicke1954coherence,Hepp1973superradiant}, as well as both U(1) and SU(2) LGTs (although our results do not apply to interacting scalar field theories such as $\phi^4$ theory).
For a system with many bosonic modes or gauge links, the truncation error scales with the total number of truncated local variables; therefore the exponentially improved dependence of $\Lambda$ on the precision also implies exponentially improved scaling of $\Lambda$ with the total system size. To illustrate the improvement, Figure~\ref{fig:m_compare_bounds} compares our truncation threshold with the energy-based estimate for the case of the Hubbard-Holstein model. See Section \ref{sec:comparison_with_the_energy_based_truncation_threshold} in the Appendix for a more detailed comparison.
We further establish a threshold for truncating the Hamiltonian such that the time evolution is provably accurate when the initial state is assumed to have low local quantum numbers.

Previous analytical studies of the truncation problem have been mostly restricted to simple models, while only limited small-scale numerical results are available for more complicated systems \cite{MacridinEtAL2018digital,klco2019digitization}.
For instance, Ref.\ \cite{Somma2015OneDimensional} proposed one such method for simulating a single quantum harmonic oscillator.
In Ref. \cite{MacridinEtAL2018digital} the authors argued via the Nyquist-Shannon sampling theorem that for a single bosonic mode with an occupation number cutoff, a grid discretization leads to exponentially small error. This argument was further extended in Ref.\ \cite{klco2019digitization} to the setting of scalar field theories.
The occupation number cutoff is justified by considering a forced Harmonic oscillator, for which analytic solution can be obtained. However, the model of a forced Harmonic oscillator does not cover all features of boson-fermion interaction, because by modeling the interaction between the bosonic mode and the rest of the system by a time-dependent force, this model ignores the entanglement between the two parts of the system.
To the best of our knowledge, the framework we develop provides the first exponential accuracy guarantee for truncating a wide range of unbounded quantum systems of physical interest.

The new truncation threshold enables us to more accurately analyze the computational cost of simulating dynamical evolution in the systems mentioned above.Although we will mainly consider applications of our result to quantum simulation, our techniques can be used to determine truncation threshold for classical simulation as well. Using standard estimates, the simulation cost typically depends on norms of local terms in the Hamiltonian, which are formally infinite in bosonic systems and LGTs. We can obtain a tighter estimate by considering evolution governed by a truncated Hamiltonian acting on the truncated Hilbert space. We focus specifically on digital quantum simulation of time evolution in the Hubbard-Holstein model and the U(1) and SU(2) LGTs. For the latter, by adapting the simulation algorithm of \cite{HaahHastingsKothariLow2021quantum} to our truncated Hamiltonian, we find a gate complexity that scales almost linearly with the spacetime volume. In doing so, we establish a constant Lieb-Robinson velocity for LGT models which is essential for the method of \cite{HaahHastingsKothariLow2021quantum} and may be of independent interest. We also observe that there are Hamiltonians in the class we consider such that the gate complexity of simulation for time $T$ is $\wt{\Omega}(T^2)$
\footnote{For functions of real variables $f,g$, we write \unexpanded{$f=\Or(g)$} if there exist \unexpanded{$c,t_0>0$} such that \unexpanded{$|f(T)|\leq c|g(T)|$} for all \unexpanded{$|T|\geq t_0$}. When there is no ambiguity, we will use \unexpanded{$f=\Or(g)$} to also represent that \unexpanded{$|f(\tau)|\leq c|g(\tau)|$} holds for all \unexpanded{$\tau\in\R$}. We then extend the definition of \unexpanded{$\mathcal{O}$} to functions of positive integers and multivariate functions. For example, we use \unexpanded{$f(N,T,1/\epsilon)=\Or((NT)^2/\epsilon)$} to mean that \unexpanded{$|f(N,T,1/\epsilon)|\leq c(N|T|)^2/\epsilon$} for some \unexpanded{$c,n_0,t_0,\epsilon_0>0$} and all \unexpanded{$|T|\geq t_0$}, \unexpanded{$0<\epsilon<\epsilon_0$}, and integers \unexpanded{$N\geq n_0$}. We write \unexpanded{$f=\Omega(g)$} if \unexpanded{$g=\Or(f)$} and we use \unexpanded{$\widetilde{\mathcal{O}}$} to suppress logarithmic factors in the asymptotic expression}, in stark contrast to the $\wt{\Or}(T)$ cost that applies when local Hilbert spaces are  finite dimensional \cite{BerryChildsCleveEtAl2014exponential,BerryChildsKothari2015,low2017optimal}. The cost can increase quadratically with $T$ in cases where local quantum numbers rise without bound as $T$ increases.

Although our main focus here is on the cost of dynamical simulation, our bounds on truncation error also have consequences for approximating eigenstates of the ideal untruncated Hamiltonian within the truncated Hilbert space.
For energy eigenvalues separated from the rest of the spectrum by a specified gap, we derive a ``tail bound'' showing that the corresponding eigenstates have very little support on large values of the local quantum numbers. It follows that, for the class of models we study, a truncation error less than $\epsilon$ can be achieved with truncation threshold $\Lambda= \polylog(\epsilon^{-1})$, in contrast with the more naive estimate $\Lambda=\poly(\epsilon^{-1})$ obtained using energy-based methods. 

In our analysis of the cost of simulating time evolution, we assume that in the initial state all local quantum numbers lie within a bounded range, and then derive bounds on how much the local quantum numbers can increase during time evolution. Our focus is somewhat related to previous work using conservation of energy or particle number to tighten the analysis of Trotter product formulas \cite{SahinogluSomma2020hamiltonian,SuHuangCampbell2021nearly}, but our techniques differ from previous works in that we need to deal with non-conserved quantities and unbounded local terms, the latter of which makes the main tools in \cite{SahinogluSomma2020hamiltonian}, namely Lemmas 1 and 2, no longer apply.
Our bounds also have potential applications to error mitigation in quantum simulations, as an unexpectedly large value of a local quantum number might flag an error that occurred during execution of the simulation algorithm. Similar proposals have been based on conserved quantities \cite{McArdle2019ErrorMitigated,BonetMonroigEtAl2018low,Sawaya2016Error,Huggins2021Virtual}, and here we note that the same idea can be applied to non-conserved quantities if we can rigorously bound the growth of those quantities during a specified time interval. 

Quantum simulations of non-abelian LGTs should eventually enable us to probe particle physics in regimes where classical simulations are intractable. Therefore the computational cost of such simulations is of fundamental interest. Though for the sake of concreteness we focus on SU(2) in this work, we anticipate that similar conclusions apply for other non-abelian gauge groups, including SU(3), the relevant case for quantum chromodynamics. We emphasize, though, that our results apply to LGTs where the lattice spacing is a fixed physical length; we have not studied the approach to the continuum limit or other formulations of quantum field theories without using lattices \cite{Liu2020}. We also emphasize that our analysis of the cost of simulating dynamics assumes that the initial state is well approximated by a state in which all local quantum numbers are less than the truncation threshold $\Lambda$; for appropriate initial states, for example when the initial state is a superposition of low-energy eigenstates, this assumption might be justified by our tail bounds. However, we do not consider the computational cost of the initial state preparation \cite{Moosavian2019Site}. Despite these important caveats, our findings strengthen the expectation that quantum computers will become powerful instruments for scientific discovery. 

\begin{figure}
    \centering
    \subfloat[]{
    \includegraphics[width=0.4\textwidth]{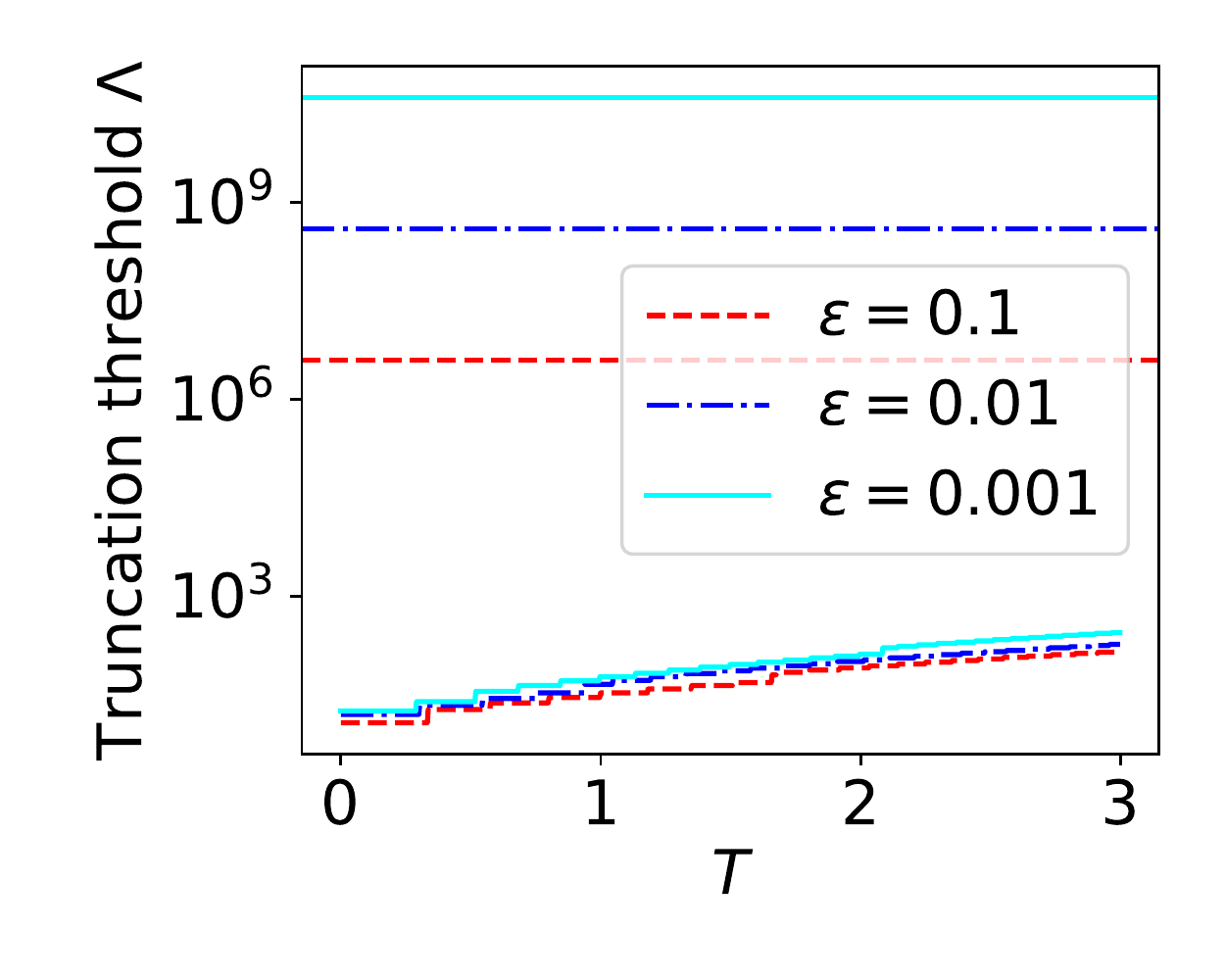}
    }
    \hfill
    \subfloat[]{
    \includegraphics[width=0.4\textwidth]{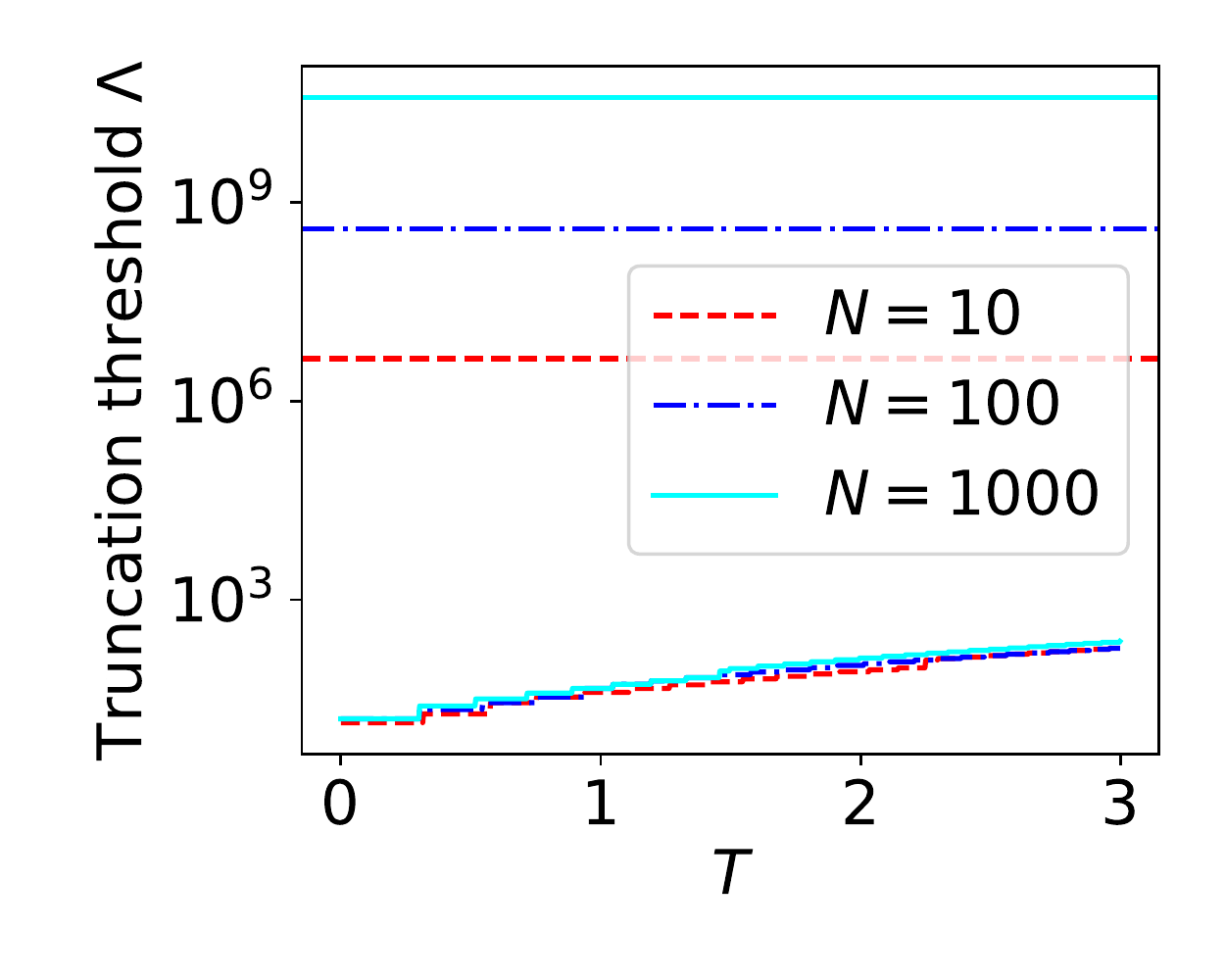}
    }
   \caption{Truncation threshold $\Lambda$ that achieves precision $\epsilon$ for the Hubbard-Holstein model with $N$ lattice sites and evolution time $T$.  (a) $\Lambda$ as a function of $T$ for $N=100$ and three different values of $\epsilon$. (b) $\Lambda$ as a function of $T$ for $\epsilon = .01$ and three different values of $N$. \changed{The Hamiltonian is given in \eqref{eq:m_ham_hubbard_holstein}.}
  Model parameters are from \cite{KlossReichmanTempelaar2019multiset}\changed{, with $\omega_0=1$ and $g=0.5$}. The horizonal lines are the time-independent truncation thresholds obtained using an energy-based method as in \cite{JordanLeePreskill2012quantum}; the other curves are values of $\Lambda$ obtained in this work. See Section \ref{sec:compare_thresholds_Hubbard_Holstein} in the Appendix for details.}
    \label{fig:m_compare_bounds}
\end{figure}

\section{Framework}
\label{sec:framework}
We begin by setting up our framework and concisely stating our results, to be proven in subsequent sections. \changed{For a more formal introduction of the framework see Sections \ref{sec:motivating_examples} and \ref{sec:common_structures}.} To illustrate our framework in a concrete setting, we first consider the Hubbard-Holstein model \cite{holstein1959studies}, 
a model of electron-phonon interactions. 
The model is defined on a $D$-dimensional lattice with linear size $L$ and $L^D=N$ sites.
Each site in the lattice, indexed by $x$, contains two fermionic modes (spin up and down) and a bosonic mode. 
The Hamiltonian is
\begin{equation}
    \label{eq:m_ham_hubbard_holstein}
    H = H_f + H_{fb} + H_b,
\end{equation}
where $H_{f}$ is the Hamiltonian of the Fermi-Hubbard model \cite{Hubbard1963electron} acting on only the fermionic modes,
and
\begin{equation}
    % \label{eq:ham_hubbard_holstein}
    H_{fb} =  g\sum_{x=1}^N (b_x^{\dagger}+b_x)(n_{x,\uparrow}+n_{x,\downarrow}-1), \quad 
    H_b = \omega_0 \sum_{x=1}^N b_{x}^{\dagger}b_x,
\end{equation}
are the boson-fermion coupling and purely bosonic parts of the Hamiltonian respectively. Here, $b_x$ is the bosonic annihilation operator on site $x$, and $n_{x,\sigma}$ is the fermionic number operator for site $x$ and spin $\sigma$.

In this setting, the local Hilbert space of each bosonic mode is infinite-dimensional. In order to have a finite-dimensional local Hilbert space, a natural idea is to 
impose an upper limit $\Lambda$ on the occupation number $b_x^\dagger b_x$ (``number of particles'') in each bosonic mode. Then each bosonic local Hilbert space has dimension $\Lambda + 1$, and is spanned by the particle-number eigenstates $\{|\lambda\rangle: \lambda = 0, 1,2, \dots, \Lambda\}$. 
\changed{This imposed upper limit results in a truncation. For a fixed site $x$, we define the projection operator $\Pi^{(x)}_{[0,\Lambda]}=\sum_{\lambda=0}^{\Lambda}\ket{\lambda}\bra{\lambda}_x\otimes I$, where $I$ is the identity operator acting on the rest of the system. Imposing the upper limit truncates a quantum state $\ket{\phi}$ to be $\Pi^{(x)}_{[0,\Lambda]}\ket{\phi}$, and truncates the Hamiltonian $H$ to be $\Pi^{(x)}_{[0,\Lambda]}H\Pi^{(x)}_{[0,\Lambda]}$.}
One may ask how large $\Lambda$ should be for the resulting truncation error to be smaller than $\epsilon$.
There is some ambiguity regarding what ``truncation error'' means, and we will refine this question later. 

A similar situation is encountered in LGTs, where we consider the Hamiltonian formulation proposed in \cite{kogut1975hamiltonian}.
\changed{For a more detailed introduction to the LGTs we consider, see Section \ref{sec:motivating_examples}.}
We have a $D$-dimensional lattice 
consisting of $N$ total sites and $\Or(N)$ gauge links. 
Each gauge link has an infinite-dimensional local Hilbert space, and the Hamiltonian contains unbounded operators associated with each link. We no longer have a natural notion of particle number, but the truncation of the link Hilbert space can still be performed according to what we call the \textit{local quantum number}. We focus on two cases: the U(1) and SU(2) LGTs. For the U(1) case, we choose the local quantum number to be the integer-valued electric field. We retain only the part of local Hilbert space with electric field value in the interval $[-\Lambda,\Lambda]$; hence the truncated local Hilbert space at each link is $2\Lambda+1$ dimensional. \changed{More precisely, for a fixed gauge link $\nu$, we define the projection operator $\Pi^{(\nu)}_{[0,\Lambda]}=\sum_{\lambda=-\Lambda}^{\Lambda}\ket{\lambda}\bra{\lambda}_{\nu}\otimes I$, where $\ket{\lambda}$'s are elements of the electric eigenbasis. Similar to the previous example of the Hubbard-Holstein model, after truncation a quantum state $\ket{\phi}$ becomes $\Pi^{(\nu)}_{[-\Lambda,\Lambda]}\ket{\phi}$, and the Hamiltonian $H$ becomes $\Pi^{(\nu)}_{[-\Lambda,\Lambda]}H\Pi^{(\nu)}_{[-\Lambda,\Lambda]}$.} For the SU(2) case, we choose the local quantum number to be 2 times the total angular momentum (the multiplication by 2 makes the local quantum number an integer). If we retain only the part of the link Hilbert space with total angular momentum no larger than $\Lambda/2$, then the link Hilbert space has dimension $(\Lambda+1)(\Lambda+2)(2\Lambda + 3)/6$, and is spanned by the angular momentum eigenstates $\ket{jmm'}$ where $j$ is a half integer less than or equal to $\Lambda/2$ and $-j\leq m,m'\leq j$. Again one may ask how large $\Lambda$ should be for the resulting truncation error to be small.

When analyzing time evolution, this question can be refined into two different but related questions.

\vspace{1em}
\textbf{Question 1 (Truncating an evolved quantum state):} Consider an initial state such that at some particular site or link the local quantum number is no larger than $\Lambda_0$. After the state evolves forward for time $T$, how should the truncation threshold $\Lambda$ be chosen for that site or link so that the resulting error is at most $\epsilon$? 
We show that 
\begin{equation}
    \label{eq:m_truncation_threshold_scaling_state}
    \Lambda^{1-r} = \Lambda_0^{1-r} + \wt{\Or}((\chi T+1)\polylog(\epsilon^{-1}))
\end{equation}
suffices, where $r=1/2$ for bosons and $r=0$ for LGTs, and $\chi$ is a constant that only depends on the model parameters but not on the system size or on $T$. 
If we want to truncate every bosonic mode or gauge link in the model, then, to account for the accumulation of error, $\epsilon^{-1}$ in \eqref{eq:m_truncation_threshold_scaling_state} is replaced by $N\epsilon^{-1}$, where $N$ is the system size. 
\vspace{1em}

A simple example shows that this scaling of the truncation threshold is optimal in certain cases, such as quadratic scaling in time for bosons $(r=1/2)$. Suppose $H=b+b^{\dagger}$ where $b$ is the annihilation operator of a bosonic mode. Then $e^{-iTH}\ket{0}$, where $\ket{0}$ is the vacuum state, is a coherent state such that the  particle number distribution is Poissonian with mean $T^2$. Because the Poisson distribution concentrates around its mean, a truncation threshold that achieves constant precision must scale like $\Omega(T^2)$, which matches \eqref{eq:m_truncation_threshold_scaling_state} for $r=1/2$.

It is instructive to compare our approach with the method based on energy conservation described in \cite{JordanLeePreskill2012quantum,JordanLeePreskill2014QuantumCompScatter}. That method yields a truncation threshold for a single site with a polynomial dependence on the inverse accuracy $\epsilon^{-1}$. To truncate a system of $\wt{\mathcal{O}}(N)$ sites, we scale down $\epsilon$ by a factor of $N$, resulting in a threshold $\Lambda$ scaling polynomially with $N\epsilon^{-1}$. In contrast, our bound has only polylogarithmic dependence on $N\epsilon^{-1}$, an exponential improvement compared to the truncation threshold obtained using the energy-based method. 
Importantly, this advantage holds not only in the asymptotic regime, but also when the constant prefactors are incorporated. We numerically compare our bound with the energy-based bound for 
the Hubbard-Holstein model, observing a significantly better estimate in various parameter regimes. We illustrate this comparison in Figure~\ref{fig:m_compare_bounds} and discuss it in more detail in Section \ref{sec:comparison_with_the_energy_based_truncation_threshold} in the Appendix. 

\vspace{1em}
\textbf{Question 2 (Truncating the Hamiltonian):}  Consider an initial state such that the local quantum number is no larger than $\Lambda_0$ at all sites or at all links, and suppose the state evolves forward for time $T$ using a truncated Hamiltonian $\wt{H}$ rather than the ideal untruncated Hamiltonian $H$ (we will define $\wt{H}$ later).  How should the truncation threshold $\Lambda$ be chosen so that the truncated evolved state matches the ideal evolved state up to time $T$ with an error at most $\epsilon$?  
We show that
\begin{equation}
    \label{eq:m_truncation_threshold_scaling_hamiltonian}
    \Lambda^{1-r} = \Lambda_0^{1-r} + \wt{\Or}((\chi T+1)\polylog(N\epsilon^{-1}))
\end{equation}
suffices, where $r=1/2$ for bosons and $r=0$ for LGTs, $N$ is the system size, and $\chi$ is again a constant that does not depend on $N$ or on $T$.
\vspace{1em}

Our above two questions both concern the truncation of the local quantum number, albeit from different perspectives: the first focuses on evolved quantum states while the second focuses on Hamiltonians. In fact, a threshold for truncating the Hamiltonian can be directly used to truncate evolved quantum states, although some extra efforts are required to handle the converse. 
The truncation of an evolved quantum state in Question 1 is only for some fixed time $T$, but when we perform truncation on the Hamiltonian in Question 2, the evolved quantum state will never have a local quantum number beyond $[-\Lambda,\Lambda]$ throughout the evolution up to time $T$. In this sense, our second result is stronger than the first one.

\changed{It is worth noting that while the energy-based method in Refs.\cite{JordanLeePreskill2012quantum} is enough to establish a bound to address Question 1, it cannot be used to address Question 2.
% with the same level of rigor. 
This is because the state truncation error does not decay fast enough as we increase the truncation threshold, and as a result the energy-based bound is not enough for the derivation in Section \ref{sec:truncating_the_hamiltonian}.}

Before stating our results for a more general class of Hamiltonians, we first introduce some notation.
For a bosonic mode or gauge link which we denote by $\nu$, we denote by $\Pi^{(\nu)}_S$ the projection operator imposing the condition that the local quantum number takes values from the set $S$. We also denote $\Pi^{\mathrm{all}}_S=\prod_{\nu} \Pi^{(\nu)}_S$; this is the projection operator imposing the condition on all bosonic modes or gauge links. For any projection operator $\Pi$, we write its complement as $\overline{\Pi}=I-\Pi$. The truncated Hamiltonian mentioned in Question 2 is $\wt{H}=\Pi^{\mathrm{all}}_S H \Pi^{\mathrm{all}}_S$, where $H$ is the untruncated Hamiltonian, and $S$ is the set of local quantum numbers less than or equal to the truncation threshold.

Using this notation we can readily pinpoint the common structure of the Hamiltonians in the Hubbard-Holstein model and the U(1) and SU(2) LGTs. In all three examples, although the Hamiltonian contains local terms with unbounded norm, each of these terms changes the local quantum number at only a single site or a single link; there are no unbounded terms that allow the local quantum number to propagate from site to site or from link to link. For each site or link, denoted by $\nu$, we may write the full Hamiltonian $H$ of the model as
\begin{equation}
\label{eq:Ham-form-WR}
H=H_W^{(\nu)}+H_R^{(\nu)},
\end{equation}
where $H_W^{(\nu)}$ is the part of the Hamiltonian that can change the value of the local quantum number at $\nu$, and $H_R^{(\nu)}$ contains all the terms in the Hamiltonian that preserve the value of the local quantum number at $\nu$.
These two parts satisfy the conditions
\begin{subequations}
\label{eq:m_conditions_ham_general}
%\begin{equation}
\begin{align}
% \begin{align}
    &\Pi^{(\nu)}_{\lambda}H_W^{(\nu)} \Pi^{(\nu)}_{\lambda'}=0,\ \text{if }|\lambda-\lambda'|>1, \label{eq:m_conditions_ham_general_1}\\
    &\|H_W^{(\nu)}\Pi^{(\nu)}_{[-\Lambda,\Lambda]}\|\leq \chi(\Lambda+1)^{r}, \label{eq:m_conditions_ham_general_2}\\
    &[H_R^{(\nu)},\Pi^{(\nu)}_{\lambda}]=0. \label{eq:m_conditions_ham_general_3}
% \end{align}
\end{align}
%\end{equation}
\end{subequations}
Here $\Pi^{(\nu)}_{\lambda}$ projects onto the eigenspace with local quantum number $\lambda$, $\chi$ and $0\leq r<1$ are parameters that depend on the model, and $\|\cdot\|$ is the spectral norm.
(The notation $\Pi^{(\nu)}_{[-\Lambda,\Lambda]}$ is appropriate for the U(1) gauge theory, where the electric field can take either positive or negative integer values, but we will use this same notation for the other models as well, even though in those models the local quantum number takes only nonnegative values.)
These three conditions can be interpreted as follows: the first condition requires $H_W^{(\nu)}$ to change the local quantum number by at most $\pm 1$. The second condition requires that the rate at which the maximal local quantum number $\Lambda$ changes is sublinear in $\Lambda$.
The third condition requires $H_R^{(\nu)}$ to preserve the local quantum number.
See Section \ref{sec:common_structures} in the Appendix for a more detailed explanation of this framework.

Let us verify that the Hubbard-Holstein Hamiltonian in \eqref{eq:m_ham_hubbard_holstein} fits the general framework of \eqref{eq:Ham-form-WR} and \eqref{eq:m_conditions_ham_general}. The bosonic mode appears only in on-site terms. Choosing $H_W^{(x)} = g(b_x^{\dagger}+b_x)(n_{x,\uparrow}+n_{x,\downarrow}-1)$ and 
$H_R^{(x)} = \sum_{x'}\omega_0 b_{x'}^\dagger b_{x'} + \sum_{x'\neq x}g(b_{x'}^{\dagger}+b_{x'})(n_{x',\uparrow}+n_{x',\downarrow}-1)+H_f$, 
we see that $H_W^{(x)}$ changes the local bosonic particle number by at most $\pm 1$, and that $H_R^{(x)}$ preserves the local bosonic particle number. Moreover, using $\Pi^{(x)}_{\lambda}$ to denote the projector onto the subspace with $\lambda$ bosonic particles on site $x$, we see that $\|(b_x^{\dagger}+b_x)\Pi^{(x)}_{[0,\Lambda]}\|\leq 2\sqrt{\Lambda+1}$, which implies \eqref{eq:m_conditions_ham_general_2} is satisfied with $\chi=2g$ and $r=1/2$. In Sections \ref{sec:motivating_examples} and \ref{sec:other_examples} in the Appendix, we explain how other examples fit this framework, including U(1) and SU(2) LGTs,  the spin-fermion coupling in the Fr{\"o}hlich model \cite{Frohlich1954electrons}, and spin-boson coupling in the Dicke model. 
In Section~\ref{sec:m_hilbert_space_truncation_in_time_evolution} we show that for Hamiltonians with the structure indicated in \eqref{eq:Ham-form-WR}, \eqref{eq:m_conditions_ham_general}, local quantum numbers may be truncated as specified by the answer \eqref{eq:m_truncation_threshold_scaling_state} to Question 1 and the answer \eqref{eq:m_truncation_threshold_scaling_hamiltonian} to Question 2.

The linear dependence on the evolution time $T$ in \eqref{eq:m_truncation_threshold_scaling_state}, \eqref{eq:m_truncation_threshold_scaling_hamiltonian} has a simple interpretation. Specifically, for the case of a bosonic mode ($r=1/2$) where $H_W^{(\nu)}$ is linear in creation and annihilation operators, the conditions \eqref{eq:m_conditions_ham_general_1}, \eqref{eq:m_conditions_ham_general_2} impose that in time $T$ the position of the mode in phase space is translated by $\mathcal{O}(T)$. Since the particle number scales like the square of the displacement from the origin of phase space, a truncation threshold growing quadratically with $T$, as specified in \eqref{eq:m_truncation_threshold_scaling_state}, \eqref{eq:m_truncation_threshold_scaling_hamiltonian}, suffices to approximate the translated state accurately.

Given the scaling of the truncation threshold expressed in \eqref{eq:m_truncation_threshold_scaling_hamiltonian}, we can accurately approximate time evolution using the truncated Hamiltonian $\wt{H}$, in which all local terms in the Hamiltonian have bounded norm. In Section~\ref{sec:m_application_to_ham_sim}, we leverage this observation to analyze the cost of simulating time evolution on a digital quantum computer for the Hamiltonians characterized above. 
In particular, we develop algorithms for simulating the U(1) and SU(2) LGTs that achieve an almost linear dependence on the spacetime volume, a substantial improvement over previous estimates of the gate complexity \cite{ShawEtAl2020Schwinger,KanNam2021lattice,AbhishekRoggeroWiebe2013hybridized}. We also analyze the cost of simulating the Hubbard-Holstein model in Section~\ref{sec:m_application_to_ham_sim}. In Section~\ref{sec:m_eigenstate_tail_bound}, by applying these results on time evolution, we establish that spectrally isolated energy eigenstates can be approximated using a local quantum number truncation threshold scaling polylogrithmically with the allowed error.

\section{Hilbert space truncation in time evolution}
\label{sec:m_hilbert_space_truncation_in_time_evolution}

We now show how the truncation threshold scaling relations \eqref{eq:m_truncation_threshold_scaling_state} and \eqref{eq:m_truncation_threshold_scaling_hamiltonian} are obtained. Recall that in our two questions about the truncation threshold, Question~1 concerns truncating the quantum state obtained from exact time evolution for time $T$. Using the notations introduced earlier, we can clarify this question and our result. We define a quantity $\|\overline{\Pi}^{(\nu)}_{[-\Lambda,\Lambda]}e^{-iTH}\Pi^{(\nu)}_{[-\Lambda_0,\Lambda_0]}\|$ which we call \textit{leakage}. If we start with an initial state $\ket{\psi_0}$ with the local quantum number on $\nu$ taking a value in $[-\Lambda_0,\Lambda_0]$, and denote the state at time $t$ by $\ket{\psi(t)}$, then the truncation error $\|\overline{\Pi}^{(\nu)}_{[-\Lambda,\Lambda]}\ket{\psi(T)}\|$ is upper bounded by the leakage. Therefore, to ensure that the truncation error is at most $\epsilon$, we only need to keep the leakage below $\epsilon$.

As mentioned before, we assume $H$ has the structure \eqref{eq:Ham-form-WR} with $H^{(\nu)}_W$ and $H^{(\nu)}_R$ satisfying \eqref{eq:m_conditions_ham_general}.
First we prove a leakage bound that holds for relatively short evolution time governed by such $H$, and then establish \eqref{eq:m_truncation_threshold_scaling_state} by extending the short-time leakage bound to longer times. We view the time evolution in the interaction picture, and consider the evolution of $\ket{\psi_I(t)}=e^{itH^{(\nu)}_R}\ket{\psi(t)}$. Because we assume $H^{(\nu)}_R$ preserves the local quantum number, $\ket{\psi_I(t)}$ and $\ket{\psi(t)}$ induce the same local quantum number distribution. In the interaction picture, $\ket{\psi_I(t)}$ evolves with a time-dependent Hamiltonian $H^{(\nu)}_{W}(t)=e^{itH^{(\nu)}_R}H^{(\nu)}_{W}e^{-itH^{(\nu)}_R}$. We then apply the Dyson series expansion to the unitary operator generated by $H^{(\nu)}_{W}(t)$. In the proof of Lemma \ref{lem:short_time_bound} in the Appendix, we show that if $0\leq T\leq 1/(2\chi(\Lambda_0+1)^r)$, the truncated Dyson series with $\Delta$ terms approximates the exact evolution up to an error $e^{-\Omega(\Delta)}$. Moreover, such a truncated Dyson series can change the local quantum number by at most $\pm(\Delta-1)$ due to 
\eqref{eq:m_conditions_ham_general_1}. Therefore we have the short-time leakage bound
\begin{equation}
\label{eq:m_short_time_bound}
    \|\overline{\Pi}^{(\nu)}_{(-\Lambda_0-\Delta,\Lambda_0+\Delta)}e^{-iTH}\Pi^{(\nu)}_{[-\Lambda_0,\Lambda_0]}\|\leq e^{-\Omega(\Delta)}.
\end{equation}

Using this short-time leakage bound, we can derive the long-time bound in \eqref{eq:m_truncation_threshold_scaling_state}. Specifically, for any choice of $\Lambda_0<\Lambda_1<\cdots<\Lambda_J=\Lambda$, $0=T_0<T_1<\cdots<T_J=T$, the total leakage is at most the sum of $J$ short-time leakages (see Lemma \ref{lem:decompose_time_evolution} in the Appendix).
\begin{equation}
\label{eq:m_error_accumulation}
\begin{aligned}
    &\|\overline{\Pi}^{(\nu)}_{[-\Lambda,\Lambda]} e^{-iTH} \Pi^{(\nu)}_{[-\Lambda_0,\Lambda_0]}\| \\
    &\leq \sum_{j=1}^J \|\overline{\Pi}^{(\nu)}_{[-\Lambda_j,\Lambda_j]} e^{-i(T_j-T_{j-1})H}\Pi^{(\nu)}_{[-\Lambda_{j-1},\Lambda_{j-1}]}\|.
\end{aligned}
\end{equation}
We then carefully choose $T_j$'s and apply the short-time leakage bound to each segment $[T_{j-1},T_j]$, which gives an upper bound on the right-hand side of \eqref{eq:m_error_accumulation}.
Since the local quantum number can potentially change as the system evolves, we define the length of time steps adaptively based on the instantaneous quantum number to reach the same target accuracy.
Specifically, $T_j$ and $\Lambda_j$ are chosen to satisfy $0\leq T_j-T_{j-1}\leq 1/(2\chi(\Lambda_{j-1}+1)^r)$.
This establishes the scaling in \eqref{eq:m_truncation_threshold_scaling_state} and provides an answer to Question~1.
We summarize our result below and leave details of the proof to Section \ref{sec:leakage_time_evo} in the Appendix.

\vspace{1em}
\noindent\textbf{Theorem} (State truncation (Theorem \ref{thm:long_time_bound2} in the Appendix)).
Let $H$ be a Hamiltonian such that $H=H_W^{(\nu)}+H_R^{(\nu)}$ satisfies \eqref{eq:m_conditions_ham_general} with parameters $\chi$ and $r$ for a fixed mode or link $\nu$. 
For any $t\geq 0$ and integers $\Lambda\geq \Lambda_0\geq0$,
\begin{equation}
\begin{aligned}
    &\left\|\overline{\Pi}_{[-\Lambda,\Lambda]}^{(\nu)}e^{-itH}\Pi_{[-\Lambda_0,\Lambda_0]}^{(\nu)}\right\| \\
    &\leq \poly(\chi t,\Lambda_0,\Lambda)\exp\left(-\Omega\left(\frac{\Lambda^{1-r}-\Lambda_0^{1-r}}{\chi t+1}\right)\right).
\end{aligned}
\end{equation}
\vspace{1em}

We now set out to answer Question~2. First we clarify the question using our notation for projection operators. Here we consider replacing $H$ by a truncated Hamiltonian $\wt{H}=\Pi^{\mathrm{all}}_{[-\Lambda,\Lambda]}H\Pi^{\mathrm{all}}_{[-\Lambda,\Lambda]}$, where 
\changed{
\[
\Pi^{\mathrm{all}}_{[-\Lambda,\Lambda]} = \prod_{\nu}\Pi^{(\nu)}_{[-\Lambda,\Lambda]}
\]
applies truncation on all sites or links.}
the truncation threshold $\Lambda$ is chosen large enough so that evolution governed by $\wt{H}$ is a good approximation to the exact evolution.  
The approximation error is upper bounded by 
\begin{equation}
    \max_{0\leq t\leq T}\|(e^{-itH}-e^{-it\wt{H}})\Pi^{\mathrm{all}}_{[-\Lambda_0,\Lambda_0]}\|.
\end{equation}
Therefore our goal is to choose $\Lambda$ to ensure that this error is at most $\epsilon$. 
This is accomplished by the following theorem which we establish in Section \ref{sec:truncating_the_hamiltonian} in the Appendix and preview here.

\vspace{1em}
\noindent\textbf{Theorem} (Hamiltonian truncation (Theorem \ref{thm:truncate_ham} in the Appendix)).
Let $H$ be a Hamiltonian with $\Or(N)$ bosonic modes or gauge links, such that $H=H_W^{(\nu)}+H_R^{(\nu)}$ satisfies \eqref{eq:m_conditions_ham_general} with parameters $\chi$ and $r$ for every mode or link $\nu$.
Assume that all projection operators $\Pi^{(\nu)}_\lambda$ commute with each other.
For any integers $\Lambda\geq\Lambda_0\geq 0$, define
$\wt{H} = \Pi^{\mathrm{all}}_{[-\Lambda,\Lambda]}H \Pi^{\mathrm{all}}_{[-\Lambda,\Lambda]}$
and assume $\|[\wt{H},H]\|=\poly(N,\Lambda)$.
Then for any $t>0$,
\begin{equation}
\begin{aligned}
    &\left\|(e^{-it\wt{H}}-e^{-itH})\Pi^{\mathrm{all}}_{[-\Lambda_0,\Lambda_0]}\right\| \\
    &\leq \poly(\chi t,\Lambda_0,\Lambda,N)\exp\left(-\Omega\left(\frac{\Lambda^{1-r}-\Lambda_0^{1-r}}{\chi t+1}\right)\right).
\end{aligned}
\end{equation}
\vspace{1em}

We now briefly explain how we upper bound the Hamiltonian truncation error $\|(e^{-itH}-e^{-it\wt{H}})\Pi^{\mathrm{all}}_{[-\Lambda_0,\Lambda_0]}\|$. 
Since this is equal to $\|(I-e^{itH}e^{-it\wt{H}})\Pi^{\mathrm{all}}_{[-\Lambda_0,\Lambda_0]}\|$, we instead show that  $e^{itH}e^{-it\wt{H}}\Pi^{\mathrm{all}}_{[-\Lambda_0,\Lambda_0]}$ is close to $\Pi^{\mathrm{all}}_{[-\Lambda_0,\Lambda_0]}$ for sufficiently large $\Lambda$. We prove this by expanding the target quantity using the formula for Trotter error \cite[Eq.~(3.4)]{suzuki1985decomposition}. We then invoke the leakage bound Theorem \ref{thm:long_time_bound2} in the Appendix, as well as the fact that the truncated Hamiltonian and the original Hamiltonian act identically on a state with local quantum numbers all in $[-\Lambda+1,\Lambda-1]$, i.e. $(H-\wt{H})\Pi^{\mathrm{all}}_{[-\Lambda+1,\Lambda-1]}=0$. The resulting bound depends on the commutator norm $\mathcal{A}(\Lambda)=\|[H,\wt{H}]\|$. If this quantity scales polynomially with the system size $N$ and the truncation threshold $\Lambda$, we can then establish the desired scaling in \eqref{eq:m_truncation_threshold_scaling_hamiltonian} and answer our Question 2. This is indeed true for the Hubbard-Holstein model and the LGTs, which can be seen as follows. We have
$
    [H,\wt{H}] = (H\Pi^{\mathrm{all}}_{[-\Lambda,\Lambda]})^2-(\Pi^{\mathrm{all}}_{[-\Lambda,\Lambda]}H)^2,
$
where each local term in $H\Pi^{\mathrm{all}}_{[-\Lambda,\Lambda]}$ can be bounded by a polynomial of $\Lambda$ (linear for Hubbard-Holstein model and quadratic for LGTs) and there are $\Or(N)$ of such local terms. Thus we have $\|[H,\wt{H}]\|=\Or(N^2\poly(\Lambda))$ in all three examples. 
We discuss the Hamiltonian truncation in more detail in Section \ref{sec:truncating_the_hamiltonian} in the Appendix.

\section{Application to Hamiltonian simulation}
\label{sec:m_application_to_ham_sim}

Our main results on the truncation of unbounded Hamiltonians allow us to simulate such systems more efficiently with a provable accuracy guarantee. For concretenesss, we consider the problem of digital Hamiltonian simulation, wherein the dynamics of a quantum system are approximated on a quantum computer by elementary gates, and the cost of simulation is determined by the gate complexity. While the majority of the past work on Hamiltonian simulation has focused on quantum systems with finite-dimensional local Hilbert spaces, there are also systems of physical interest whose local Hilbert spaces are infinite dimensional. In such cases, it is typically necessary to perform truncation, so that quantum states can be represented and processed on a digital quantum computer.

In Section~\ref{sec:m_hilbert_space_truncation_in_time_evolution} we established that time evolution governed by the Hamiltonian $H$ can be accurately approximated by time evolution governed by the truncated Hamiltonian $\wt{H}$, if we choose the truncation threshold according to \eqref{eq:m_truncation_threshold_scaling_hamiltonian}. Here we propose Hamiltonian simulation approaches that take advantage of our results and discuss the implications for expected costs on a quantum computer.  Unlike $H$, $\wt{H}$ acts nontrivally on only a finite-dimensional subspace of the infinite-dimensional Hilbert space at each site or link. Furthermore, because each unbounded local term in $H$ is truncated separately, the truncation does not affect the geometric locality of the Hamiltonian. Therefore, 
to simulate $e^{-itH}$ we can instead simulate $e^{-it\wt{H}}$, which can be done on a quantum computer using existing simulation techniques for local Hamiltonians \cite{Lloyd1996universal,BerryAhokasCleveSanders2006,BerryChildsKothari2015,low2019hamiltonian,low2017optimal,LowWiebe2018interaction,HaahHastingsKothariLow2021quantum}. 
In what follows, we consider simulations of the U(1), SU(2) LGTs, as well as the Hubbard-Holstein model, although the quantum algorithms we present can in principle be extended to simulate other gauge theories and bosonic systems within our framework.

\vspace{1em}
\noindent\textbf{Simulating lattice gauge theories with near-linear spacetime volume scaling.}
We propose an algorithm to simulate the time evolution of the U(1) and SU(2) LGTs in $D$ spatial dimensions; Hamiltonians of these models are described in Eq.~\eqref{eq:LGT_ham} in the Appendix. The goal is to simulate a lattice with $N$ sites for time $T$ with total error at most $\epsilon$. 
Our algorithm combines the Haah-Hastings-Kothari-Low (HHKL) decomposition \cite{HaahHastingsKothariLow2021quantum}, which provides a nearly optimal approach for geometrically local Hamiltonians, with the interaction-picture simulation method \cite{LowWiebe2018interaction}, which gives further improved scaling with the truncation threshold. 
We show that the simulation can be done with gate complexity $\wt{\Or}(NT\polylog(\Lambda_0 \epsilon^{-1}))$, 
assuming that in the initial state the local quantum number (electric field value for U(1) or total angular momentum for SU(2)) on each gauge link is in the interval $[-\Lambda_0,\Lambda_0]$. Thus we achieve an almost linear dependence of the gate complexity on the spacetime volume $NT$. We briefly outline the algorithm here; further details are presented in Section \ref{sec:sim_LGT} in the Appendix. 

We first use \cite[Lemma~6]{HaahHastingsKothariLow2021quantum} to decompose the time evolution of the entire system due to $\wt{H}$ into time evolution of blocks. Each block, denoted by $\mathcal{B}$, has size $\ell^D=\Or(\polylog(NT\epsilon^{-1}))$ and we only need to implement its evolution for time $\tau=\Or(1)$. There are $\Or(N)$ such blocks and the entire time evolution is divided into $\Or(T)$ segments. We note that \cite[Lemma~6]{HaahHastingsKothariLow2021quantum} requires a constant Lieb-Robinson velocity, which was guaranteed by \cite[Lemma~5]{HaahHastingsKothariLow2021quantum} since all terms in their Hamiltonian were geometrically local with norm upper bounded by a constant. In our case, however, there are terms in the truncated Hamiltonian with norm $\textrm{poly}(\Lambda)$.
Fortunately, these terms with $\Lambda$-dependent norm act on either a single lattice site (in the models with bosonic modes)  or a single gauge link (in LGTs).
We show in Section \ref{sec:LR_veclocity_on_site} in the Appendix that for Hamiltonians of this form, the Lieb-Robinson velocity is indeed bounded above by a $\Lambda$-independent constant as \cite{HaahHastingsKothariLow2021quantum} requires.

When simulating each block $\mathcal{B}$ we use the interaction picture Hamiltonian simulation technique suggested in \cite{Somma2015OneDimensional} and developed in \cite{LowWiebe2018interaction}, and the gate complexity for simulation up to time $\tau=\mathcal{O}(1)$ is $\Or(\polylog(\Lambda NT\epsilon^{-1}))$. For $\Lambda$ we use the scaling \eqref{eq:m_truncation_threshold_scaling_hamiltonian}. There are in total $\Or(NT)$ such simulations that need to be performed, leading to a total gate complexity of $\wt{\Or}(NT\polylog(\Lambda_0 \epsilon^{-1}))$. The interaction picture is useful because it allows us to express the time evolution operator as a product of two operators. One factor in this product is the evolution arising from the terms in the truncated Hamiltonian $\wt{H}$ which have $\Lambda$-dependent norms, the terms involving the electric field at each link. This evolution can be ``fast-forwarded'' \cite{Atia2017FastForwarding,Gu2021FastForwarding} because the Hamiltonian is diagonal in a natural basis, and the evolution operator is just the tensor product of simple unitary operators, each acting on a single link. The other factor in the product is the interaction-picture evolution operator generated by the time-dependent interaction-picture Hamiltonian, in which each term has $\Lambda$-independent norm because the evolution induced by the electric field has been ``rotated away.'' As a result, the cost of simulating the evolution of a block $\mathcal{B}$ is polylogarithmic in $\Lambda$, and the cost of simulating evolution of $N$ sites for time $T$ is nearly linear in the spacetime volume $NT$. 

Previous work on the quantum simulation of LGTs such as \cite{ShawEtAl2020Schwinger,KanNam2021lattice,AbhishekRoggeroWiebe2013hybridized} does not explain how to choose the truncation threshold $\Lambda$ to perform simulation with a provable accuracy. 
While this issue can be remedied by using our Hamiltonian truncation threshold \eqref{eq:m_truncation_threshold_scaling_hamiltonian}, our result still substantially improves over the previous results $\wt{\Or}(N^{3/2}T^{5/2})$ from \cite{ShawEtAl2020Schwinger,KanNam2021lattice} and $\wt{\Or}(N^2T^2)$ from \cite{AbhishekRoggeroWiebe2013hybridized}.

\vspace{1em}
\noindent\textbf{Simulating bosonic systems and an $\wt{\Omega}(T^2)$ gate complexity lower bound.} Here we outline two methods for simulating bosonic systems, using the Hubbard-Holstein model as an example model. In the first method we again use the HHKL decomposition combined with the interaction-picture Hamiltonian simulation;  see Section \ref{sec:application_to_ham_sim_HHKL} in the Appendix for a detailed discussion. The important difference from the setting of LGTs is that, when simulating a block $\mathcal{B}$ of the Hubbard-Holstein model, we cannot get a polylogarithmic dependence on $\Lambda$. Rather, the gate complexity to simulate a block is $\Or(\sqrt{\Lambda}\polylog(\Lambda NT\epsilon^{-1}))$, because, as explained in Section \ref{sec:sim_boson_fermion} in the Appendix, the Hubbard-Holstein Hamiltonian has multiple unbounded terms and it is not known how to fast-forward them simultaneously.
Since there are $\Or(NT)$ blocks to be simulated, and the scaling of $\Lambda$ is given by \eqref{eq:m_truncation_threshold_scaling_hamiltonian}, the total gate complexity is $\wt{\Or}(NT(\sqrt{\Lambda_0}+T)\polylog(\epsilon^{-1}))$.

In the second method we use the $p$-th order Trotter product formula, which can be easier to implement in practice. To obtain a tight error bound in this case one may use the commutation relations among the Hamiltonian terms \cite{ChildsSuTranWiebeZhu2021commutator,ChildsSu2019nearly}. For the Hubbard-Holstein model we use the canonical commutation relation between the bosonic position and momentum operators $[X_{\alpha},P_{\alpha}]=i$, and also invoke geometric locality to tightly bound the error. 
A subtle issue with this naive analysis is that the canonical commutation relation no longer holds when acting on arbitrary states due to the truncation of the Hamiltonian terms. However, we recover the commutation relation by restricting to states with low particle numbers.
A detailed discussion of all the issues involved can be found in Section \ref{sec:boson_fermion_trotter} in the Appendix. In the end we obtain a gate complexity $\wt{\Or}\left(N^{1+1/p}(\sqrt{\Lambda_0}+T)^{1+2/p}T^{1+1/p}\epsilon^{-1/p}\right)$, nearly matching the complexity of the method based on HHKL decomposition for large values of $p$.

Notice that the gate complexity of simulating the Hubbard-Holstein model has an almost quadratic dependence on the time $T$, in stark contrast with the almost linear dependence that applies when all local terms in the Hamiltonian have bounded norm \cite{HaahHastingsKothariLow2021quantum,ChildsSu2019nearly}. 
In fact, there exist unbounded Hamiltonians which are impossible to simulate with an almost linear scaling in $T$.
In Section \ref{sec:lower_bound} in the Appendix we construct a class of Hamiltonians acting on one bosonic mode and $N$ qubits for which simulating the evolution of qubits for time $T$ requires $\wt{\Omega}(N T^2)$ gates in general, for $\sqrt{N}\leq T\leq 2^{N/2}$.

\section{The eigenstate tail bound}
\label{sec:m_eigenstate_tail_bound}

Aside from studies of dynamics, classical or quantum computers may be used to study the static properties of ground states or low-energy states in quantum systems involving bosons or gauge fields. As in simulations of dynamics, we must truncate the local quantum numbers to ensure that local Hilbert spaces at sites or links are finite dimensional. How well can we approximate energy eigenstates of the ideal untruncated Hamiltonian within the truncated Hilbert space?

Suppose that for each site or link, denoted by $\nu$, the Hamiltonian $H$ can be expressed as in \eqref{eq:Ham-form-WR} and satisfies \eqref{eq:m_conditions_ham_general}. Consider a nondegenerate eigenvalue $\varepsilon$ of $H$, with corresponding eigenstate $\ket{\Psi}$, where $\varepsilon$ is separated from the rest of the spectrum of $H$ by a gap $\delta$, and suppose that the expectation value of the absolute value of the local quantum number in the state $\ket{\Psi}$ is finite,
%\begin{equation}
$\bar{\lambda}=\sum_{\lambda}\left|\lambda\right|\braket{\Psi|\Pi^{(\nu)}_{\lambda}|\Psi} < \infty$. 
%\end{equation}
Our goal is to find a truncation threshold $\Lambda$ such that $\|\overline{\Pi}^{(\nu)}_{[-\Lambda,\Lambda]}\ket{\Psi}\|\leq \epsilon$. We show that this truncation threshold can be chosen to scale with $\epsilon$, $\delta$, and $\bar \lambda$ according to
\begin{equation}
\label{eq:m_tail_bound}
    \Lambda^{1-r} = (2\bar{\lambda})^{1-r} + \Or(\chi \delta^{-1}\log^2(\epsilon^{-1})+\log(\epsilon^{-1})),
\end{equation}
where $\chi$ is a constant independent of system size.

A detailed proof of \eqref{eq:m_tail_bound} can be found in Section \ref{sec:tail} in the Appendix. The polylogarithmic dependence of the truncation threshold $\Lambda$ on the truncation error $\epsilon$ arises because the distribution of local quantum numbers in the eigenstate $|\Psi\rangle$ decays exponentially. 
This contrasts with the polynomial decay one can derive using Markov's or Chebyshev's inequality.

The main tool used in our proof is an approximate eigenstate projection operator \cite{Hastings2007area}
\begin{equation}
     \wt{P}_{\varepsilon} = \frac{\sigma}{\sqrt{2\pi}} \int_{-T}^T \dd t e^{-\frac{1}{2}\sigma^2 t^2} e^{-i\varepsilon t}e^{itH}.
\end{equation}
When $\sigma\ll \delta$ and $T\gg\sigma^{-1}$, this operator is close to the eigenstate projection operator $P_{\varepsilon}=\ket{\Psi}\bra{\Psi}$. We derive \eqref{eq:m_tail_bound} by applying the approximate projector $\wt{P}_{\varepsilon}$ to a suitable initial state and using properties of the time evolution operator $e^{-iHt}$, in particular the truncation threshold result \eqref{eq:m_truncation_threshold_scaling_state}. We may choose the initial state to be $\Pi^{(\nu)}_{[-2\bar{\lambda},2\bar{\lambda}]}\ket{\Psi}$, which by Markov's inequality has an $\mathcal{O}(1)$ overlap with $\ket{\Psi}$. By observing that $e^{-iHt}\Pi^{(\nu)}_{[-2\bar{\lambda},2\bar{\lambda}]}\ket{\Psi}$ can be well approximated by a state with an appropriately chosen truncation threshold, we obtain \eqref{eq:m_tail_bound}.

Note that \eqref{eq:m_tail_bound} does not apply to eigenstates that are degenerate due to symmetries of the Hamiltonian $H$. Nor is it particularly useful when applied to generic highly excited eigenstates, for which the gap $\delta$ may be exponentially small in the system size.

\section{Discussion}

We have studied the task of simulating Hamiltonian dynamics for quantum systems on a lattice, where local Hilbert spaces at lattice sites or links are infinite dimensional. In these systems, local quantum numbers on sites or links can be arbitrarily large in principle. For a large class of such models, we derived upper bounds on how rapidly these local quantum numbers can increase with time, hence showing that time evolved states can be well approximated in a truncated Hilbert space in which each local quantum number is no larger than a truncation threshold $\Lambda$. In particular, we showed that for a fixed evolution time $T$, a precision $\epsilon$ can be achieved by choosing $\Lambda$ scaling polylogarithmically with $\epsilon^{-1}$, as indicated in \eqref{eq:m_truncation_threshold_scaling_state} and \eqref{eq:m_truncation_threshold_scaling_hamiltonian}. Leveraging this finding, we 
established a threshold for truncating the Hamiltonian with a provable accuracy guarantee and
developed algorithms for quantum simulation of LGTs with gate complexity $\wt{\Or}(NT\polylog(\Lambda_0 \epsilon^{-1}))$, where $N$ is the system size, assuming that the initial state can be well approximated with truncation threshold $\Lambda_0$. For a bosonic system like the Hubbard-Holstein model, our algorithm has gate complexity $\wt{\Or}(NT(\sqrt{\Lambda_0}+T)\polylog(\epsilon^{-1}))$. By applying our bounds on the growth of local quantum numbers, we also showed that spectrally isolated energy eigenstates can be approximated with precision $\epsilon$ using a truncation threshold polylogartihmic in $\epsilon^{-1}$, as indicated in \eqref{eq:m_tail_bound}.

Although formally the local Hilbert spaces are infinite dimensional in the models we considered, our results show that at least for some purposes these models can be accurately approximated by models with finite-dimensional local Hilbert spaces of relatively modest size. Many fundamental results have been derived for quantum spin systems with finite-dimensional spins on each lattice site, such as the exponential clustering theorem \cite{Hastings2004locality,HastingsKoma2006spectral,NachtergaeleSims2006lieb}, 

the area law in one dimension \cite{Hastings2007area,AradKitaevLandauVazirani2013area}, and the connection between local and global eigenstates \cite{AradKuwaharaLandau2016connecting}. Perhaps the tools we have developed can be exploited to extend some of these results to systems with infinite-dimensional local degrees of freedom. 

There are certain models of physical interest that do not immediately fit in our framework. These include models that involve a quadratic coupling between bosonic modes, such as the Bose-Hubbard model (\changed{$r=1$} in \eqref{eq:m_conditions_ham_general_2}) and the discretized $\phi^4$ theory (\changed{$r=2$}); our analysis handles the case where $r<1$ in \eqref{eq:m_conditions_ham_general}. Our framework also does not apply to boson-fermion coupling models where anharmonicity is involved that leads to $r=2$. Nevertheless, the method we have developed already provides a unified treatment for a wide range of bosonic systems and lattice gauge theories, and we hope future work could study other physical systems that have not been considered in our work.

For $\phi^4$ theory on a lattice, truncation thresholds were previously analyzed using energy conservation and Chebyshev's inequality \cite{JordanLeePreskill2012quantum}, a method that can be extended to other models as well. Our results apply only to models that satisfy \eqref{eq:Ham-form-WR} and \eqref{eq:m_conditions_ham_general}. For models in this class, we compare our methods with energy-based methods in Section \ref{sec:comparison_with_the_energy_based_truncation_threshold} in the Appendix, finding that our methods yield a more favorable truncation threshold in the limit of short time, high precision, or large system size.

The energy-based truncation threshold in \cite{JordanLeePreskill2012quantum} has the advantage of being time independent, and it can also be applied to models that do not satisfy \eqref{eq:Ham-form-WR} and \eqref{eq:m_conditions_ham_general}, such as $\phi^4$ theory and other models involving bosons with anharmonic couplings. However, it has the disadvantage that the truncation threshold scales polynomially rather than polylogarithmically with $\epsilon^{-1}$. Under suitable conditions, can the truncation threshold scale as $\polylog(\epsilon^{-1})$ in a broader class of models than those satisfying \eqref{eq:Ham-form-WR} and \eqref{eq:m_conditions_ham_general}, and are there models in which $\polylog(\epsilon^{-1})$ scaling can be achieved by a time-independent truncation threshold? 
Moreover, the energy-based truncation threshold provides an answer to Question 1 in the context of truncating a quantum state, but it has not been shown, at the same level of rigor, that the energy-based method also provides an answer to Question 2 in the context of truncating the Hamiltonian. 
The latter is however necessary if we want to rigorously apply the energy-based truncation threshold to Hamiltonian simulation. These are open questions to be addressed in future work.

Another question that has yet to be answered is how to control the error for observables in boson and gauge theory simulations. For bounded observables, once we can control the error in the quantum state, we can automatically control the error of observables. However for unbounded observables, such as the boson occupation number and the electric field value, this simple approach is not suitable. For local observables in lattice models with a finite speed limit for information propagation, one intuitively expects the observable error to also have a bound that respects this locality. We hope our approach can be extended to address questions of this kind, ultimately leading to a theoretical foundation for studying quantum systems with infinite degrees of freedom.

\section*{Acknowledgement}
We thank Kunal Sharma, Mark Wilde, Minh Cong Tran, Junyu Liu, Chi-Fang (Anthony) Chen, Ryan Babbush, Joonho Lee, Di Luo, Nathan Wiebe, Dominic Berry, and Lin Lin for helpful discussions.
YT was partly supported by the NSF Quantum Leap Challenge Institute (QLCI) program through Grant No. OMA-2016245, and by the Department of Energy under Grant No. FWP-NQISCCAWL.
JP was partly supported by the U.S. Department of Energy Office of Advanced Scientific Computing Research (DE-NA0003525, DE-SC0020290) and Office of High Energy Physics (DE-ACO2-07CH11359, DE-SC0018407), the Simons Foundation It from Qubit Collaboration, the Air Force Office of Scientific Research (FA9550-19-1-0360), and the National Science Foundation (PHY-1733907). 
YS was partly supported by the National Science Foundation RAISE-TAQS 1839204. 
VVA acknowledges support from the NSF Quantum Leap Challenge Institute (QLCI) program through Grant No. OMA-2120757.
The Institute for Quantum Information and Matter is an NSF Physics Frontiers Center. 
Contributions to this work by NIST, an agency of the US government, are not subject to US copyright. Any mention of commercial products does not indicate endorsement by NIST.

\bibliographystyle{abbrvnat}
\bibliography{ref}

\begin{thebibliography}{78}
\providecommand{\natexlab}[1]{#1}
\providecommand{\url}[1]{\texttt{#1}}
\expandafter\ifx\csname urlstyle\endcsname\relax
  \providecommand{\doi}[1]{doi: #1}\else
  \providecommand{\doi}{doi: \begingroup \urlstyle{rm}\Url}\fi

\bibitem[Arad et~al.(2013)Arad, Kitaev, Landau, and
  Vazirani]{AradKitaevLandauVazirani2013area}
I.~Arad, A.~Kitaev, Z.~Landau, and U.~Vazirani.
\newblock An area law and sub-exponential algorithm for {1D} systems.
\newblock \emph{arXiv preprint arXiv:1301.1162}, 2013.
\newblock \doi{10.48550/arXiv.1301.1162}.

\bibitem[Arad et~al.(2016)Arad, Kuwahara, and
  Landau]{AradKuwaharaLandau2016connecting}
I.~Arad, T.~Kuwahara, and Z.~Landau.
\newblock Connecting global and local energy distributions in quantum spin
  models on a lattice.
\newblock \emph{Journal of Statistical Mechanics: Theory and Experiment},
  2016\penalty0 (3):\penalty0 033301, 2016.
\newblock \doi{10.1088/1742-5468/2016/03/033301}.

\bibitem[Atia and Aharonov(2017)]{Atia2017FastForwarding}
Y.~Atia and D.~Aharonov.
\newblock Fast-forwarding of {H}amiltonians and exponentially precise
  measurements.
\newblock \emph{Nature Communications}, 8\penalty0 (1):\penalty0 1572, Nov
  2017.
\newblock \doi{10.1038/s41467-017-01637-7}.

\bibitem[Banerjee et~al.(2012)Banerjee, Dalmonte, M{\"u}ller, Rico, Stebler,
  Wiese, and Zoller]{BanerjeeDalmonteEtAl2012atomic}
D.~Banerjee, M.~Dalmonte, M.~M{\"u}ller, E.~Rico, P.~Stebler, U.-J. Wiese, and
  P.~Zoller.
\newblock Atomic quantum simulation of dynamical gauge fields coupled to
  fermionic matter: From string breaking to evolution after a quench.
\newblock \emph{Physical Review Letters}, 109\penalty0 (17):\penalty0 175302,
  2012.
\newblock \doi{10.1103/PhysRevLett.109.175302}.

\bibitem[Ba{\~n}uls et~al.(2017)Ba{\~n}uls, Cichy, Cirac, Jansen, and
  K{\"u}hn]{BanulsCarmenEtAl2017efficient}
M.~C. Ba{\~n}uls, K.~Cichy, J.~I. Cirac, K.~Jansen, and S.~K{\"u}hn.
\newblock Efficient basis formulation for $(1+1)$-dimensional {SU(2)} lattice
  gauge theory: Spectral calculations with matrix product states.
\newblock \emph{Physical Review X}, 7\penalty0 (4):\penalty0 041046, 2017.
\newblock \doi{10.1103/PhysRevX.7.041046}.

\bibitem[Banuls et~al.(2020)Banuls, Blatt, Catani, Celi, Cirac, Dalmonte,
  Fallani, Jansen, Lewenstein, Montangero,
  et~al.]{BanulsBlattEtAl2020simulating}
M.~C. Banuls, R.~Blatt, J.~Catani, A.~Celi, J.~I. Cirac, M.~Dalmonte,
  L.~Fallani, K.~Jansen, M.~Lewenstein, S.~Montangero, et~al.
\newblock Simulating lattice gauge theories within quantum technologies.
\newblock \emph{The European physical journal D}, 74\penalty0 (8):\penalty0
  1--42, 2020.
\newblock \doi{10.1140/epjd/e2020-100571-8}.

\bibitem[Bender et~al.(2018)Bender, Zohar, Farace, and
  Cirac]{BenderZoharEtAl2018digital}
J.~Bender, E.~Zohar, A.~Farace, and J.~I. Cirac.
\newblock Digital quantum simulation of lattice gauge theories in three spatial
  dimensions.
\newblock \emph{New Journal of Physics}, 20\penalty0 (9):\penalty0 093001,
  2018.
\newblock \doi{10.1088/1367-2630/aadb71}.

\bibitem[Berry and Childs(2012)]{BerryChilds2012black}
D.~W. Berry and A.~M. Childs.
\newblock Black-box hamiltonian simulation and unitary implementation.
\newblock \emph{Quantum Information \& Computation}, 12\penalty0
  (1-2):\penalty0 29--62, 2012.
\newblock \doi{10.26421/QIC12.1-2}.

\bibitem[Berry et~al.(2006)Berry, Ahokas, Cleve, and
  Sanders]{BerryAhokasCleveSanders2006}
D.~W. Berry, G.~Ahokas, R.~Cleve, and B.~C. Sanders.
\newblock Efficient quantum algorithms for simulating sparse {Hamiltonians}.
\newblock \emph{Communications in Mathematical Physics}, 270\penalty0
  (2):\penalty0 359--371, 2006.
\newblock \doi{10.1007/s00220-006-0150-x}.

\bibitem[Berry et~al.(2014)Berry, Childs, Cleve, Kothari, and
  Somma]{BerryChildsCleveEtAl2014exponential}
D.~W. Berry, A.~M. Childs, R.~Cleve, R.~Kothari, and R.~D. Somma.
\newblock Exponential improvement in precision for simulating sparse
  {Hamiltonians}.
\newblock In \emph{Proceedings of the forty-sixth annual ACM symposium on
  Theory of computing}, pages 283--292, 2014.
\newblock \doi{10.1145/2591796.2591854}.

\bibitem[Berry et~al.(2015)Berry, Childs, and Kothari]{BerryChildsKothari2015}
D.~W. Berry, A.~M. Childs, and R.~Kothari.
\newblock {Hamiltonian} simulation with nearly optimal dependence on all
  parameters.
\newblock In \emph{2015 IEEE 56th Annual Symposium on Foundations of Computer
  Science}, pages 792--809, 2015.
\newblock \doi{10.1145/3313276.3316386}.

\bibitem[Bonet-Monroig et~al.(2018)Bonet-Monroig, Sagastizabal, Singh, and
  O'Brien]{BonetMonroigEtAl2018low}
X.~Bonet-Monroig, R.~Sagastizabal, M.~Singh, and T.~O'Brien.
\newblock Low-cost error mitigation by symmetry verification.
\newblock \emph{Physical Review A}, 98\penalty0 (6):\penalty0 062339, 2018.
\newblock \doi{10.1103/PhysRevA.98.062339}.

\bibitem[Byrnes and Yamamoto(2006)]{ByrnesYamamoto2006simulating}
T.~Byrnes and Y.~Yamamoto.
\newblock Simulating lattice gauge theories on a quantum computer.
\newblock \emph{Physical Review A}, 73\penalty0 (2):\penalty0 022328, 2006.
\newblock \doi{10.1103/PhysRevA.73.022328}.

\bibitem[Canonne(2017)]{Canonne2017PoissonTail}
C.~Canonne.
\newblock A short note on {Poisson} tail bounds.
\newblock 2017.
\newblock URL
  \url{http://www.cs.columbia.edu/~ccanonne/files/misc/2017-poissonconcentration.pdf}.

\bibitem[Chakraborty et~al.(2022)Chakraborty, Honda, Izubuchi, Kikuchi, and
  Tomiya]{Chakraborty2020DigitalSimulation}
B.~Chakraborty, M.~Honda, T.~Izubuchi, Y.~Kikuchi, and A.~Tomiya.
\newblock Classically emulated digital quantum simulation of the schwinger
  model with a topological term via adiabatic state preparation.
\newblock \emph{Phys. Rev. D}, 105:\penalty0 094503, May 2022.
\newblock \doi{10.1103/PhysRevD.105.094503}.
\newblock URL \url{https://link.aps.org/doi/10.1103/PhysRevD.105.094503}.

\bibitem[Chang et~al.(2011)Chang, Cosman, and Milstein]{Chang2011Chernoff}
S.-H. Chang, P.~C. Cosman, and L.~B. Milstein.
\newblock Chernoff-type bounds for the {Gaussian} error function.
\newblock \emph{IEEE Transactions on Communications}, 59\penalty0
  (11):\penalty0 2939--2944, 2011.
\newblock \doi{10.1109/TCOMM.2011.072011.100049}.

\bibitem[Childs and Su(2019)]{ChildsSu2019nearly}
A.~M. Childs and Y.~Su.
\newblock Nearly optimal lattice simulation by product formulas.
\newblock \emph{Physical Review Letters}, 123\penalty0 (5):\penalty0 050503,
  2019.
\newblock \doi{10.1103/PhysRevLett.123.050503}.

\bibitem[Childs et~al.(2017)Childs, Kothari, and Somma]{ChildsKothariSomma2017}
A.~M. Childs, R.~Kothari, and R.~D. Somma.
\newblock Quantum algorithm for systems of linear equations with exponentially
  improved dependence on precision.
\newblock \emph{SIAM J. Comput.}, 46\penalty0 (6):\penalty0 1920--1950, 2017.
\newblock \doi{10.1137/16m1087072}.

\bibitem[Childs et~al.(2021)Childs, Su, Tran, Wiebe, and
  Zhu]{ChildsSuTranWiebeZhu2021commutator}
A.~M. Childs, Y.~Su, M.~C. Tran, N.~Wiebe, and S.~Zhu.
\newblock Theory of {Trotter} error with commutator scaling.
\newblock \emph{Physical Review X}, 11\penalty0 (1):\penalty0 011020, 2021.
\newblock \doi{10.1103/PhysRevX.11.011020}.

\bibitem[Davoudi et~al.(2021)Davoudi, Linke, and Pagano]{Davoudi2021Toward}
Z.~Davoudi, N.~M. Linke, and G.~Pagano.
\newblock Toward simulating quantum field theories with controlled phonon-ion
  dynamics: A hybrid analog-digital approach.
\newblock \emph{Phys. Rev. Research}, 3:\penalty0 043072, Oct 2021.
\newblock \doi{10.1103/PhysRevResearch.3.043072}.
\newblock URL \url{https://link.aps.org/doi/10.1103/PhysRevResearch.3.043072}.

\bibitem[Del~Pino et~al.(2018)Del~Pino, Schr{\"o}der, Chin, Feist, and
  Garcia-Vidal]{DelJavierEtAl2018tensor}
J.~Del~Pino, F.~A. Schr{\"o}der, A.~W. Chin, J.~Feist, and F.~J. Garcia-Vidal.
\newblock Tensor network simulation of non-{Markovian} dynamics in organic
  polaritons.
\newblock \emph{Physical Review Letters}, 121\penalty0 (22):\penalty0 227401,
  2018.
\newblock \doi{10.1103/PhysRevLett.121.227401}.

\bibitem[Dicke(1954)]{Dicke1954coherence}
R.~H. Dicke.
\newblock Coherence in spontaneous radiation processes.
\newblock \emph{Physical Review}, 93\penalty0 (1):\penalty0 99, 1954.
\newblock \doi{10.1103/PhysRev.93.99}.

\bibitem[Fr{\"o}hlich(1954)]{Frohlich1954electrons}
H.~Fr{\"o}hlich.
\newblock Electrons in lattice fields.
\newblock \emph{Advances in Physics}, 3\penalty0 (11):\penalty0 325--361, 1954.
\newblock \doi{10.1080/00018735400101213}.

\bibitem[Gily{\'e}n et~al.(2019)Gily{\'e}n, Su, Low, and
  Wiebe]{gilyen2019quantum}
A.~Gily{\'e}n, Y.~Su, G.~H. Low, and N.~Wiebe.
\newblock Quantum singular value transformation and beyond: exponential
  improvements for quantum matrix arithmetics.
\newblock In \emph{Proceedings of the 51st Annual ACM SIGACT Symposium on
  Theory of Computing}, pages 193--204, 2019.
\newblock \doi{10.1145/3313276.3316366}.

\bibitem[Giustino(2017)]{Giustino2017electron}
F.~Giustino.
\newblock Electron-phonon interactions from first principles.
\newblock \emph{Reviews of Modern Physics}, 89\penalty0 (1):\penalty0 015003,
  2017.
\newblock \doi{10.1103/RevModPhys.89.015003}.

\bibitem[Gu et~al.(2021)Gu, Somma, and
  {\c{S}}ahino{\u{g}}lu]{Gu2021FastForwarding}
S.~Gu, R.~D. Somma, and B.~{\c{S}}ahino{\u{g}}lu.
\newblock Fast-forwarding quantum evolution.
\newblock \emph{Quantum}, 5:\penalty0 577, 2021.
\newblock \doi{10.22331/q-2021-11-15-577}.

\bibitem[Guo et~al.(2012)Guo, Weichselbaum, von Delft, and
  Vojta]{GuoWeichselbaumEtAl2012critical}
C.~Guo, A.~Weichselbaum, J.~von Delft, and M.~Vojta.
\newblock Critical and strong-coupling phases in one-and two-bath spin-boson
  models.
\newblock \emph{Physical Review Letters}, 108\penalty0 (16):\penalty0 160401,
  2012.
\newblock \doi{10.1103/PhysRevLett.108.160401}.

\bibitem[Haah et~al.(2021)Haah, Hastings, Kothari, and
  Low]{HaahHastingsKothariLow2021quantum}
J.~Haah, M.~B. Hastings, R.~Kothari, and G.~H. Low.
\newblock Quantum algorithm for simulating real time evolution of lattice
  {Hamiltonians}.
\newblock \emph{SIAM Journal on Computing}, \penalty0 (0):\penalty0
  FOCS18--250, 2021.
\newblock \doi{10.1137/18M1231511}.

\bibitem[Hastings(2004)]{Hastings2004locality}
M.~B. Hastings.
\newblock Locality in quantum and {M}arkov dynamics on lattices and networks.
\newblock \emph{Physical Review Letters}, 93\penalty0 (14):\penalty0 140402,
  2004.
\newblock \doi{10.1103/PhysRevLett.93.140402}.

\bibitem[Hastings(2007)]{Hastings2007area}
M.~B. Hastings.
\newblock An area law for one-dimensional quantum systems.
\newblock \emph{Journal of Statistical Mechanics: Theory and Experiment},
  2007\penalty0 (08):\penalty0 P08024, 2007.
\newblock \doi{10.1088/1742-5468/2007/08/p08024}.

\bibitem[Hastings and Koma(2006)]{HastingsKoma2006spectral}
M.~B. Hastings and T.~Koma.
\newblock Spectral gap and exponential decay of correlations.
\newblock \emph{Communications in Mathematical Physics}, 265\penalty0
  (3):\penalty0 781--804, 2006.
\newblock \doi{10.1007/s00220-006-0030-4}.

\bibitem[Hepp and Lieb(1973)]{Hepp1973superradiant}
K.~Hepp and E.~H. Lieb.
\newblock On the superradiant phase transition for molecules in a quantized
  radiation field: the {Dicke} maser model.
\newblock \emph{Annals of Physics}, 76\penalty0 (2):\penalty0 360--404, 1973.
\newblock \doi{https://doi.org/10.1016/0003-4916(73)90039-0}.

\bibitem[Holstein(1959)]{holstein1959studies}
T.~Holstein.
\newblock Studies of polaron motion: {Part I}. the molecular-crystal model.
\newblock \emph{Annals of Physics}, 8\penalty0 (3):\penalty0 325--342, 1959.
\newblock \doi{https://doi.org/10.1016/0003-4916(59)90002-8}.

\bibitem[Hubbard(1963)]{Hubbard1963electron}
J.~Hubbard.
\newblock Electron correlations in narrow energy bands.
\newblock \emph{Proceedings of the Royal Society of London. Series A.
  Mathematical and Physical Sciences}, 276\penalty0 (1365):\penalty0 238--257,
  1963.
\newblock \doi{10.1098/rspa.1963.0204}.

\bibitem[Huggins et~al.(2021)Huggins, McArdle, O'Brien, Lee, Rubin, Boixo,
  Whaley, Babbush, and McClean]{Huggins2021Virtual}
W.~J. Huggins, S.~McArdle, T.~E. O'Brien, J.~Lee, N.~C. Rubin, S.~Boixo, K.~B.
  Whaley, R.~Babbush, and J.~R. McClean.
\newblock Virtual distillation for quantum error mitigation.
\newblock \emph{Phys. Rev. X}, 11:\penalty0 041036, Nov 2021.
\newblock \doi{10.1103/PhysRevX.11.041036}.
\newblock URL \url{https://link.aps.org/doi/10.1103/PhysRevX.11.041036}.

\bibitem[Jordan et~al.(2012)Jordan, Lee, and
  Preskill]{JordanLeePreskill2012quantum}
S.~P. Jordan, K.~S. Lee, and J.~Preskill.
\newblock Quantum algorithms for quantum field theories.
\newblock \emph{Science}, 336\penalty0 (6085):\penalty0 1130--1133, 2012.
\newblock \doi{10.1126/science.1217069}.

\bibitem[Jordan et~al.(2014)Jordan, Lee, and
  Preskill]{JordanLeePreskill2014QuantumCompScatter}
S.~P. Jordan, K.~S. Lee, and J.~Preskill.
\newblock Quantum computation of scattering in scalar quantum field theories.
\newblock \emph{Quantum Information \& Computation}, 14\penalty0
  (11-12):\penalty0 1014--1080, 2014.
\newblock \doi{10.5555/2685155.2685163}.

\bibitem[Kan and Nam(2021)]{KanNam2021lattice}
A.~Kan and Y.~Nam.
\newblock Lattice quantum chromodynamics and electrodynamics on a universal
  quantum computer.
\newblock \emph{arXiv preprint arXiv:2107.12769}, 2021.
\newblock \doi{10.48550/arXiv.2107.12769}.

\bibitem[Kivlichan et~al.(2018)Kivlichan, McClean, Wiebe, Gidney, Aspuru-Guzik,
  Chan, and Babbush]{KivlichanMcCleanWiebeEtAl2018quantum}
I.~D. Kivlichan, J.~McClean, N.~Wiebe, C.~Gidney, A.~Aspuru-Guzik, G.~K.-L.
  Chan, and R.~Babbush.
\newblock Quantum simulation of electronic structure with linear depth and
  connectivity.
\newblock \emph{Physical Review Letters}, 120\penalty0 (11):\penalty0 110501,
  2018.
\newblock \doi{10.1103/PhysRevLett.120.110501}.

\bibitem[Klco and Savage(2019)]{klco2019digitization}
N.~Klco and M.~J. Savage.
\newblock Digitization of scalar fields for quantum computing.
\newblock \emph{Physical Review A}, 99\penalty0 (5):\penalty0 052335, 2019.
\newblock \doi{10.1103/PhysRevA.99.052335}.

\bibitem[Klco et~al.(2018)Klco, Dumitrescu, McCaskey, Morris, Pooser, Sanz,
  Solano, Lougovski, and Savage]{KlcoDumitrescuEtAl2018quantum}
N.~Klco, E.~F. Dumitrescu, A.~J. McCaskey, T.~D. Morris, R.~C. Pooser, M.~Sanz,
  E.~Solano, P.~Lougovski, and M.~J. Savage.
\newblock Quantum-classical computation of {Schwinger} model dynamics using
  quantum computers.
\newblock \emph{Physical Review A}, 98\penalty0 (3):\penalty0 032331, 2018.
\newblock \doi{10.1103/PhysRevA.98.032331}.

\bibitem[Klco et~al.(2020)Klco, Savage, and Stryker]{KlcoSavageStryker20202}
N.~Klco, M.~J. Savage, and J.~R. Stryker.
\newblock Su(2) non-abelian gauge field theory in one dimension on digital
  quantum computers.
\newblock \emph{Physical Review D}, 101\penalty0 (7):\penalty0 074512, 2020.
\newblock \doi{10.1103/PhysRevD.101.074512}.

\bibitem[Kloss et~al.(2019)Kloss, Reichman, and
  Tempelaar]{KlossReichmanTempelaar2019multiset}
B.~Kloss, D.~R. Reichman, and R.~Tempelaar.
\newblock Multiset matrix product state calculations reveal mobile
  {F}ranck-{C}ondon excitations under strong {H}olstein-type coupling.
\newblock \emph{Physical Review Letters}, 123\penalty0 (12):\penalty0 126601,
  2019.
\newblock \doi{10.1103/PhysRevLett.123.126601}.

\bibitem[Kogut and Susskind(1975)]{kogut1975hamiltonian}
J.~Kogut and L.~Susskind.
\newblock {Hamiltonian formulation of Wilson's lattice gauge theories}.
\newblock \emph{Physical Review D}, 11\penalty0 (2):\penalty0 395, 1975.
\newblock \doi{10.1103/PhysRevD.11.395}.

\bibitem[K{\"u}hn et~al.(2015)K{\"u}hn, Zohar, Cirac, and
  Ba{\~n}uls]{KuhnZoharCiracEtAl2015non}
S.~K{\"u}hn, E.~Zohar, J.~I. Cirac, and M.~C. Ba{\~n}uls.
\newblock {Non-Abelian} string breaking phenomena with matrix product states.
\newblock \emph{Journal of High Energy Physics}, 2015\penalty0 (7):\penalty0
  1--26, 2015.
\newblock \doi{10.1007/JHEP07(2015)130}.

\bibitem[Liu and Xin(2020)]{Liu2020}
J.~Liu and Y.~Xin.
\newblock Quantum simulation of quantum field theories as quantum chemistry.
\newblock \emph{Journal of High Energy Physics}, 2020\penalty0 (12):\penalty0
  11, Dec 2020.
\newblock ISSN 1029-8479.
\newblock \doi{10.1007/JHEP12(2020)011}.

\bibitem[Lloyd(1996)]{Lloyd1996universal}
S.~Lloyd.
\newblock Universal quantum simulators.
\newblock \emph{Science}, 273\penalty0 (5278):\penalty0 1073--1078, 1996.
\newblock \doi{10.1126/science.273.5278.1073}.

\bibitem[Low and Chuang(2017)]{low2017optimal}
G.~H. Low and I.~L. Chuang.
\newblock Optimal {Hamiltonian} simulation by quantum signal processing.
\newblock \emph{Physical Review Letters}, 118\penalty0 (1):\penalty0 010501,
  2017.
\newblock \doi{10.1103/physrevlett.118.010501}.

\bibitem[Low and Chuang(2019)]{low2019hamiltonian}
G.~H. Low and I.~L. Chuang.
\newblock Hamiltonian simulation by qubitization.
\newblock \emph{Quantum}, 3:\penalty0 163, 2019.
\newblock \doi{10.22331/q-2019-07-12-163}.

\bibitem[Low and Wiebe(2018)]{LowWiebe2018interaction}
G.~H. Low and N.~Wiebe.
\newblock {Hamiltonian} simulation in the interaction picture.
\newblock \emph{arXiv preprint arXiv:1805.00675}, 2018.
\newblock \doi{10.48550/arXiv.1805.00675}.

\bibitem[Macridin et~al.(2018{\natexlab{a}})Macridin, Spentzouris, Amundson,
  and Harnik]{MacridinEtAL2018digital}
A.~Macridin, P.~Spentzouris, J.~Amundson, and R.~Harnik.
\newblock Digital quantum computation of fermion-boson interacting systems.
\newblock \emph{Physical Review A}, 98\penalty0 (4), 2018{\natexlab{a}}.
\newblock \doi{10.1103/PhysRevA.98.042312}.

\bibitem[Macridin et~al.(2018{\natexlab{b}})Macridin, Spentzouris, Amundson,
  and Harnik]{MacridinEtAl2018ElectronPhonon}
A.~Macridin, P.~Spentzouris, J.~Amundson, and R.~Harnik.
\newblock Electron-phonon systems on a universal quantum computer.
\newblock \emph{Physical Review Letters}, 121\penalty0 (11),
  2018{\natexlab{b}}.
\newblock \doi{10.1103/PhysRevLett.121.110504}.

\bibitem[Magnifico et~al.(2021)Magnifico, Felser, Silvi, and
  Montangero]{MagnificoFelserEtAl2021lattice}
G.~Magnifico, T.~Felser, P.~Silvi, and S.~Montangero.
\newblock Lattice quantum electrodynamics in $(3+1)$-dimensions at finite
  density with tensor networks.
\newblock \emph{Nature Communications}, 12\penalty0 (1):\penalty0 1--13, 2021.
\newblock \doi{10.1038/s41467-021-23646-3}.

\bibitem[McArdle et~al.(2019)McArdle, Yuan, and
  Benjamin]{McArdle2019ErrorMitigated}
S.~McArdle, X.~Yuan, and S.~Benjamin.
\newblock Error-mitigated digital quantum simulation.
\newblock \emph{Physical Review Letters}, 122:\penalty0 180501, May 2019.
\newblock \doi{10.1103/PhysRevLett.122.180501}.

\bibitem[Moosavian et~al.(2019)Moosavian, Garrison, and
  Jordan]{Moosavian2019Site}
A.~H. Moosavian, J.~R. Garrison, and S.~P. Jordan.
\newblock Site-by-site quantum state preparation algorithm for preparing vacua
  of fermionic lattice field theories.
\newblock \emph{arXiv preprint arXiv:1911.03505}, 2019.
\newblock \doi{10.48550/arXiv.1911.03505}.

\bibitem[Muschik et~al.(2017)Muschik, Heyl, Martinez, Monz, Schindler, Vogell,
  Dalmonte, Hauke, Blatt, and Zoller]{MuschikHeylMartinezEtAl2017u}
C.~Muschik, M.~Heyl, E.~Martinez, T.~Monz, P.~Schindler, B.~Vogell,
  M.~Dalmonte, P.~Hauke, R.~Blatt, and P.~Zoller.
\newblock U(1) {Wilson} lattice gauge theories in digital quantum simulators.
\newblock \emph{New Journal of Physics}, 19\penalty0 (10):\penalty0 103020,
  2017.
\newblock \doi{10.1088/1367-2630/aa89ab}.

\bibitem[Nachtergaele and Sims(2006)]{NachtergaeleSims2006lieb}
B.~Nachtergaele and R.~Sims.
\newblock Lieb-{R}obinson bounds and the exponential clustering theorem.
\newblock \emph{Communications in Mathematical Physics}, 265\penalty0
  (1):\penalty0 119--130, 2006.
\newblock \doi{10.1007/s00220-006-1556-1}.

\bibitem[Nachtergaele et~al.(2009)Nachtergaele, Raz, Schlein, and
  Sims]{NachtergaeleRazSchleinSims2009lieb}
B.~Nachtergaele, H.~Raz, B.~Schlein, and R.~Sims.
\newblock Lieb-{R}obinson bounds for harmonic and anharmonic lattice systems.
\newblock \emph{Communications in Mathematical Physics}, 286\penalty0
  (3):\penalty0 1073--1098, 2009.
\newblock \doi{10.1007/s00220-008-0630-2}.

\bibitem[Otte(2010)]{Otte2010Boundedness}
P.~Otte.
\newblock Boundedness properties of fermionic operators.
\newblock \emph{Journal of Mathematical Physics}, 51\penalty0 (8):\penalty0
  083503, 2010.
\newblock \doi{10.1063/1.3464264}.

\bibitem[Pichler et~al.(2016)Pichler, Dalmonte, Rico, Zoller, and
  Montangero]{PichlerDalmonteEtAl2016real}
T.~Pichler, M.~Dalmonte, E.~Rico, P.~Zoller, and S.~Montangero.
\newblock Real-time dynamics in {U}(1) lattice gauge theories with tensor
  networks.
\newblock \emph{Physical Review X}, 6\penalty0 (1):\penalty0 011023, 2016.
\newblock \doi{10.1103/PhysRevX.6.011023}.

\bibitem[Rajput et~al.(2022)Rajput, Roggero, and
  Wiebe]{AbhishekRoggeroWiebe2013hybridized}
A.~Rajput, A.~Roggero, and N.~Wiebe.
\newblock Hybridized methods for quantum simulation in the interaction picture.
\newblock \emph{Quantum}, 6:\penalty0 780, 2022.
\newblock \doi{10.22331/q-2022-08-17-780}.

\bibitem[Reinhard et~al.(2019)Reinhard, Mordovina, Hubig, Kretchmer,
  Schollw{\"o}ck, Appel, Sentef, and Rubio]{ReinhardMordovinaEtAl2019density}
T.~E. Reinhard, U.~Mordovina, C.~Hubig, J.~S. Kretchmer, U.~Schollw{\"o}ck,
  H.~Appel, M.~A. Sentef, and A.~Rubio.
\newblock Density-matrix embedding theory study of the one-dimensional
  {Hubbard-Holstein} model.
\newblock \emph{Journal of chemical theory and computation}, 15\penalty0
  (4):\penalty0 2221--2232, 2019.
\newblock \doi{10.1021/acs.jctc.8b01116}.

\bibitem[{\c{S}}ahino{\u{g}}lu and Somma(2021)]{SahinogluSomma2020hamiltonian}
B.~{\c{S}}ahino{\u{g}}lu and R.~D. Somma.
\newblock Hamiltonian simulation in the low-energy subspace.
\newblock \emph{npj Quantum Information}, 7\penalty0 (1):\penalty0 119, Jul
  2021.
\newblock ISSN 2056-6387.
\newblock \doi{10.1038/s41534-021-00451-w}.

\bibitem[Sandhoefer and Chan(2016)]{SandhoeferChan2016density}
B.~Sandhoefer and G.~K.-L. Chan.
\newblock Density matrix embedding theory for interacting electron-phonon
  systems.
\newblock \emph{Physical Review B}, 94\penalty0 (8):\penalty0 085115, 2016.
\newblock \doi{10.1103/PhysRevB.94.085115}.

\bibitem[Sawaya et~al.(2016)Sawaya, Smelyanskiy, McClean, and
  Aspuru-Guzik]{Sawaya2016Error}
N.~P.~D. Sawaya, M.~Smelyanskiy, J.~R. McClean, and A.~Aspuru-Guzik.
\newblock Error sensitivity to environmental noise in quantum circuits for
  chemical state preparation.
\newblock \emph{Journal of Chemical Theory and Computation}, 12\penalty0
  (7):\penalty0 3097--3108, 2016.
\newblock \doi{10.1021/acs.jctc.6b00220}.

\bibitem[Sawaya et~al.(2020)Sawaya, Menke, Kyaw, Johri, Aspuru-Guzik, and
  Guerreschi]{Sawaya2020}
N.~P.~D. Sawaya, T.~Menke, T.~H. Kyaw, S.~Johri, A.~Aspuru-Guzik, and G.~G.
  Guerreschi.
\newblock Resource-efficient digital quantum simulation of $d$-level systems
  for photonic, vibrational, and spin-$s$ {H}amiltonians.
\newblock \emph{npj Quantum Information}, 6\penalty0 (1):\penalty0 49, Jun
  2020.
\newblock ISSN 2056-6387.
\newblock \doi{10.1038/s41534-020-0278-0}.

\bibitem[Schr{\"o}der and Chin(2016)]{SchroderChin2016simulating}
F.~A. Schr{\"o}der and A.~W. Chin.
\newblock Simulating open quantum dynamics with time-dependent variational
  matrix product states: Towards microscopic correlation of environment
  dynamics and reduced system evolution.
\newblock \emph{Physical Review B}, 93\penalty0 (7):\penalty0 075105, 2016.
\newblock \doi{10.1103/PhysRevB.93.075105}.

\bibitem[Sen(2011)]{Sen2011Achieving}
P.~Sen.
\newblock Achieving the {H}an-{K}obayashi inner bound for the quantum
  interference channel by sequential decoding.
\newblock \emph{arXiv preprint arXiv:1109.0802}, 2011.
\newblock \doi{10.48550/arXiv.1109.0802}.

\bibitem[Shaw et~al.(2020)Shaw, Lougovski, Stryker, and
  Wiebe]{ShawEtAl2020Schwinger}
A.~F. Shaw, P.~Lougovski, J.~R. Stryker, and N.~Wiebe.
\newblock Quantum algorithms for simulating the lattice {Schwinger} model.
\newblock \emph{Quantum}, 4:\penalty0 306, 2020.
\newblock \doi{10.22331/q-2020-08-10-306}.

\bibitem[Somma(2015)]{Somma2015OneDimensional}
R.~D. Somma.
\newblock Quantum simulations of one dimensional quantum systems.
\newblock \emph{arXiv preprint arXiv:1503.06319}, 2015.
\newblock \doi{10.48550/arXiv.1503.06319}.

\bibitem[Su et~al.(2021)Su, Huang, and Campbell]{SuHuangCampbell2021nearly}
Y.~Su, H.-Y. Huang, and E.~T. Campbell.
\newblock Nearly tight {T}rotterization of interacting electrons.
\newblock \emph{Quantum}, 5:\penalty0 495, 2021.
\newblock \doi{10.22331/q-2021-07-05-495}.

\bibitem[Suzuki(1985)]{suzuki1985decomposition}
M.~Suzuki.
\newblock Decomposition formulas of exponential operators and {Lie}
  exponentials with some applications to quantum mechanics and statistical
  physics.
\newblock \emph{Journal of Mathematical Physics}, 26\penalty0 (4):\penalty0
  601--612, 1985.
\newblock \doi{10.1063/1.526596}.

\bibitem[Tran et~al.(2021)Tran, Su, Carney, and Taylor]{Tran2021Faster}
M.~C. Tran, Y.~Su, D.~Carney, and J.~M. Taylor.
\newblock Faster digital quantum simulation by symmetry protection.
\newblock \emph{PRX Quantum}, 2:\penalty0 010323, Feb 2021.
\newblock \doi{10.1103/PRXQuantum.2.010323}.

\bibitem[Verstraete and Cirac(2005)]{VerstraeteCirac2005mapping}
F.~Verstraete and J.~I. Cirac.
\newblock Mapping local {Hamiltonians} of fermions to local {Hamiltonians} of
  spins.
\newblock \emph{Journal of Statistical Mechanics: Theory and Experiment},
  2005\penalty0 (09):\penalty0 P09012, 2005.
\newblock \doi{10.1088/1742-5468/2005/09/p09012}.

\bibitem[Wiese(2013)]{Wiese2013ultracold}
U.-J. Wiese.
\newblock Ultracold quantum gases and lattice systems: quantum simulation of
  lattice gauge theories.
\newblock \emph{Annalen der Physik}, 525\penalty0 (10-11):\penalty0 777--796,
  2013.
\newblock \doi{https://doi.org/10.1002/andp.201300104}.

\bibitem[Woods et~al.(2015)Woods, Cramer, and
  Plenio]{WoodsCramerPlenio2015simulating}
M.~P. Woods, M.~Cramer, and M.~B. Plenio.
\newblock Simulating bosonic baths with error bars.
\newblock \emph{Physical Review Letters}, 115\penalty0 (13):\penalty0 130401,
  2015.
\newblock \doi{10.1103/PhysRevLett.115.130401}.

\bibitem[Zohar et~al.(2012)Zohar, Cirac, and
  Reznik]{ZoharCiracReznik2012simulating}
E.~Zohar, J.~I. Cirac, and B.~Reznik.
\newblock Simulating compact quantum electrodynamics with ultracold atoms:
  Probing confinement and nonperturbative effects.
\newblock \emph{Physical Review Letters}, 109\penalty0 (12):\penalty0 125302,
  2012.
\newblock \doi{10.1103/PhysRevLett.109.125302}.

\bibitem[Zohar et~al.(2013)Zohar, Cirac, and Reznik]{ZoharCiracReznik2013cold}
E.~Zohar, J.~I. Cirac, and B.~Reznik.
\newblock Cold-atom quantum simulator for {SU(2)} {Yang-Mills} lattice gauge
  theory.
\newblock \emph{Physical Review Letters}, 110\penalty0 (12):\penalty0 125304,
  2013.
\newblock \doi{10.1103/PhysRevLett.110.125304}.

\end{thebibliography}

\onecolumn\newpage
\appendix

\section{Motivating examples}
\label{sec:motivating_examples}

We begin by introducing example quantum systems that we will analyze and simulate. These include a general model for boson-fermion coupling, U(1) lattice gauge theory, and SU(2) lattice gauge theory. We refer the reader to Section~\ref{sec:other_examples} for other common models that can be analyzed within our framework.

\vspace{1em}
\textit{Boson-fermion coupling.}
We assume that there are $N_f$ fermionic modes and $N_b$ bosonic modes in the system. 
We label the fermionic modes by $i,j$ and bosonic modes by $\alpha$. 
The $c_i$ and $b_{\alpha}$ denote the fermionic and bosonic annihilation operators respectively.
The Hamiltonian takes the form
\begin{equation}
    \label{eq:ham_fermion_boson_general}
    \begin{aligned}
    H &= H_f+H_{fb}+H_b, \\
    H_{f} &= \sum_{ij}t_{ij}c_{i}^{\dagger}c_j + \sum_{ij}V_{ijkl}c_i^{\dagger}c_j^{\dagger}c_k c_l, \\
    H_{fb} &= \sum_{\alpha ij} g_{ij}^{(\alpha)} c_i^{\dagger}c_j X_{\alpha}
    + \sum_{\alpha ij} h_{ij}^{(\alpha)} c_i^{\dagger}c_j P_{\alpha}, \\
    H_b &= \frac{1}{2}\sum_{\alpha} \omega_{\alpha} X_{\alpha}^2 
    + \frac{1}{2}\sum_{\alpha} \omega_{\alpha} P_{\alpha}^2,
    \end{aligned}
\end{equation}
where $X_{\alpha}=(b_{\alpha}+b_{\alpha}^{\dagger})/\sqrt{2}$ is the position operator corresponding to the bosonic mode $\alpha$, and $P_{\alpha}=i(b^{\dagger}_{\alpha}-b_{\alpha})/\sqrt{2}$ is the momentum operator. $t=(t_{ij})$, $g^{(\alpha)}=(g^{(\alpha)}_{ij})$, and $h^{(\alpha)}=(h_{ij}^{(\alpha)})$ are all Hermitian matrices, and $V=(V_{ijkl})$ is the electron repulsion integral tensor satisfying the usual symmetry. We remark that the commonly seen Hubbard-Holstein model \cite{holstein1959studies} and the Fr{\"o}hlich model \cite{Frohlich1954electrons} both take the above form. 

\vspace{1em}
\textit{U(1) lattice gauge theory.} For notation simplicity we consider only the $(2+1)$-dimensional theory. Extension to the $(3+1)$-dimensional case is straightforward. The system consists of a square lattice of $N$ sites. We denote each site by $x$, and the lattice vector in the horizontal and vertical directions are noted $n_1$ and $n_2$ respectively. We use $(x,n_i)$ to represent the link between sites $x$ and $x+n_i$, $i=1,2$. The links are sometimes called \textit{gauge links}.

On each site $x$ we have a fermionic mode whose annihilation operator is denoted by $\phi_x$. Each link consists of a planar rotor, whose configuration space of states $\ket{\theta}$, with $\theta\in[0,2\pi]$ being an angle, is equivalent to that of a particle on a ring. An orthonormal basis of the Hilbert space can be chosen to be
\begin{equation}
    \ket{k} = \frac{1}{\sqrt{2\pi}}\int_0^{2\pi} e^{ik\theta}\ket{\theta}\dd \theta,
\end{equation}
for $k\in\ZZ$.

In Hilbert space of link $(x,n_i)$ we define operators $E_{x,n_i}$ and $U_{x,n_i}$ through
\begin{equation}
\label{eq:operators_U(1)}
    E_{x,n_i}\ket{k} = k\ket{k},\quad U_{x,n_i}\ket{k} = \ket{k-1}.
\end{equation}
Then the Hamiltonian of the system is
\begin{equation}
\label{eq:LGT_ham}
\begin{aligned}
    H &= H_M +H_{GM} + H_E + H_B, \\
    H_M &= g_M\sum_{x}(-1)^x \phi_x^{\dagger}\phi_x, \\
    H_{GM} &= g_{GM}\sum_{x,i} (\phi_x^{\dagger}U_{x,n_i}\phi_{x+n_i}+\mathrm{h.c.}), \\
    H_E &= g_E\sum_{x,i} E_{x,n_i}^2, \\
    H_B &= g_B\sum_P(\Tr U_P+\mathrm{h.c.}),
\end{aligned}
\end{equation}
where $\sum_P$ denotes a summation over all plaquettes $P$. For $P$ whose lower-left site is $x$, $U_P$ is defined as
\begin{equation}
\label{eq:wilson_loop}
    U_P = U_{x,n_1}U_{x+n_1,n_2}U_{x+n_2,n_1}^{\dagger}U_{x,n_2}^{\dagger}.
\end{equation}
The trace $\Tr$ in \eqref{eq:LGT_ham} is not needed here but will be required in the setting of SU(2) lattice gauge theory.
The four terms $H_M, H_{GM}, H_E, H_B$ describe the fermionic mass (using staggered fermions \cite{kogut1975hamiltonian}), the gauge-matter interaction, the electric energy, and the magnetic energy respectively.

\vspace{1em}
\textit{SU(2) lattice gauge theory.} The setup of the SU(2) lattice gauge theory is very similar to the U(1) case. Here for simplicity we only consider the theory using the fundamental representation of SU(2). Compared to the U(1) theory, each site $x$ now contains two fermionic modes, whose annihilation operators are denoted by $\phi_{x}^l$, $l=1,2$. We write $\phi_x = (\phi_{x}^1,\phi_{x}^2)^{\top}$. 
Each link consists of a rigid rotator whose configuration is described by an element of the group SU(2) \cite{kogut1975hamiltonian}. An orthonormal basis of the link Hilbert space consists of the quantum states $\ket{jmm'}$, where $j, m, m'$ are simultaneously either integers or half-integers with $-j\leq m,m'\leq j$. Here $j$ is the rotator's total angular momentum, and $m$, $m'$ denote the components of angular momentum along the $z$-axis in the body-fixed and space-fixed coordinate systems.

The Hamiltonian takes the form \eqref{eq:LGT_ham}, and is invariant under $\mathrm{SU}(2)$ transformations acting either from the left or from the right, which may be interpreted as rotations of the rigid rotator with respect to space-fixed or body-fixed axes respectively. The operators $E^2_{x,n_i}$ and $U_{x,n_i}$ are different from the U(1) case.
The operator $E^2_{x,n_i}$ is defined through
\begin{equation}
\label{eq:operators_SU(2)}
    E^2_{x,n_i}\ket{jmm'} = j(j+1)\ket{jmm'}.
\end{equation}
Because $\phi_x$ has two components, where each component is a fermionic mode,  $U_{x,n_i}$ is a $2\times 2$ matrix, where each of the 4 matrix entries is an operator acting on the link Hilbert space
\begin{equation}
    U_{x,n_i} = \begin{pmatrix}
    U^{11}_{x,n_i} & U^{12}_{x,n_i}\\
    U^{21}_{x,n_i} & U^{22}_{x,n_i}
    \end{pmatrix}.
\end{equation}
An important property that we will use later is
\begin{equation}
    \label{eq:U_SU(2)_properties}
    \braket{j_1 m_1 m_1'|U^{ll'}_{x,n_i}|j_2 m_2 m_2'}=0,\ \text{if }|j_1-j_2|>1/2,\quad \|U^{ll'}_{x,n_i}\|\leq 1,
\end{equation}
which follows from rules for the addition of angular momentum, given that $U_{x,n_i}$ transforms as the $j=1/2$ representation of $\mathrm{SU}(2)$. Here $\|O\|$ denotes the spectral norm of an operator $O$.
We also note that relative to the basis $\{\ket{jmm'}\}$, $U^{ll'}_{x,n_i}$'s are sparse matrices because
\begin{equation}
    \label{eq:U_SU(2)_sparse}
    \braket{j_1 m_1 m_1'|U^{ll'}_{x,n_i}|j_2 m_2 m_2'}=0,\ \text{if } m_1-m_2\neq l-3/2\ \text{or } m_1'-m_2'\neq l'-3/2,
\end{equation}
due to the conservation of angular momentum along the $z$-axis in the body-fixed and space-fixed coordinate systems. Here $l-3/2$ and $l'-3/2$ are the change of angular momentum as a result of applying $U^{ll'}_{x,n_i}$. Eqs.\ \eqref{eq:U_SU(2)_properties} and \eqref{eq:U_SU(2)_sparse} imply that the matrix representing $U^{ll'}_{x,n_i}$ has at most three non-zero elements in each row and column.

\section{The common structure}
\label{sec:common_structures}

Here we identify a common structure in all the examples introduced in Section~\ref{sec:motivating_examples}. We first decompose the entire Hilbert space $\mathcal{H}$ into a direct sum of subspaces $\mathcal{V}_{\lambda}$ with quantum numbers $\lambda\in\ZZ$. 
The projection operator onto each subspace $\mathcal{V}_{\lambda}$ is denoted by $\Pi_{\lambda}$. Then $\sum_{\lambda\in\ZZ}\Pi_{\lambda}=I$. 
We consider a class of Hamiltonians of the form
\begin{equation}
\label{eq:ham_general}
    H = H_W + H_R,
\end{equation}
where
\begin{equation}
\label{eq:conditions_ham_general}
    \Pi_{\lambda}H_W \Pi_{\lambda'}=0,\ \text{if }|\lambda-\lambda'|>1,\quad \|H_W\Pi_{[-\Lambda,\Lambda]}\|\leq \chi(\Lambda+1)^{r},\quad [H_R,\Pi_{\lambda}]=0,
\end{equation}
for some $\chi>0$, $0\leq r<1$. Here $\Pi_{[-\Lambda,\Lambda]}=\sum_{|\lambda|\leq\Lambda}\Pi_{\lambda}$. In such a Hamiltonian, $H_W$ changes the quantum number $\lambda$ in the time evolution while $H_R$ preserves it. \eqref{eq:conditions_ham_general} ensures that $\lambda$ is not changed too quickly. 
The first part of \eqref{eq:conditions_ham_general} ensures that the local quantum number is changed by at most $\pm 1$ each time the Hamiltonian is applied, and the second part ensures that the rate of the change is sublinear in the current local quantum number. The meaning of these conditions will be made clearer when we discuss the leakage bound in Section~\ref{sec:leakage_time_evo}. 

\begin{table}[]
    \centering
    \begin{tabular}{|c|c|c|}
        Model         & $\lambda$                 & $r$ \\
        \hline
        Boson-fermion & Bosonic particle number   & $1/2$ \\
        U(1) LGT      & Electric field value      & $0$   \\
        SU(2) LGT     & Total angular momentum    & $0$ 
    \end{tabular}
    \caption{Local quantum number $\lambda$ and the exponent $r$ for models discussed in Section~\ref{sec:common_structures}.}
    \label{tab:local_quantum_numbers}
\end{table}

We check that all the models introduced in Section~\ref{sec:motivating_examples} satisfy \eqref{eq:ham_general} and \eqref{eq:conditions_ham_general}. 
For the boson-fermion coupling Hamiltonian defined in \eqref{eq:ham_fermion_boson_general}, fixing a bosonic mode $\alpha_0$, we can decompose the Hilbert space according to the number of particles in the bosonic mode $\alpha_0$, which we denote by $m$. This means we let $\lambda=m$ and
\begin{equation}
\label{eq:boson_fermion_projection_operator}
    \Pi_{m}=\ketbra{m}{m}_{\alpha_0}\otimes I,
\end{equation} 
where $\ket{m}_{\alpha_0}$ means the $\ket{m}_{\alpha_0}$-particle state of the mode, and $I$ is the identity operator acting on the rest of the system. We set $\Pi_{\lambda}=0$ for all $\lambda<0$. We define $H_W$ to be the sum of terms in \eqref{eq:conditions_ham_general} that change the particle number in mode $\alpha_0$:
\begin{equation}
    H_W = \sum_{ij} \left(g^{\alpha_0}_{ij}c_i^{\dagger}c_j X_{\alpha_0} + h^{\alpha_0}_{ij}c_i^{\dagger}c_j P_{\alpha_0}\right),
\end{equation}
whereas the rest of the terms in $H$ are collected into $H_R$. Because of the fact that
\begin{equation}
    \label{eq:bosonic_XP_norm}
    \|X_{\alpha_0}\Pi_{[0,M]}\|, \|P_{\alpha_0}\Pi_{[0,M]}\| < \sqrt{2(M+1)},
\end{equation}
where $\Pi_{[0,M]}=\sum_{m=0}^M\Pi_{m}$,
one can see that \eqref{eq:conditions_ham_general} is satisfied if we choose $r=1/2$ and 
\begin{equation}
\label{eq:chi_boson_fermion}
    \chi=\sqrt{2}(\max_{\alpha}\Tr|g^{(\alpha)}| + \max_{\alpha}\Tr|h^{(\alpha)}|),
\end{equation}
where $|A|=\sqrt{A^{\dagger}A}$ for any matrix $A$. 
\changed{Note that $r$ is not determined by the highest order terms in terms of the position and momentum operators, but rather the highest order terms that change the local quantum number. For example, $X_{\alpha_0}^2$ and $P_{\alpha_0}^2$ each on their own will result in $r=1$, but since in the Hamiltonian they appear together, $X_{\alpha_0}^2+P_{\alpha_0}^2$ preserves the local quantum number and therefore does not contribute to how large $r$ is.}
 
$\Tr|A|$ is the trace norm of $A$ and it is used here because $\|\sum_{ij}A_{ij}c_i^{\dagger}c_j\|\leq \Tr|A|$ for any Hermitian matrix $A=(A_{ij})$. This can be proved for any matrix $A$ using the singular value decomposition~\cite{Otte2010Boundedness}.

In the setting of U(1) lattice gauge theory, again we fix a given link indexed by $(x_0,n_0)$ where $n_0\in\{n_1,n_2\}$. Then we decompose the Hilbert space by the electric field value on this link, i.e. we let $\lambda=k$ and define
\begin{equation}
    \label{eq:U(1)_projection_operator}
    \Pi_{k} = \ketbra{k}{k}_{(x_0,n_0)}\otimes I.
\end{equation}
Then $H_W$ should be chosen as
\begin{equation}
\label{eq:walk_ham_LGT}
    H_W = g_{GM}(\phi_{x_0}^{\dagger}U_{x_0,n_0}\phi_{x_0+n_0}+\mathrm{h.c.}) + g_B\sum_{P\ni(x_0,n_0)}(\Tr U_P+\mathrm{h.c.}).
\end{equation}
Because of the fact that $\|H_W\|\leq 4|g_{B}|+2|g_{GM}|$, \eqref{eq:conditions_ham_general} is satisfied if we choose $\chi= 4|g_{B}|+2|g_{GM}|$ and $r=0$.
Here we have $r=0$ because, unlike the bosonic position and momentum operators, $U_{x_0,n_0}$ is a bounded operator.

In the setting of SU(2) lattice gauge theory, again we fix a given link indexed by $(x_0,n_0)$. We decompose the Hilbert space according to the total angular momentum on this link. This is to say, we let $\lambda=2j$ ($j$ takes half-integer value), and
\begin{equation}
    \label{eq:SU(2)_projection_operator}
    \Pi_{2j} = \sum_{-j\leq m,m'\leq j}\ketbra{jmm'}{jmm'}_{(x_0,n_0)}\otimes I.
\end{equation}
Here we require $m,m'$ to be integers when $j$ is an integer and half-integers when $j$ is a half integer.
Then $H_W$ takes the same form as in \eqref{eq:walk_ham_LGT}.
Eq.\ \eqref{eq:U_SU(2)_properties} ensures that \eqref{eq:conditions_ham_general} is satisfied if we choose $\chi= 16|\lambda_{B}|+8|g_{GM}|$ and $r=0$. There is an additional factor of $4$ in $\chi$ compared to the U(1) case because there are now four operators $U_{x_0,n_0}^{ll'}$ contributing to the growth of the quantum number instead of one.

More generally, we define $\Pi_{S}$, where $S$ is a set of integers, as
\begin{equation}
    \Pi_S = \sum_{\lambda\in S} \Pi_{\lambda}.
\end{equation}
In the examples introduced above, we have focused on the quantum numbers on a single fixed bosonic mode or gauge link, and decomposed the Hilbert space accordingly. In fact this procedure can be done for every mode and link. Therefore we sometimes need to designate projection operators for each mode or link. In the boson-fermion coupling situation, we denote by $\Pi^{(\alpha)}_{m}$ the projection operator into the subspace with $m$ particles in the bosonic mode $\alpha$. Similarly we define $\Pi^{(\alpha)}_{S}$ for any integer set $S$. When we need to constrain the particle number on all modes, we define
\begin{equation}
    \Pi^{\mathrm{all}}_{S} = \prod_{\alpha}\Pi^{(\alpha)}_{S}.
\end{equation}
As a general rule, if $\Pi$ is any projection operator, we define $\overline{\Pi}=I-\Pi$.
For lattice gauge theories we adopt similar notations. For example we use $\Pi^{(x,n)}_S$ to denote the projection operator into the subspace with the quantum number taking value in set $S$ on gauge link $(x,n)$. Moreover, we sometimes use $\nu$ to index both the bosonic mode and gauge links when we discuss the two scenarios together. Therefore $\Pi^{(\nu)}_S$ can mean either $\Pi^{(\alpha)}_S$ or $\Pi^{(x,n)}_S$ depending on the context.

\section{Truncating an evolved quantum state}
\label{sec:leakage_time_evo}

Our first goal is to answer the following question: suppose we start from an initial state with quantum number $\lambda$ between $\pm\Lambda_0$, what is the probability that $|\lambda|$ grows beyond some given $\Lambda$ as the state evolves for time $t$? To be more concrete, we want to bound
\begin{equation}
\label{eq:leakage_def}
    \left\|\overline{\Pi}_{[-\Lambda,\Lambda]}e^{-itH}\Pi_{[-\Lambda_0,\Lambda_0]}\right\|,
\end{equation}
when $H$ has the structure \eqref{eq:ham_general}. We call this quantity the \emph{leakage}. As discussed in the main article, this upper bounds the error of truncating the quantum state at time $t$ when the initial state has a quantum number between $-\Lambda_0$ and $\Lambda_0$.

\subsection{The short-time leakage bound}
\label{sec:short_time_leakage_bound}

We first establish a bound on the leakage defined in \eqref{eq:leakage_def} for a short time $t$:
\begin{lem}[Short-time leakage bound]
\label{lem:short_time_bound}
Given Hamiltonian $H=H_W+H_R$ satisfying \eqref{eq:conditions_ham_general} with parameters $\chi$ and $r$, we have
\[
\left\|\overline{\Pi}_{(-\Lambda_0-\Delta,\Lambda_0+\Delta)}e^{-itH}\Pi_{[-\Lambda_0,\Lambda_0]}\right\|\leq \frac{1}{2^{\Delta-1}(\Delta!)^{1-r}}
\]
for any $0\leq \changed{|t|}\leq 1/(2\chi(\Lambda_0+1)^{r})$ and integers $\Lambda_0,\Delta\geq 0$, where $\overline{\Pi}_{(-\Lambda_0-\Delta,\Lambda_0+\Delta)}=I-\Pi_{(-\Lambda_0-\Delta,\Lambda_0+\Delta)}$.
\end{lem}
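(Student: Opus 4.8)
The plan is to move to the interaction picture with respect to $H_R$, expand the resulting propagator as a Dyson series, and exploit the fact that each factor of $H_W$ shifts the quantum number by at most one unit. Write $\ket{\psi(t)}=e^{-itH}\ket{\psi_0}$ with $\ket{\psi_0}$ in the range of $\Pi_{[-\Lambda_0,\Lambda_0]}$, and set $\ket{\psi_I(t)}=e^{itH_R}\ket{\psi(t)}$. Since $[H_R,\Pi_\lambda]=0$ for every $\lambda$, conjugation by $e^{itH_R}$ commutes with each $\Pi_\lambda$, so $\ket{\psi_I(t)}$ and $\ket{\psi(t)}$ have identical quantum-number distributions, and it suffices to bound the leakage of $U_I(t):=e^{itH_R}e^{-itH}$. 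A short computation shows $U_I$ solves $i\partial_t U_I(t)=H_W(t)U_I(t)$ with $H_W(t):=e^{itH_R}H_We^{-itH_R}$, and $H_W(t)$ inherits both hypotheses on $H_W$: $\Pi_\lambda H_W(t)\Pi_{\lambda'}=0$ when $|\lambda-\lambda'|>1$, and $\|H_W(t)\Pi_{[-\Lambda,\Lambda]}\|=\|H_W\Pi_{[-\Lambda,\Lambda]}\|\le\chi(\Lambda+1)^r$ because $e^{itH_R}$ is unitary and commutes with $\Pi_{[-\Lambda,\Lambda]}$.

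Next I would expand $U_I(t)=\sum_{n\ge0}(-i)^n\int_0^t\!dt_n\int_0^{t_n}\!dt_{n-1}\cdots\int_0^{t_2}\!dt_1\,H_W(t_n)\cdots H_W(t_1)$ and apply it to $\ket{\psi_0}$. The hopping-by-one property gives inductively that $H_W(t_j)\cdots H_W(t_1)\Pi_{[-\Lambda_0,\Lambda_0]}$ has range in the image of $\Pi_{[-\Lambda_0-j,\Lambda_0+j]}$, so I may insert $\Pi_{[-\Lambda_0-(j-1),\Lambda_0+(j-1)]}$ immediately to the right of the $j$-th factor (counted from the right) without changing the operator, and bound that factor's action by $\chi(\Lambda_0+j)^r$. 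Two consequences follow. First, every term with $n\le\Delta-1$ has range in the image of $\Pi_{[-\Lambda_0-\Delta+1,\Lambda_0+\Delta-1]}$ and is hence annihilated by $\overline{\Pi}_{(-\Lambda_0-\Delta,\Lambda_0+\Delta)}$, so only the terms with $n\ge\Delta$ contribute. Second, $\|H_W(t_n)\cdots H_W(t_1)\Pi_{[-\Lambda_0,\Lambda_0]}\|\le\chi^n\prod_{j=1}^n(\Lambda_0+j)^r=\chi^n\big((\Lambda_0+n)!/\Lambda_0!\big)^r$, while the iterated time integral has magnitude $|t|^n/n!$. Altogether the leakage is at most $\sum_{n\ge\Delta}a_n$ with $a_n:=(|t|\chi)^n\big((\Lambda_0+n)!/\Lambda_0!\big)^r/n!$ and $a_0=1$.

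Finally I would run the elementary tail estimate. The ratio is $a_{n+1}/a_n=|t|\chi\,(\Lambda_0+n+1)^r/(n+1)$; substituting $|t|\le 1/(2\chi(\Lambda_0+1)^r)$ and using $(\Lambda_0+n+1)/(\Lambda_0+1)\le n+1$ (valid since $\Lambda_0\ge0$) yields $a_{n+1}/a_n\le(n+1)^r/(2(n+1))=1/(2(n+1)^{1-r})$, hence $a_n\le 2^{-n}(n!)^{-(1-r)}$. Therefore $\sum_{n\ge\Delta}a_n\le 2^{-\Delta}(\Delta!)^{-(1-r)}\sum_{m\ge0}2^{-m}\big(\Delta!/(\Delta+m)!\big)^{1-r}\le 2^{-\Delta}(\Delta!)^{-(1-r)}\sum_{m\ge0}2^{-m}=2^{-(\Delta-1)}(\Delta!)^{-(1-r)}$, where the middle inequality uses $0\le r<1$ so that $\big(\Delta!/(\Delta+m)!\big)^{1-r}\le1$. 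This is exactly the asserted bound, and the same argument with $t$ replaced by $|t|$ in the integrals handles $t<0$.

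The step I expect to be the real obstacle is not the combinatorics above but the functional-analytic bookkeeping needed to make the Dyson manipulation rigorous for the unbounded operator $H_W$: one must ensure $H$ is (essentially) self-adjoint so that $e^{-itH}$ is a well-defined unitary, that $U_I(t)=e^{itH_R}e^{-itH}$ genuinely is the time-ordered exponential generated by $H_W(t)$, and that the Dyson series converges and may be rearranged when applied to vectors in the range of $\Pi_{[-\Lambda_0,\Lambda_0]}$. I would discharge this either by first proving the (bound-independent) estimate for a family of fully bounded truncations $H^{(M)}\to H$ and passing to the limit, or by working directly with the iterated Duhamel identity applied to a fixed initial vector, so that each finite partial sum is a genuine bounded operator and the remainder is controlled by the same $a_n$ estimate. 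Everything else — the telescoping projector insertions and the ratio test — is routine once this is in place.
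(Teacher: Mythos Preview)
Your proposal is correct and follows essentially the same approach as the paper: interaction picture with respect to $H_R$, Dyson expansion, the observation that the first $\Delta$ terms are annihilated by $\overline{\Pi}_{(-\Lambda_0-\Delta,\Lambda_0+\Delta)}$, and the same norm bound $\chi^n\big((\Lambda_0+n)!/\Lambda_0!\big)^r$ on the $n$-fold product. The only cosmetic difference is that the paper bounds the tail via the inequality $(\Lambda_0+k)!/(\Lambda_0!\,k!)\le(\Lambda_0+1)^k$ applied in one shot, whereas you run the equivalent ratio test $a_{n+1}/a_n\le 1/(2(n+1)^{1-r})$; these are the same termwise inequality, and the paper, like you, does not spell out the unbounded-operator functional analysis.
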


\begin{proof}
This proof is based on rewriting the time evolution using the interaction picture, and truncating the Dyson series of the new time evolution. 
\changed{Below we only consider $t>0$. The proof can readily be extended to $t<0$ because we only need to replace $H$ by $-H$, and the structure in \eqref{eq:conditions_ham_general} is preserved by this transformation.}
First we define
\begin{equation}
    \label{eq:H0_int_pic}
    H_W(t) = e^{itH_R}H_W e^{-itH_R}.
\end{equation}
Then writing the time evolution $e^{-itH}$ in the interaction picture, we have
\begin{equation}
    e^{-itH} = e^{-itH_R}\mathcal{T}e^{-i\int_0^t H_W(s)\dd s},
\end{equation}
where $\mathcal{T}$ is the time-ordering operator.
Since $e^{-itH_R}$ commutes with $\overline{\Pi}_{(-\Lambda_0-\Delta,\Lambda_0+\Delta)}$, we only need to bound
\begin{equation}
\label{eq:new_leakage}
    \left\|\overline{\Pi}_{(-\Lambda_0-\Delta,\Lambda_0+\Delta)}\mathcal{T}e^{-i\int_0^t H_W(s)\dd s}\Pi_{[-\Lambda_0,\Lambda_0]}\right\|.
\end{equation}

To this end, we consider the partial sum of the Dyson series of $\mathcal{T}e^{-i\int_0^t H_W(s)\dd s}$:
\begin{equation}
    \label{eq:dyson_partial_sum}
    S(\Delta) = \sum_{k=0}^{\Delta-1} (-i)^k \int_0^{t} \dd t_k \cdots \int_0^{t_3} \dd t_2  \int_0^{t_2} \dd t_1  H_W(t_k)\cdots H_W(t_2)H_W(t_1).
\end{equation}
In order to bound the error from replacing the exact time evolution with this truncated Dyson series, we need to estimate the norms of terms of the form
\begin{equation}
    H_W(t_k)\cdots H_W(t_2)H_W(t_1)\Pi_{[-\Lambda,\Lambda]}.
\end{equation}
Noting that $H_W(t)$ can only change $\lambda$ by $\pm 1$, we have
\begin{equation}
\label{eq:H0_particle_number_change}
    \Pi_{[-\Lambda-1,\Lambda+1]}H_W(t)\Pi_{[-\Lambda,\Lambda]} = H_W(t)\Pi_{[-\Lambda,\Lambda]},
\end{equation}
which implies by \eqref{eq:conditions_ham_general} that
\begin{equation}
\label{eq:H0_norm_bound}
    \|H_W(t)\Pi_{[-\Lambda,\Lambda]}\| = \|H_W\Pi_{[-\Lambda,\Lambda]}\| \leq  \chi(\Lambda+1)^r.
\end{equation}
By repeatedly applying \eqref{eq:H0_particle_number_change}, we have
\begin{equation}
    H_W(t_k)\cdots H_W(t_2)H_W(t_1)\Pi_{[-\Lambda_0,\Lambda_0]} = \prod_{j=1}^{k} \left(\Pi_{[-\Lambda_0-j,\Lambda_0+j]} H_W(t_j) \Pi_{[-\Lambda_0-j+1,\Lambda_0+j-1]}\right).
\end{equation}
Then applying \eqref{eq:H0_norm_bound},
\begin{equation}
\begin{aligned}
    \left\|H_W(t_k)\cdots H_W(t_2)H_W(t_1)\Pi_{[-\Lambda_0,\Lambda_0]}\right\| &\leq \prod_{j=1}^{k} \left\|\Pi_{[-\Lambda_0-j,\Lambda_0+j]} H_W(t_j) \Pi_{[-\Lambda_0-j+1,\Lambda_0+j-1]}\right\| \\ 
    &\leq \chi^k\left(\frac{(\Lambda_0+k)!}{\Lambda_0!}\right)^{r}.
\end{aligned}
\end{equation}
From the above inequality, we have
\begin{equation}
    \label{eq:dyson_truncation_err}
    \begin{aligned}
    \left\|\left(\mathcal{T}e^{-i\int_0^t H_W(s)\dd s}-S(\Delta)\right)\Pi_{[-\Lambda_0,\Lambda_0]}\right\| &\leq \sum_{k=\Delta}^{\infty} \frac{(\chi t)^k}{k!}\left(\frac{(\Lambda_0+k)!}{\Lambda_0!}\right)^r \\
    &\leq \sum_{k=\Delta}^{\infty} \frac{(\chi t(\Lambda_0+1)^r)^k}{(k!)^{1-r}} \\
    &\leq \frac{1}{(\Delta!)^{1-r}} \sum_{k=\Delta}^{\infty} \frac{1}{2^k} \\
    &=\frac{1}{2^{\Delta-1}(\Delta!)^{1-r}},
    \end{aligned}
\end{equation}
where in the second inequality we have used the fact that 
\begin{equation}
    \frac{(\Lambda_0+k)!}{\Lambda_0 ! k!}= \frac{\Lambda_0+1}{1}\frac{\Lambda_0+2}{2}\cdots \frac{\Lambda_0+k}{k}\leq (\Lambda_0+1)^k,
\end{equation}
and in the third inequality we have used $t\leq 1/(2\chi (\Lambda_0+1)^r)$.

Note that
\begin{equation}
    \overline{\Pi}_{(-\Lambda_0-\Delta,\Lambda_0+\Delta)}S(\Delta)\Pi_{[-\Lambda_0,\Lambda_0]} = 0
\end{equation}
because of \eqref{eq:H0_particle_number_change}. Therefore,
\begin{equation}
\begin{aligned}
    \left\|\overline{\Pi}_{(-\Lambda_0-\Delta,\Lambda_0+\Delta)}\mathcal{T}e^{-i\int_0^t H_W(s)\dd s}\Pi_{[-\Lambda_0,\Lambda_0]}\right\| &= \left\|\overline{\Pi}_{(-\Lambda_0-\Delta,\Lambda_0+\Delta)}\left(\mathcal{T}e^{-i\int_0^t H_W(s)\dd s}-S(\Delta)\right)\Pi_{[-\Lambda_0,\Lambda_0]}\right\| \\
    &\leq \frac{1}{2^{\Delta-1}(\Delta!)^{1-r}}.
\end{aligned}
\end{equation}
This finishes the proof.
\end{proof}

\subsection{The long-time leakage bound}
\label{sec:long_time_leakage_bound}

The long time bound is based on the following decomposition. 
\begin{lem}
\label{lem:decompose_time_evolution}
Let $P_j$, $\overline{P}_j$ be projection operators such that $P_j + \overline{P}_j = I$, $j=0,1,\ldots,J$. Then for any $0=t_0<t_1<\cdots <t_J=t$,
\begin{equation}
\label{eq:decompose_time_evolution_eq}
    \overline{P}_J e^{-itH} P_0 = \overline{P}_J \sum_{j=1}^{J}  e^{-i(t-t_j)H}\overline{P}_j \prod_{j'=0}^{j-1} \left(e^{-i(t_{j'+1}-t_{j'})H}P_{j'}\right),
\end{equation}
which implies
\begin{equation}
    \label{eq:error_accumulation}
    \|\overline{P}_J e^{-itH} P_0\| \leq \sum_{j=1}^J \|\overline{P}_j e^{-i(t_j-t_{j-1})H}P_{j-1}\|.
\end{equation}
\end{lem}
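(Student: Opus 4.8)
The plan is to prove \eqref{eq:decompose_time_evolution_eq} by a telescoping ``first-failure'' decomposition: insert a resolution of the identity $I=P_j+\overline{P}_j$ at each intermediate time $t_j$ and collect the terms in which the state has leaked out of the range of $P_j$ for the first time. Since $H$ is time-independent, write $U_j=e^{-i(t_j-t_{j-1})H}$ for $j=1,\dots,J$, so that $e^{-itH}=U_JU_{J-1}\cdots U_1$ and more generally $e^{-i(t-t_k)H}=U_JU_{J-1}\cdots U_{k+1}$, with the convention that the empty product (at $k=J$) equals $I$. For $0\le k\le J$ define the ``survived-through-step-$k$'' operator
\[
A_k=(U_JU_{J-1}\cdots U_{k+1})\,P_k\,(U_kP_{k-1})(U_{k-1}P_{k-2})\cdots(U_1P_0),
\]
so that $A_0=e^{-itH}P_0$ and $A_J=P_J(U_JP_{J-1})\cdots(U_1P_0)$.

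The key step is the one-step identity $A_{k-1}=A_k+B_k$ for $k=1,\dots,J$, where
\[
B_k=(U_JU_{J-1}\cdots U_{k+1})\,\overline{P}_k\,(U_kP_{k-1})\cdots(U_1P_0)=e^{-i(t-t_k)H}\,\overline{P}_k\prod_{j'=0}^{k-1}\!\big(e^{-i(t_{j'+1}-t_{j'})H}P_{j'}\big).
\]
This follows by writing the leftmost block of $A_{k-1}$ as $(U_J\cdots U_{k+1})U_k$ and inserting $I=P_k+\overline{P}_k$ between the two factors. Telescoping then gives $A_0=A_J+\sum_{k=1}^{J}B_k$; multiplying on the left by $\overline{P}_J$ and using $\overline{P}_JA_J=(\overline{P}_JP_J)(U_JP_{J-1})\cdots(U_1P_0)=0$ because $\overline{P}_JP_J=0$, one obtains exactly \eqref{eq:decompose_time_evolution_eq}.

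For the bound \eqref{eq:error_accumulation} I would take the operator norm of \eqref{eq:decompose_time_evolution_eq}, apply the triangle inequality, and estimate the $j$-th summand by grouping its operators as $\big(\overline{P}_Je^{-i(t-t_j)H}\big)\big(\overline{P}_je^{-i(t_j-t_{j-1})H}P_{j-1}\big)\big((U_{j-1}P_{j-2})\cdots(U_1P_0)\big)$: the first and last groups have norm at most $1$, being products of unitaries and orthogonal projections, which leaves $\|\overline{P}_je^{-i(t_j-t_{j-1})H}P_{j-1}\|$ as an upper bound for that term, and summing over $j$ yields \eqref{eq:error_accumulation}. The only real care needed anywhere is bookkeeping the order of the operator products when inserting the resolutions of the identity; there is no analytic content, so this is an ``obstacle'' in name only.
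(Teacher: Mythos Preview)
Your proof is correct and is essentially the same ``first-failure'' decomposition as the paper's: both insert $I=P_j+\overline{P}_j$ at each intermediate time and collect terms according to the first appearance of $\overline{P}_j$. Your telescoping via the operators $A_k$ is a slightly more explicit bookkeeping of what the paper sketches in words (or alternatively proves by induction on $J$), and your derivation of the norm bound is exactly the intended one.
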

At a high level, this lemma suggests that the total leakage (quantified by the spectral norm) is upper bounded by the sum of the leakage in each time step. 
This lemma can be easily proved by induction on $J$. A more intuitive way of proving it is to write
\begin{equation}
    e^{-itH} = \prod_{j'=0}^{J-1}\left((P_{j'+1}+\overline{P}_{j'+1})e^{-i(t_{j'+1}-t_{j'})H}\right),
\end{equation}
and expand the product on the right-hand side into a sum of terms, each of which is a string of $P_{j'}$ and $\overline{P}_{j'}$ interspersed with $e^{-i(t_{j'+1}-t_{j'})H}$. We then recombine these terms according to where the first $\overline{P}_{j'}$ appears, or if it does not show up at all. The sum of all terms for which the first $\overline{P}_{j'}$ appears in the $j$-th place is 
\begin{equation}
    e^{-i(t-t_j)H}\overline{P}_j \prod_{j'=1}^{j-1} \left(e^{-i(t_{j'+1}-t_{j'})H}P_{j'}\right)e^{-i(t_{1}-t_{0})H}.
\end{equation}
We then sum over $j$, and multiply $\overline{P}_J$ and $P_0$ to the left and right respectively, to get the right-hand side of \eqref{eq:decompose_time_evolution_eq}.

\begin{figure}
    \centering
    \includegraphics[width=0.4\textwidth]{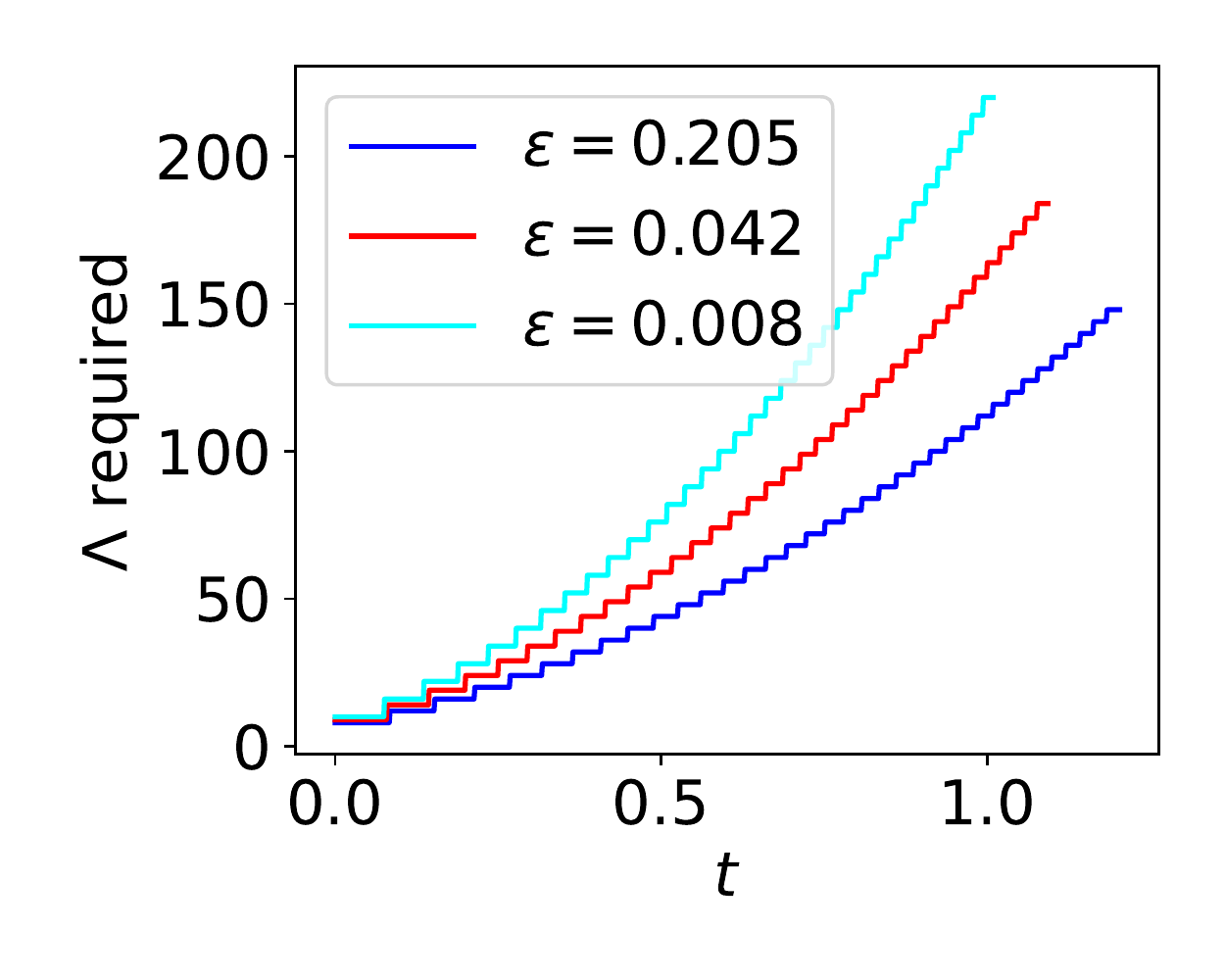}
    \includegraphics[width=0.4\textwidth]{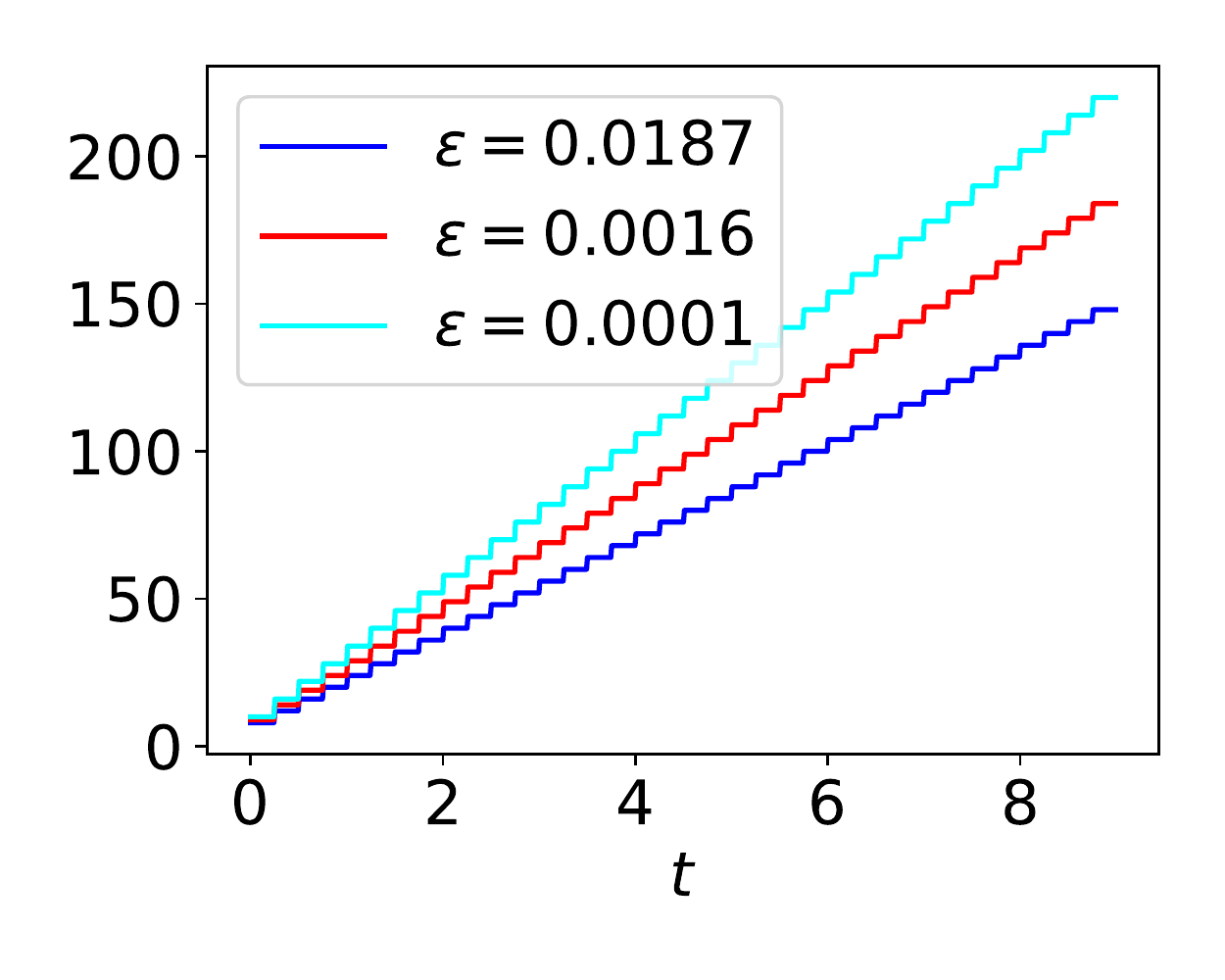}
    \caption{The truncation threshold $\Lambda(t)$ needed to keep the error below $\epsilon$ for boson-fermion coupling (left) and lattice gauge theories (right). Note that $\Lambda(t)$ grows quadratically for the former and linearly for the latter. The truncation threshold is obtained, according to \eqref{eq:choose_M(t)}, in the boson-fermion coupling setting for the Hubbard-Holstein model \eqref{eq:ham_fermi_hubbard} with $g=1$, and in the lattice gauge theory setting for the U(1) lattice gauge theory \eqref{eq:LGT_ham} with $g_B=g_{GM}=1/6$. 
    }
    \label{fig:quantum_number_growth}
\end{figure}

We now state our long-time leakage bound:
\begin{thm}[Long-time leakage bound]
\label{thm:long_time_bound}
Let $H=H_W+H_R$ be a Hamiltonian satisfying \eqref{eq:conditions_ham_general} with parameters $\chi$ and $r$. Defining
\begin{equation}
    \label{eq:choose_M(t)}
    \Lambda(t) = \Lambda_0 + \Big\lceil \frac{1}{\Delta-1}\left((\Lambda_0^{1-r}+2\chi \changed{|t|}(1-r)(\Delta-1))^{\frac{1}{1-r}}-\Lambda_0\right) \Big\rceil(\Delta-1)
\end{equation}
for any \changed{$t\in\mathbb{R}$} and integers $\Lambda_0\geq 0$ and $\Delta>1$, we have
\[
\left\|\overline{\Pi}_{[-\Lambda(t),\Lambda(t)]}e^{-itH}\Pi_{[-\Lambda_0,\Lambda_0]}\right\|\leq \frac{\lceil ((\Lambda_0^{1-r}+2\chi \changed{|t|}(1-r)(\Delta-1))^{\frac{1}{1-r}}-\Lambda_0)/(\Delta-1) \rceil}{2^{\Delta-1}(\Delta!)^{1-r}}.
\]
\end{thm}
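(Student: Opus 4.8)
The plan is to split the evolution over $[0,t]$ into $J$ short segments, apply the short-time leakage bound of Lemma~\ref{lem:short_time_bound} on each segment with an \emph{adaptively} growing truncation threshold, and glue the pieces together via the telescoping inequality \eqref{eq:error_accumulation} of Lemma~\ref{lem:decompose_time_evolution}. It suffices to treat $t>0$: the case $t<0$ follows by replacing $H$ with $-H$, which preserves \eqref{eq:conditions_ham_general} and hence all hypotheses, and then reinstating $|t|$.

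Writing $m=\Delta-1\ge 1$, I would define the threshold sequence $\Lambda_j=\Lambda_0+jm$ for $j=0,1,\dots,J$, with $J$ to be fixed below (so that $\Lambda_J$ is exactly the $\Lambda(t)$ of \eqref{eq:choose_M(t)}), and choose breakpoints $0=t_0<t_1<\dots<t_J=t$ with $0<t_j-t_{j-1}\le 1/(2\chi(\Lambda_{j-1}+1)^r)$ for each $j$; such a choice exists as long as $\sum_{j=1}^{J}1/(2\chi(\Lambda_{j-1}+1)^r)\ge t$, which will be the defining property of $J$. Applying Lemma~\ref{lem:decompose_time_evolution} with $P_j=\Pi_{[-\Lambda_j,\Lambda_j]}$ (so $\overline P_J=\overline{\Pi}_{[-\Lambda(t),\Lambda(t)]}$, $P_0=\Pi_{[-\Lambda_0,\Lambda_0]}$) gives
\[
\left\|\overline{\Pi}_{[-\Lambda(t),\Lambda(t)]}e^{-itH}\Pi_{[-\Lambda_0,\Lambda_0]}\right\|\le\sum_{j=1}^{J}\left\|\overline{\Pi}_{[-\Lambda_j,\Lambda_j]}e^{-i(t_j-t_{j-1})H}\Pi_{[-\Lambda_{j-1},\Lambda_{j-1}]}\right\|.
\]
In the $j$-th term, $\Lambda_j=\Lambda_{j-1}+m$ means $\overline{\Pi}_{[-\Lambda_j,\Lambda_j]}=\overline{\Pi}_{(-\Lambda_{j-1}-\Delta,\,\Lambda_{j-1}+\Delta)}$, and $t_j-t_{j-1}$ lies in the validity window of Lemma~\ref{lem:short_time_bound} with initial threshold $\Lambda_{j-1}$, so each term is at most $1/(2^{\Delta-1}(\Delta!)^{1-r})$ and the sum is at most $J/(2^{\Delta-1}(\Delta!)^{1-r})$. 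Everything then reduces to identifying $J$.

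To pin down $J$, I would take the least integer with $\sum_{j=1}^{J}(\Lambda_{j-1}+1)^{-r}\ge 2\chi t$. Since $x\mapsto(x+1)^{-r}$ is decreasing, on each length-$m$ interval $[\Lambda_{j-1},\Lambda_j]$ it is bounded above by its left-endpoint value $(\Lambda_{j-1}+1)^{-r}$, so
\[
\sum_{j=1}^{J}(\Lambda_{j-1}+1)^{-r}=\frac1m\sum_{j=1}^{J}m\,(\Lambda_{j-1}+1)^{-r}\ge\frac1m\int_{\Lambda_0}^{\Lambda_J}(x+1)^{-r}\,dx=\frac{(\Lambda_J+1)^{1-r}-(\Lambda_0+1)^{1-r}}{m(1-r)}.
\]
Hence $\sum_{j=1}^{J}(\Lambda_{j-1}+1)^{-r}\ge 2\chi t$ is guaranteed once $\Lambda_J+1\ge\left((\Lambda_0+1)^{1-r}+2\chi t\,m(1-r)\right)^{1/(1-r)}$, and solving this for $J=(\Lambda_J-\Lambda_0)/m$ and rounding up produces exactly the index in \eqref{eq:choose_M(t)} once the $\Lambda_0$-versus-$(\Lambda_0+1)$ shifts are absorbed into the ceiling. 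Substituting this $J$ into $J/(2^{\Delta-1}(\Delta!)^{1-r})$ then yields the stated estimate.

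I expect the main obstacle to be exactly this last piece of bookkeeping: confirming that the adaptive partition, whose thresholds jump in increments of $m=\Delta-1$, can be fit into $[0,t]$ with the number of segments claimed in \eqref{eq:choose_M(t)} — i.e. that the discreteness of the grid $\{\Lambda_j\}$ together with the sum-to-integral slack are paid for by the ceiling — and checking that using each segment's \emph{initial} threshold $\Lambda_{j-1}$ rather than a later one is consistent both with the validity window of Lemma~\ref{lem:short_time_bound} and with the monotonicity $\Lambda_0\le\Lambda_1\le\dots\le\Lambda_J$ that the telescoping inequality needs. The genuinely analytic content — the interaction-picture rewriting, the truncated Dyson series, the $\pm1$-locality of $H_W$ — is already sealed inside Lemmas~\ref{lem:short_time_bound} and~\ref{lem:decompose_time_evolution}, so no further estimates of that type are required.
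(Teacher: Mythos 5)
Your proposal is the paper's proof: the same grid $\Lambda_j=\Lambda_0+j(\Delta-1)$, the same adaptive steps $t_j-t_{j-1}\le 1/(2\chi(\Lambda_{j-1}+1)^r)$, the telescoping inequality of Lemma~\ref{lem:decompose_time_evolution}, and the short-time bound of Lemma~\ref{lem:short_time_bound} on each segment, so the substance is correct and identical in approach. The one point you flag is, however, a real wrinkle rather than bookkeeping that the ceiling absorbs: your (rigorous) comparison $\sum_{j}(\Lambda_{j-1}+1)^{-r}\ge\frac{1}{\Delta-1}\int_{\Lambda_0}^{\Lambda_J}(x+1)^{-r}\,\dd x$ yields the sufficient condition $(\Lambda_J+1)^{1-r}\ge(\Lambda_0+1)^{1-r}+2\chi|t|(1-r)(\Delta-1)$, and the resulting $J$ can exceed the index in \eqref{eq:choose_M(t)} by an amount growing with $t$ (for $r=1/2$, $\Lambda_0=0$, $\Delta=2$ one needs $J\approx(\chi t)^2+2\chi t$ versus the stated $\lceil(\chi t)^2\rceil$, which one can check is genuinely too small there since $\sum_{j\le J}j^{-1/2}\le 2\sqrt{J}-1<2\chi t$); so strictly you prove the theorem with $\Lambda_0$ replaced by $\Lambda_0+1$ inside \eqref{eq:choose_M(t)}. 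The paper's own justification of \eqref{eq:Tj_lower_bound} has the same off-by-one-interval issue, and the discrepancy is immaterial for the asymptotic statement actually used downstream (Theorem~\ref{thm:long_time_bound2}), so your version is, if anything, the more careful one.
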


\changed{Below we will focus on the case of $t>0$. The $t<0$ case can be dealt with in the same way because of the reason explained in the proof of Lemma \ref{lem:short_time_bound}.}

In Figure~\ref{fig:quantum_number_growth} we plot the truncation threshold $\Lambda(t)$ needed to ensure the leakage $\left\|\overline{\Pi}_{[-\Lambda(t),\Lambda(t)]}e^{-itH}\Pi_{[-\Lambda_0,\Lambda_0]}\right\|\leq \epsilon$, for the boson-fermion coupling setting and the lattice gauge theory setting. We can see that $\Lambda(t)$ grows quadratically with time for the former and linearly for the latter. Moreover, very small leakage can be achieved by only slightly increasing $\Lambda(t)$. This follows from the exponential suppression of leakage that we will describe in Theorem~\ref{thm:long_time_bound2}.

The basic idea of the proof is to partition the time evolution into small segments, and apply the short-time bound in Lemma~\ref{lem:short_time_bound} to each segment. We denote by $T_j$, for $j\geq 0$, the intermediate times where we make the partition. First we define the instantaneous quantum number
\begin{equation}
\label{eq:Lambda_j_def}
    \Lambda_j = \Lambda_0 + j(\Delta-1),
\end{equation}
and then choose $T_j$ to be
\begin{equation}
\label{eq:choose_Tj}
    T_j = T_{j-1} + \frac{1}{2\chi (\Lambda_{j-1}+1)^{r}}.
\end{equation}
From this definition we have
\begin{equation}
    \label{eq:Tj_lower_bound}
    T_j \geq \frac{1}{2\chi(\Delta-1)(1-r)}\left((\Lambda_0+j(\Delta-1))^{1-r}-\Lambda_0^{1-r}\right),
\end{equation}
which can be proved using the inequality
\begin{equation}
    \frac{1}{(\Lambda_{j-1}+1)^{r}} \geq \frac{1}{\Delta-1}\int_{\Lambda_0+j(\Delta-1)}^{\Lambda_0+(j+1)(\Delta-1)}\frac{\dd w}{w^r}.
\end{equation}

To establish the long-time leakage bound for arbitrary time $t$, we first prove the following lemma:
\begin{lem}
\label{lem:bound_at_Tj}
Let $H=H_W+H_R$ be a Hamiltonian satisfying \eqref{eq:conditions_ham_general} with parameters $\chi$ and $r$. Given integers $\Lambda_0\geq 0$ and $\Delta>0$, define $\Lambda_j$ and $T_j$ as in \eqref{eq:Lambda_j_def} and \eqref{eq:choose_Tj}. For any $T_{J-1}< t\leq T_J$ with integer $J\geq 0$,
\begin{equation}
\label{eq:bound_at_Tj}
    \begin{aligned}
    \left\|\overline{\Pi}_{[-\Lambda_J,\Lambda_J]}e^{-itH}\Pi_{[-\Lambda_0,\Lambda_0]}\right\| \leq \frac{J}{2^{\Delta-1}(\Delta!)^{1-r}}.
    \end{aligned}
\end{equation}
\end{lem}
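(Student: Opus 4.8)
The plan is to slice the exact evolution up to time $t$ at the intermediate times $T_0 = 0 < T_1 < \cdots < T_{J-1} < t$ and apply the short-time leakage bound of Lemma~\ref{lem:short_time_bound} segment by segment, letting the instantaneous cutoff $\Lambda_{j-1}$ play the role of ``$\Lambda_0$'' on the $j$-th segment while the window widens by exactly $\Delta-1$ each step.

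Concretely, I would first invoke Lemma~\ref{lem:decompose_time_evolution} with the projectors $P_j = \Pi_{[-\Lambda_j,\Lambda_j]}$ (so that $\overline{P}_j = \overline{\Pi}_{[-\Lambda_j,\Lambda_j]}$) and the time points $t_0 = 0$, $t_j = T_j$ for $1 \le j \le J-1$, and $t_J = t$; this is admissible since $0 = T_0 < T_1 < \cdots < T_{J-1} < t \le T_J$. Equation~\eqref{eq:error_accumulation} then gives
\[
\left\|\overline{\Pi}_{[-\Lambda_J,\Lambda_J]}e^{-itH}\Pi_{[-\Lambda_0,\Lambda_0]}\right\| \le \sum_{j=1}^{J} \left\|\overline{\Pi}_{[-\Lambda_j,\Lambda_j]}\, e^{-i(t_j - t_{j-1})H}\, \Pi_{[-\Lambda_{j-1},\Lambda_{j-1}]}\right\|,
\]
so it suffices to bound each summand by $1/(2^{\Delta-1}(\Delta!)^{1-r})$.

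The second step is the bookkeeping that lets Lemma~\ref{lem:short_time_bound} apply to the $j$-th summand with parameters $\Lambda_0 \mapsto \Lambda_{j-1}$ and the given $\Delta$. Since $\Lambda_j = \Lambda_{j-1} + (\Delta-1)$ by \eqref{eq:Lambda_j_def}, the integers excluded by $\overline{\Pi}_{(-\Lambda_{j-1}-\Delta,\, \Lambda_{j-1}+\Delta)}$ are exactly those of absolute value at least $\Lambda_{j-1} + \Delta = \Lambda_j + 1$, i.e.\ $\overline{\Pi}_{(-\Lambda_{j-1}-\Delta,\, \Lambda_{j-1}+\Delta)} = \overline{\Pi}_{[-\Lambda_j,\Lambda_j]}$, which matches the output projector in the $j$-th summand. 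For the time restriction, $t_j - t_{j-1} = T_j - T_{j-1} = 1/(2\chi(\Lambda_{j-1}+1)^r)$ exactly when $j \le J-1$ by \eqref{eq:choose_Tj}, while $t_J - t_{J-1} = t - T_{J-1} \le T_J - T_{J-1} = 1/(2\chi(\Lambda_{J-1}+1)^r)$ because $t \le T_J$; in all cases $0 \le t_j - t_{j-1} \le 1/(2\chi(\Lambda_{j-1}+1)^r)$, which is precisely the hypothesis of Lemma~\ref{lem:short_time_bound}. Hence every summand is at most $1/(2^{\Delta-1}(\Delta!)^{1-r})$, and summing the $J$ terms yields the claim. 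The degenerate case $J=0$, i.e.\ $t = T_0 = 0$, is immediate since the left-hand side is then $\|\overline{\Pi}_{[-\Lambda_0,\Lambda_0]}\Pi_{[-\Lambda_0,\Lambda_0]}\| = 0$.

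I do not expect a serious obstacle here; the only thing requiring care is the interval arithmetic — ensuring the window $[-\Lambda_{j-1},\Lambda_{j-1}]$ widens by exactly $\Delta-1$ per segment so that the ``overshoot by $\Delta$'' in Lemma~\ref{lem:short_time_bound} lands precisely on the threshold $\Lambda_j$ used on the next segment — together with the check that the final, possibly shorter, segment still meets the time restriction. Both are forced by the definitions \eqref{eq:Lambda_j_def} and \eqref{eq:choose_Tj}. The subsequent Theorem~\ref{thm:long_time_bound} then follows by choosing $J$ as a function of $t$ with $t \le T_J$ and reading off $\Lambda_J$ through the lower bound \eqref{eq:Tj_lower_bound} on $T_J$.
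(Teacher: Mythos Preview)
Your proposal is correct and follows essentially the same approach as the paper's proof: apply Lemma~\ref{lem:decompose_time_evolution} with $P_j=\Pi_{[-\Lambda_j,\Lambda_j]}$ and $t_j=T_j$ (with $t_J=t$), then bound each segment via Lemma~\ref{lem:short_time_bound}. Your write-up is simply more explicit about the interval arithmetic $\overline{\Pi}_{(-\Lambda_{j-1}-\Delta,\Lambda_{j-1}+\Delta)}=\overline{\Pi}_{[-\Lambda_j,\Lambda_j]}$ and the final shorter segment, which the paper leaves implicit.
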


\begin{proof}
We choose $t_j=T_j$ for $j=0,1,\ldots,J-1$, and $t_J=t$.
By Eq.\ \eqref{eq:error_accumulation} in Lemma~\ref{lem:decompose_time_evolution}, we have
\begin{equation}
% \label{eq:long_time_bound_intermediate}
    \begin{aligned}
    \left\|\overline{\Pi}_{[-\Lambda_J,\Lambda_J]}e^{-itH}\Pi_{[-\Lambda_0,\Lambda_0]}\right\| \leq \sum_{j=1}^J \left\|\overline{\Pi}_{[-\Lambda_j,\Lambda_j]}e^{-i(t_j-t_{j-1})H}\Pi_{[-\Lambda_{j-1},\Lambda_{j-1}]}\right\|.
    \end{aligned}
\end{equation}
Because of \eqref{eq:choose_Tj} each $t_j-t_{j-1}$ is short enough for us to apply our short-time bound Lemma~\ref{lem:short_time_bound}. This completes the proof.
\end{proof}

With this lemma we can prove the theorem by appropriately choosing $J$.
\begin{proof}[Proof of Theorem~\ref{thm:long_time_bound}]
We choose $J$ to be the first integer that makes $T_J\geq t$. By \eqref{eq:Tj_lower_bound}, we have \begin{equation}
    J=\Big\lceil \frac{1}{\Delta-1}\left((\Lambda_0^{1-r}+2\chi t(1-r)(\Delta-1))^{\frac{1}{1-r}}-\Lambda_0\right) \Big\rceil.
\end{equation}
The claimed bound then follows from Lemma~\ref{lem:bound_at_Tj}.
\end{proof}

Theorem~\ref{thm:long_time_bound} might not be very straightforward to apply in many situations because it involves an extra parameter $\Delta$. A more versatile version would be
\begin{thm}
\label{thm:long_time_bound2}
Let $H=H_W+H_R$ be a Hamiltonian satisfying \eqref{eq:conditions_ham_general} with parameters $\chi$ and $r$. For any $t\geq 0$ and integers $\Lambda\geq\Lambda_0\geq 0$,
\[
\left\|\overline{\Pi}_{[-\Lambda,\Lambda]}e^{-itH}\Pi_{[-\Lambda_0,\Lambda_0]}\right\|\leq \poly(\chi t,\Lambda_0,\Lambda)\exp\left(-\Omega\left(\frac{\Lambda^{1-r}-\Lambda_0^{1-r}}{2(1-r)\chi t+1}\right)\right).
\]
\end{thm}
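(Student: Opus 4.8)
The plan is to obtain Theorem~\ref{thm:long_time_bound2} from Theorem~\ref{thm:long_time_bound} by optimizing over the auxiliary parameter $\Delta$ and then cleaning up the resulting expression. First I would dispose of a degenerate regime: if $\Lambda^{1-r}-\Lambda_0^{1-r}<2(1-r)\chi t+1$ (which in particular includes $t=0$ and $\Lambda=\Lambda_0$), then the exponent in the claimed bound is less than $1$, so $\exp(-\Omega(\cdot))$ is at least a positive absolute constant; since the left-hand side is always at most $\|\overline{\Pi}_{[-\Lambda,\Lambda]}\|\,\|e^{-itH}\|\,\|\Pi_{[-\Lambda_0,\Lambda_0]}\|\le 1$, the inequality holds provided the polynomial prefactor is taken at least as large as the reciprocal of that constant. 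Hence it suffices to treat $\Lambda^{1-r}-\Lambda_0^{1-r}\ge 2(1-r)\chi t+1$; note this forces $t>0$ and $\Lambda>\Lambda_0$.

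In that regime I would set $d:=\big\lfloor (\Lambda^{1-r}-\Lambda_0^{1-r})/(2(1-r)\chi t+1)\big\rfloor$, which is $\ge 1$, and invoke Theorem~\ref{thm:long_time_bound} with $\Delta=d+1$. Writing $B:=(\Lambda_0^{1-r}+2(1-r)\chi t\,d)^{1/(1-r)}$, that theorem bounds the leakage at threshold $\Lambda(t)=\Lambda_0+\lceil(B-\Lambda_0)/d\rceil\,d$ by $\lceil(B-\Lambda_0)/d\rceil/\big(2^{d}\,((d+1)!)^{1-r}\big)$, and one checks $\Lambda_0\le B$ and $\Lambda(t)\le B+d$. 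The heart of the argument is the inequality $B+d\le\Lambda$: from the choice of $d$ we get $B^{1-r}=\Lambda_0^{1-r}+2(1-r)\chi t\,d\le\Lambda^{1-r}-d\le\Lambda^{1-r}-d^{1-r}$ (using $d^{1-r}\le d$, valid since $d\ge 1$ and $1-r\le 1$), and then subadditivity of $x\mapsto x^{1-r}$, namely $\Lambda^{1-r}=\big((\Lambda-d)+d\big)^{1-r}\le(\Lambda-d)^{1-r}+d^{1-r}$, yields $B^{1-r}\le(\Lambda-d)^{1-r}$ and hence $B\le\Lambda-d$. This is precisely where $r<1$ enters.

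Given $\Lambda(t)\le\Lambda$, the window $[-\Lambda(t),\Lambda(t)]$ sits inside $[-\Lambda,\Lambda]$, so $\overline{\Pi}_{[-\Lambda,\Lambda]}=\overline{\Pi}_{[-\Lambda,\Lambda]}\,\overline{\Pi}_{[-\Lambda(t),\Lambda(t)]}$ and therefore $\big\|\overline{\Pi}_{[-\Lambda,\Lambda]}e^{-itH}\Pi_{[-\Lambda_0,\Lambda_0]}\big\|\le\big\|\overline{\Pi}_{[-\Lambda(t),\Lambda(t)]}e^{-itH}\Pi_{[-\Lambda_0,\Lambda_0]}\big\|$, which Theorem~\ref{thm:long_time_bound} bounds by $\lceil(B-\Lambda_0)/d\rceil/2^{d}\le\big((\Lambda-\Lambda_0)/d+1\big)/2^{d}\le(\Lambda+1)/2^{d}$. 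Finally $d\ge(\Lambda^{1-r}-\Lambda_0^{1-r})/(2(1-r)\chi t+1)-1$ gives $2^{-d}\le 2\exp\!\big(-\ln 2\cdot(\Lambda^{1-r}-\Lambda_0^{1-r})/(2(1-r)\chi t+1)\big)$, so the target leakage is at most $2(\Lambda+1)\exp\!\big(-\ln 2\cdot(\Lambda^{1-r}-\Lambda_0^{1-r})/(2(1-r)\chi t+1)\big)$, which is of the stated form $\poly(\chi t,\Lambda_0,\Lambda)\exp(-\Omega(\cdot))$; I would note that this same formula also covers the degenerate regime from the first paragraph, so the two cases merge.

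The routine parts are the projector/norm manipulations and the rewriting of the segment-count bound of Theorem~\ref{thm:long_time_bound} as $\poly\cdot\exp(-\Omega(\cdot))$. The one step needing genuine care is the choice of $d$ in the second paragraph: it must be large enough that $2^{d}$ is exponentially large in the ratio $(\Lambda^{1-r}-\Lambda_0^{1-r})/(2(1-r)\chi t+1)$, yet small enough that the threshold $\Lambda(t)$ produced by Theorem~\ref{thm:long_time_bound} does not exceed the target $\Lambda$ — balancing these is exactly what the sublinearity of $\Lambda\mapsto\Lambda^{1-r}$ makes possible, and it is essentially the only substantive content added on top of Theorem~\ref{thm:long_time_bound}.
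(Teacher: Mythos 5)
Your proposal is correct and follows essentially the same route as the paper: choose $\Delta=\lfloor(\Lambda^{1-r}-\Lambda_0^{1-r})/(2(1-r)\chi t+1)\rfloor+1$, dispose of the $\Delta=1$ case trivially, verify $\Lambda(t)\leq\Lambda$ via the sub/superadditivity of fractional powers (your subadditivity of $x\mapsto x^{1-r}$ is the $p$-th-root form of the paper's $a^p+b\leq(a+b)^p$), and rewrite the resulting bound as $\poly\cdot\exp(-\Omega(\cdot))$. The only cosmetic differences are that you drop the $(\Delta!)^{1-r}$ factor (which only weakens the bound harmlessly) and leave the easy check $\Lambda\geq d$ implicit.
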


\begin{proof}
In Theorem~\ref{thm:long_time_bound} we choose $\Delta$ so that 
\begin{equation}
    \Delta= \Big\lfloor\frac{\Lambda^{1-r}-\Lambda_0^{1-r}}{2(1-r)\chi t+1}\Big\rfloor + 1.
\end{equation}
If $\Delta=1$ then the claimed bound holds trivially since $\left\|\overline{\Pi}_{[-\Lambda,\Lambda]}e^{-itH}\Pi_{[-\Lambda_0,\Lambda_0]}\right\|\leq 1$. If $\Delta>1$ we use Theorem~\ref{thm:long_time_bound}.
For the $\Lambda(t)$ defined in Theorem~\ref{thm:long_time_bound} we have
\begin{equation}
    % \label{eq:choose_M(t)}
    \begin{aligned}
    \Lambda(t) &= \Lambda_0 + \Big\lceil \frac{1}{\Delta-1}\left((\Lambda_0^{1-r}+2\chi t(1-r)(\Delta-1))^{\frac{1}{1-r}}-\Lambda_0\right) \Big\rceil(\Delta-1) \\
    &\leq (\Lambda_0^{1-r}+2\chi t(1-r)(\Delta-1))^{\frac{1}{1-r}} + \Delta-1 \\
    &\leq (\Lambda_0^{1-r}+2\chi t(1-r)(\Delta-1)+(\Delta-1))^{\frac{1}{1-r}} \\
    &\leq \Lambda,
    \end{aligned}
\end{equation}
where in the third line we have used the 
inequality that $a^p+b\leq (a+b)^p$ when $a\geq 0$, $p\geq 1$ and $b$ (to be chosen as $\Delta-1$) is a non-negative integer.
Using the fact that $2^{1-\Delta}\left(\Delta!\right)^{-1/2}=e^{-\Omega(\Delta)}$, the claim follows immediately from Theorem~\ref{thm:long_time_bound}.
\end{proof}

If we want to ensure that truncating at a threshold $\Lambda$ has an error $\left\|\overline{\Pi}_{[-\Lambda,\Lambda]}e^{-itH}\Pi_{[-\Lambda_0,\Lambda_0]}\right\|\leq \epsilon$, then by Theorem~\ref{thm:long_time_bound2} we can choose
\[
\Lambda^{1-r} = \Lambda_0^{1-r} + \wt{\Or}((\chi t+1)\polylog(\epsilon^{-1})).
\]
This is the scaling given in (3) of the main text.

\section{Truncating the Hamiltonian}
\label{sec:truncating_the_hamiltonian}

In this section we consider the problem of replacing an unbounded Hamiltonian $H$, such as one describing boson-fermion interactions or lattice gauge theories, with a bounded Hamiltonian, while keeping the error in time evolution small. More precisely, we want to construct some bounded $\wt{H}$ such that
\begin{equation}
\label{eq:ham_truncation_error}
    \left\|(e^{-it\wt{H}}-e^{-itH})\Pi^{\mathrm{all}}_{[-\Lambda_0,\Lambda_0]}\right\|
\end{equation}
is sufficiently small.

In the previous section we have focused on a single bosonic mode or gauge link, but here the truncation needs to be performed for every bosonic mode or gauge link, and we assume there are $\mathcal{N}$ of them in the system.
To simplify the discussion, we use $\nu$ to index either bosonic modes or gauge links, replacing the indices $\alpha$ and $(x,n)$. Therefore we have 
\begin{equation}
    \Pi_{S}^{\mathrm{all}} = \prod_{\nu=1}^{\mathcal{N}} \Pi_{S}^{(\nu)}.
\end{equation}
Note also that all projection operators $\Pi_S^{(\nu)}$ commute with each other.
Then for each $\nu$, there is a decomposition of the Hamiltonian $H=H_W^{(\nu)}+H_R^{(\nu)}$.

We will establish the following bound.
\begin{thm}[Hamiltonian truncation]
\label{thm:truncate_ham}
Let $H$ be a Hamiltonian such that $H=H_W^{(\nu)}+H_R^{(\nu)}$ satisfies \eqref{eq:conditions_ham_general} with parameters $\chi$ and $r$ for every $1\leq\nu\leq\mathcal{N}$. Assume that all projection operators $\Pi^{(\nu)}_\lambda$ commute with each other. For any integers $\wt{\Lambda}\geq\Lambda_0\geq 0$, define
\begin{equation}
\label{eq:truncate_ham_in_thm}
    \wt{H} = \Pi^{\mathrm{all}}_{[-\wt{\Lambda},\wt{\Lambda}]}H \Pi^{\mathrm{all}}_{[-\wt{\Lambda},\wt{\Lambda}]}.
\end{equation}
Then for any \changed{$t\in\mathbb{R}$},
\begin{equation}
    \left\|(e^{-it\wt{H}}-e^{-itH})\Pi^{\mathrm{all}}_{[-\Lambda_0,\Lambda_0]}\right\|\leq \sqrt{\mathcal{N}}\mathcal{A}(\wt{\Lambda})\poly(\chi \changed{|t|},\Lambda_0,\wt{\Lambda})\exp\left(-\Omega\left(\frac{\wt{\Lambda}^{1-r}-\Lambda_0^{1-r}}{4(1-r)\chi \changed{|t|}+1}\right)\right),
\end{equation}
where
\begin{equation}
    \mathcal{A}(\Lambda) = \left\|\left[H,\Pi^{\mathrm{all}}_{[-\Lambda,\Lambda]}H\Pi^{\mathrm{all}}_{[-\Lambda,\Lambda]}\right]\right\|.
\end{equation}
\end{thm}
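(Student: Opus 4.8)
The plan is to reduce the Hamiltonian-truncation error to quantities controlled by the long-time leakage bound (Theorem~\ref{thm:long_time_bound2}) together with a Trotter-type expansion of the comparison unitary. Writing $U(t)=e^{-itH}$ and $\wt U(t)=e^{-it\wt H}$, the first step is the elementary observation that $\|(\wt U(t)-U(t))\Pi^{\mathrm{all}}_{[-\Lambda_0,\Lambda_0]}\| = \|(I-U(t)^\dagger\wt U(t))\Pi^{\mathrm{all}}_{[-\Lambda_0,\Lambda_0]}\|$, so it suffices to show that $V(t):=U(-t)\wt U(t)$ is close to the identity on the low-quantum-number subspace. Differentiating, $V(t)$ is generated by $U(-t)(\wt H-H)U(t)$, and the key structural fact is that $(\wt H-H)\Pi^{\mathrm{all}}_{[-\wt\Lambda+1,\wt\Lambda-1]}=0$: the truncated and untruncated Hamiltonians act identically on any state all of whose local quantum numbers lie in $[-\wt\Lambda+1,\wt\Lambda-1]$. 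Hence $(\wt H-H)U(t)\Pi^{\mathrm{all}}_{[-\Lambda_0,\Lambda_0]}$ is nonzero only on the "leaked" part of the evolved state, whose norm is bounded — after a union bound over the $\mathcal{N}$ modes/links, replacing $\epsilon$ by $\epsilon/\mathcal N$ — by $\sqrt{\mathcal N}$ times the single-site leakage from Theorem~\ref{thm:long_time_bound2}.

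The second step is to convert this into an actual bound on $\|(I-V(t))\Pi^{\mathrm{all}}_{[-\Lambda_0,\Lambda_0]}\|$. The naive Duhamel estimate $\int_0^t \|(\wt H-H)U(s)\Pi^{\mathrm{all}}_{[-\Lambda_0,\Lambda_0]}\|\,\dd s$ would work if $\wt H-H$ were bounded on the relevant subspace, but it is not: on the leaked part its norm grows with $\wt\Lambda$ (polynomially, $\poly(\wt\Lambda)$), so we pick up a factor $\mathcal A(\wt\Lambda)$-type growth. This is exactly why the statement carries the prefactor $\mathcal A(\wt\Lambda)=\|[H,\wt H]\|$: the suggested route (and the one hinted at in the main-text preview, via \cite[Eq.~(3.4)]{suzuki1985decomposition}) is to expand $V(t)$ using the Trotter-error identity, which expresses the difference of the two evolutions as an integral of the commutator $[H,\wt H]$ conjugated by the respective propagators, and then insert $\Pi^{\mathrm{all}}_{[-\wt\Lambda,\wt\Lambda]}$ resolutions so that each conjugated commutator acts on a near-the-threshold state and can be estimated by $\mathcal A(\wt\Lambda)$ times a leakage factor. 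One must be careful that $\wt H$ is a well-behaved generator; since $\wt H=\Pi^{\mathrm{all}}_{[-\wt\Lambda,\wt\Lambda]}H\Pi^{\mathrm{all}}_{[-\wt\Lambda,\wt\Lambda]}$ is bounded (all local terms truncated), $e^{-it\wt H}$ is a genuine strongly continuous unitary group and all the manipulations are legitimate.

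The third step is the bookkeeping: collect the $\sqrt{\mathcal N}$ from the union bound, the $\mathcal A(\wt\Lambda)$ from the commutator norm, and the $\poly(\chi|t|,\Lambda_0,\wt\Lambda)$ polynomial prefactor and $\exp(-\Omega((\wt\Lambda^{1-r}-\Lambda_0^{1-r})/(4(1-r)\chi|t|+1)))$ exponential tail from Theorem~\ref{thm:long_time_bound2}. The denominator $4(1-r)\chi|t|+1$ rather than $2(1-r)\chi|t|+1$ appears because the Trotter-error expansion effectively runs two propagators — one for $H$ and one for $\wt H$ — over the interval, so the leakage is incurred over an effective time roughly doubled; tracking this factor of two carefully is a routine but necessary adjustment. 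The $t<0$ case is handled exactly as in the proof of Lemma~\ref{lem:short_time_bound}, by replacing $H\to -H$, which preserves the structural conditions \eqref{eq:conditions_ham_general}, whence the $|t|$ throughout.

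The main obstacle I expect is the second step: controlling the unboundedness of $\wt H-H$ on the leaked subspace. One cannot simply say "the leakage is small" and be done, because the operator one is integrating has norm growing with $\wt\Lambda$, and a crude bound would force $\wt\Lambda$ to grow faster than $\polylog(\epsilon^{-1})$, defeating the whole point. The resolution — getting the growth to enter only through the single polynomial factor $\mathcal A(\wt\Lambda)$, which is itself $\poly(\mathcal N,\wt\Lambda)$ and therefore harmless against the exponential suppression — requires the Trotter-error representation and a careful placement of threshold projectors so that every appearance of the unbounded operator is sandwiched against a leakage factor. Verifying that these projector insertions are valid (i.e.\ that the commuting-projector hypothesis $[\Pi^{(\nu)}_\lambda,\Pi^{(\nu')}_{\lambda'}]=0$ is actually used where needed, and that $(\wt H-H)\Pi^{\mathrm{all}}_{[-\wt\Lambda+1,\wt\Lambda-1]}=0$ holds with the stated margin of $1$) is where the real care lies.
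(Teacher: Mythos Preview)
Your proposal is correct and follows essentially the same route as the paper. The paper's proof uses exactly the Suzuki Trotter-error identity \eqref{eq:first_order_trotter_error_exact} for $e^{it\wt H}e^{-itH}$, kills the $e^{-it(H-\wt H)}$ piece via $(H-\wt H)\Pi^{\mathrm{all}}_{[-\Lambda_0,\Lambda_0]}=0$, inserts $\Pi^{\mathrm{all}}_{[-\wt\Lambda+2,\wt\Lambda-2]}+\overline{\Pi}^{\mathrm{all}}_{[-\wt\Lambda+2,\wt\Lambda-2]}$ next to $[\wt H,H]$, uses the second structural identity $[H,\wt H]\Pi^{\mathrm{all}}_{[-\wt\Lambda+2,\wt\Lambda-2]}=0$ to kill one term, and bounds the survivor by $\mathcal A(\wt\Lambda)$ times the leakage $\|\overline{\Pi}^{\mathrm{all}}_{[-\wt\Lambda+2,\wt\Lambda-2]}e^{-is_2 H}\Pi^{\mathrm{all}}_{[-\Lambda_0,\Lambda_0]}\|$, handled by Lemma~\ref{lem:union_bound} and Theorem~\ref{thm:long_time_bound2}.

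Two small corrections. First, you should make explicit the commutator identity $[H,\wt H]\Pi^{\mathrm{all}}_{[-\wt\Lambda+2,\wt\Lambda-2]}=0$ (margin $2$, not $1$), since this is precisely what lets you drop the projected-in piece after inserting the resolution of identity; you gesture at it but it deserves a line. Second, your explanation of the $4$ versus $2$ in the exponential denominator is not right: in the paper's argument only a single $H$-propagator $e^{-is_2 H}$ feeds into the leakage bound, and the proof in fact produces $2(1-r)\chi t+1$ in the denominator; the $4$ in the theorem statement is just a looser constant absorbed into the $\Omega(\cdot)$, not a genuine doubling of effective time.
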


We recall that $r=1/2$ for boson-fermion coupling and $r=0$ for lattice gauge theories.
We also note that for both boson-fermion coupling and the lattice gauge theories, $\mathcal{A}(\Lambda)$ can be bounded by a polynomial of the Hamiltonian coefficients and $\Lambda$. This is because
\begin{equation}
    \left[H,\Pi^{\mathrm{all}}_{[-\Lambda,\Lambda]}H\Pi^{\mathrm{all}}_{[-\Lambda,\Lambda]}\right] = (H\Pi^{\mathrm{all}}_{[-\Lambda,\Lambda]})^2-(\Pi^{\mathrm{all}}_{[-\Lambda,\Lambda]}H)^2,
\end{equation}
and in all examples we discussed in Section~\ref{sec:motivating_examples}, the norm of $H\Pi^{\mathrm{all}}_{[-\Lambda,\Lambda]}$ is bounded by a function that is linear in all coefficients, and linear or quadratic
in $\Lambda$ for boson-fermion coupling and lattice gauge theories respectively. 

\begin{rem}
\label{rem:ham_truncation_models}
For the boson-fermion coupling in \eqref{eq:ham_fermion_boson_general}, to ensure that the error in truncating the time evolution \eqref{eq:ham_truncation_error} is below $\epsilon$, Theorem~\ref{thm:truncate_ham} implies that a threshold of 
\begin{equation}
    \label{eq:truncation_threshold_boson_fermion}
    \wt{\Lambda} = \left(\sqrt{\Lambda_0}+\wt{\Or}((\chi \changed{|t|}+1)\polylog(\mathrm{coef},\Lambda_0,\changed{|t|},\epsilon^{-1}))\right)^2
\end{equation}
suffices, where $\mathrm{coef}$ includes all the coefficients in the model \eqref{eq:ham_fermion_boson_general} as well as the number of bosonic and fermionic modes. 

Similarly, for U(1) and SU(2) lattice gauge theories, it suffices to truncate the electric field and total angular momentum, for the two situations respectively, at
\begin{equation}
    \label{eq:truncation_threshold_LGT}
    \wt{\Lambda} = \Lambda_0+\wt{\Or}((\chi \changed{|t|}+1)\polylog(\mathrm{coef},\Lambda_0,\changed{|t|},\epsilon^{-1})),
\end{equation}
where $\mathrm{coef}$ includes all the coefficients in the models \eqref{eq:LGT_ham} as well as the number of lattice sites. The values of $\chi$ in all cases are discussed in Section~\ref{sec:common_structures}.
\end{rem}

To establish Theorem~\ref{thm:truncate_ham}, we use the following properties of the truncated Hamiltonian $\Pi^{\mathrm{all}}_{[-\Lambda,\Lambda]}H\Pi^{\mathrm{all}}_{[-\Lambda,\Lambda]}$:
\begin{align}
    & (H-\Pi^{\mathrm{all}}_{[-\Lambda,\Lambda]}H\Pi^{\mathrm{all}}_{[-\Lambda,\Lambda]})\Pi^{\mathrm{all}}_{[-\Lambda+1,\Lambda-1]} = 0, 
    \label{eq:HHwt_relation1} \\
    & \left[H,\Pi^{\mathrm{all}}_{[-\Lambda,\Lambda]}H\Pi^{\mathrm{all}}_{[-\Lambda,\Lambda]}\right]\Pi^{\mathrm{all}}_{[-\Lambda+2,\Lambda-2]} = 0,
    \label{eq:HHwt_relation2}
\end{align}
which follow immediately from the fact that for each $\nu$, $H$ can only change the quantum number $\lambda$ by $\pm 1$.

Because we are now studying the whole system rather than a single mode or link, we need to bound the total leakage from the leakage at each individual $\nu$. This is done through a union bound, as given in the following lemma:
\begin{lem}
\label{lem:union_bound}
Let $\Pi^{(\nu)}_{\lambda}$ be projections all commuting with each other. For any operator $A$ and set $S\subset \ZZ$, we have
\begin{equation}
    \left\|\overline{\Pi}^{\mathrm{all}}_{S}A\right\|^2\leq \sum_{\nu=1}^{\mathcal{N}} \left\|\overline{\Pi}^{(\nu)}_{S}A\right\|^2,
\end{equation}
where $\overline{\Pi}^{\mathrm{all}}_{S}=I-{\Pi}^{\mathrm{all}}_{S}$.
\end{lem}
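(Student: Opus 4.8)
The strategy is to upgrade the desired operator-norm bound to a positive-semidefinite (L\"owner) inequality between projections, exploiting that $\overline{\Pi}^{\mathrm{all}}_{S}$ is itself an orthogonal projection. Indeed, since $\Pi^{\mathrm{all}}_{S}=\prod_{\nu}\Pi^{(\nu)}_{S}$ is a product of mutually commuting orthogonal projections it is again an orthogonal projection, hence so is $\overline{\Pi}^{\mathrm{all}}_{S}=I-\Pi^{\mathrm{all}}_{S}$; this lets us turn a squared norm into an expectation value.

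First I would establish the elementary operator inequality
\begin{equation}
I-\prod_{\nu=1}^{\mathcal{N}}P_{\nu}\ \preceq\ \sum_{\nu=1}^{\mathcal{N}}\left(I-P_{\nu}\right),
\end{equation}
valid for any family of mutually commuting orthogonal projections $P_{1},\dots,P_{\mathcal{N}}$. This follows by induction on $\mathcal{N}$: for the inductive step one writes
\begin{equation}
I-P_{1}\cdots P_{k}=\left(I-P_{1}\cdots P_{k-1}\right)+\left(P_{1}\cdots P_{k-1}\right)\left(I-P_{k}\right),
\end{equation}
and observes that $\left(P_{1}\cdots P_{k-1}\right)\left(I-P_{k}\right)$ is again an orthogonal projection (a product of commuting projections), while $\left(I-P_{1}\cdots P_{k-1}\right)\left(I-P_{k}\right)\succeq 0$, so that $\left(P_{1}\cdots P_{k-1}\right)\left(I-P_{k}\right)\preceq I-P_{k}$. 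Specializing to $P_{\nu}=\Pi^{(\nu)}_{S}$ gives $\overline{\Pi}^{\mathrm{all}}_{S}\preceq\sum_{\nu}\overline{\Pi}^{(\nu)}_{S}$.

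With this in hand the lemma is immediate. For an arbitrary unit vector $\ket{\psi}$, put $\ket{\phi}=A\ket{\psi}$. Since $\overline{\Pi}^{\mathrm{all}}_{S}$ is a projection,
\begin{equation}
\begin{aligned}
\left\|\overline{\Pi}^{\mathrm{all}}_{S}A\ket{\psi}\right\|^{2}&=\bra{\phi}\overline{\Pi}^{\mathrm{all}}_{S}\ket{\phi}\leq\sum_{\nu=1}^{\mathcal{N}}\bra{\phi}\overline{\Pi}^{(\nu)}_{S}\ket{\phi}\\
&=\sum_{\nu=1}^{\mathcal{N}}\left\|\overline{\Pi}^{(\nu)}_{S}A\ket{\psi}\right\|^{2}\leq\sum_{\nu=1}^{\mathcal{N}}\left\|\overline{\Pi}^{(\nu)}_{S}A\right\|^{2},
\end{aligned}
\end{equation}
and taking the supremum over all unit vectors $\ket{\psi}$ yields the claim.

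The proof is routine; the only point that must be handled with care is the commutativity hypothesis, which enters twice — once to guarantee that $\overline{\Pi}^{\mathrm{all}}_{S}$ is a genuine orthogonal projection, so that $\|\overline{\Pi}^{\mathrm{all}}_{S}\ket{\phi}\|^{2}=\bra{\phi}\overline{\Pi}^{\mathrm{all}}_{S}\ket{\phi}$, and once in the induction, where every partial product of the $P_{\nu}$ must remain a projection. An equivalent route, avoiding the induction, is to telescope $\overline{\Pi}^{\mathrm{all}}_{S}=\sum_{\nu}\bigl(\prod_{\mu<\nu}\Pi^{(\mu)}_{S}\bigr)\overline{\Pi}^{(\nu)}_{S}$ into a sum of mutually orthogonal projections and then invoke the Pythagorean identity together with the fact that each prefactor is a contraction; this yields the same bound.
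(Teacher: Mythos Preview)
Your proof is correct and follows essentially the same approach as the paper: both establish the L\"owner inequality $\overline{\Pi}^{\mathrm{all}}_{S}\preceq\sum_{\nu}\overline{\Pi}^{(\nu)}_{S}$ and then convert it into the desired norm bound. The only cosmetic differences are that the paper justifies the operator inequality via simultaneous diagonalization (so it becomes the classical union bound on indicator functions) rather than your induction, and deduces the norm bound via $\|\overline{\Pi}^{\mathrm{all}}_{S}A\|^{2}=\|A^{\dagger}\overline{\Pi}^{\mathrm{all}}_{S}A\|$ rather than taking a supremum over unit vectors.
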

\begin{proof}
Because $\Pi^{(\nu)}_{\lambda}$ commute with each other, they can be simultaneously diagonalized, and by the union bound we have
\begin{equation}
    \overline{\Pi}^{\mathrm{all}}_{S} \preceq \sum_{\nu} \overline{\Pi}^{(\nu)}_{S},
\end{equation}
which in turn leads to
\begin{equation}
    \left\|\overline{\Pi}^{\mathrm{all}}_{S}A\right\|^2=\left\|A^{\dagger}\overline{\Pi}^{\mathrm{all}}_{S}A\right\|\leq \left\|A^{\dagger}\sum_{\nu} \overline{\Pi}^{(\nu)}_{S}A\right\| \leq \sum_{\nu} \left\|\overline{\Pi}^{(\nu)}_{S}A\right\|^2,
\end{equation}
where the first step can be proven using the singular value decomposition, and we have used the fact that $A^{\dagger} \overline{\Pi}^{\mathrm{all}}_{S}A$ and $A^{\dagger}  \overline{\Pi}^{(\nu)}_{S}A$ are all positive semidefinite for the later steps. 
\end{proof}

Note that the above union bound actually holds even when the commutativity assumption about $\Pi^{(\nu)}_{\lambda}$ is dropped; see \cite[Lemma 2]{Sen2011Achieving} for details.
With these we are ready to prove the main result of this section.
\begin{proof}[Proof of Theorem~\ref{thm:truncate_ham}]
Using the formula for the error of the first-order Trotter decomposition \cite[Eq.~(3.4)]{suzuki1985decomposition}, we have
\begin{equation}
\label{eq:first_order_trotter_error_exact}
    e^{it\wt{H}}e^{-it H} = e^{-it(H-\wt{H})} + \int_0^{t} \dd s_1 \int_{0}^{s_1}\dd s_2\ e^{is_1 \wt{H}}e^{-i(s_1-s_2){H}}[\wt{H},H]e^{-is_2{H}}e^{-i(t-s_1)(H-\wt{H})}.
\end{equation}
We then use this, {along with invariance of the spectral norm under multiplication by a unitary operator,} to bound the truncation error as:
\begin{equation}
\label{eq:ham_truncation_error_first_decompose}
\begin{aligned}
    \left\|\left(e^{-it\wt{H}}-e^{-itH}\right)\Pi_{[-\Lambda_0,\Lambda_0]}^{\mathrm{all}}\right\| &= \left\|\left(I-e^{it\wt{H}}e^{-itH}\right)\Pi_{[-\Lambda_0,\Lambda_0]}^{\mathrm{all}}\right\| \\
    &\leq \left\|\left(I-e^{-it(H-\wt{H})}\right)\Pi_{[-\Lambda_0,\Lambda_0]}^{\mathrm{all}}\right\| \\
    &\quad+ \int_0^{t} \dd s_1 \int_{0}^{s_1}\dd s_2 \left\|[\wt{H},H]e^{-is_2{H}}e^{-i(t-s_1)(H-\wt{H})}\Pi_{[-\Lambda_0,\Lambda_0]}^{\mathrm{all}}\right\|.
\end{aligned}
\end{equation}
Now if we choose $\wt{\Lambda}\geq \Lambda_0+1$, then by \eqref{eq:HHwt_relation1},
\begin{equation}
    \left(I-e^{-it(H-\wt{H})}\right)\Pi_{[-\Lambda_0,\Lambda_0]}^{\mathrm{all}}=0.
\end{equation}
As a result the second line of \eqref{eq:ham_truncation_error_first_decompose} is $0$. We now only need to bound the integrand in the third line.

For this integrand we have 
\begin{equation}
\label{eq:ham_truncation_error_second_decompose}
    \begin{aligned}
    \left\|[\wt{H},H]e^{-is_2{H}}e^{-i(t-s_1)(H-\wt{H})}\Pi_{[-\Lambda_0,\Lambda_0]}^{\mathrm{all}}\right\| &= \left\|[\wt{H},H]e^{-is_2{H}}\Pi_{[-\Lambda_0,\Lambda_0]}^{\mathrm{all}}\right\| \\
    &\leq \left\|[\wt{H},H]\Pi_{[-\Lambda',\Lambda']}^{\mathrm{all}}e^{-is_2{H}}\Pi_{[-\Lambda_0,\Lambda_0]}^{\mathrm{all}}\right\| \\
    &\quad+ \left\|[\wt{H},H]\overline{\Pi}_{[-\Lambda',\Lambda']}^{\mathrm{all}}e^{-is_2{H}}\Pi_{[-\Lambda_0,\Lambda_0]}^{\mathrm{all}}\right\|,
    \end{aligned}
\end{equation}
for some $\Lambda'$ to be chosen. We choose $\Lambda'= \wt{\Lambda}-2$. With this choice and \eqref{eq:HHwt_relation2} we have
\begin{equation}
    [\wt{H},H]\Pi_{[-\Lambda',\Lambda']}^{\mathrm{all}}=0.
\end{equation}
This eliminates the right-hand side on the second line of \eqref{eq:ham_truncation_error_second_decompose}.
Therefore we are only left with the third line of \eqref{eq:ham_truncation_error_second_decompose} to deal with.

We apply Theorem~\ref{thm:long_time_bound2}, as well as Lemma~\ref{lem:union_bound}, to get
\begin{equation}
    \left\|\overline{\Pi}_{[-\Lambda',\Lambda']}^{\mathrm{all}}e^{-is_2{H}}\Pi_{[-\Lambda_0,\Lambda_0]}^{\mathrm{all}}\right\|\leq \sqrt{\mathcal{N}}\poly(\chi t,\Lambda_0,\Lambda')\exp\left(-\Omega\left(\frac{\Lambda'^{1-r}-\Lambda_0^{1-r}}{2(1-r)\chi t+1}\right)\right),
\end{equation}
where we have used the fact that $s_2\leq t$. Substituting this bound into \eqref{eq:ham_truncation_error_second_decompose} and then \eqref{eq:ham_truncation_error_first_decompose}, we have
\begin{equation}
    \left\|\left(e^{-it\wt{H}}-e^{-itH}\right)\Pi_{[-\Lambda_0,\Lambda_0]}^{\mathrm{all}}\right\| \leq \sqrt{\mathcal{N}}\mathcal{A}(\wt{\Lambda})\poly(\chi t,\Lambda_0,\wt{\Lambda})\exp\left(-\Omega\left(\frac{\wt{\Lambda}^{1-r}-\Lambda_0^{1-r}}{2(1-r)\chi t+1}\right)\right).
\end{equation}
In the above derivation we used the fact that $\Lambda'^{1-r}= (\wt{\Lambda}-2)^{1-r}=\wt{\Lambda}^{1-r}+o(1)$ as $\wt{\Lambda}\to\infty$. This completes the proof of the theorem.

\end{proof}

One can ask the following question about the proof above: can the energy-based truncation threshold proposed in \cite{JordanLeePreskill2012quantum}, and discussed in detail in Section~\ref{sec:comparison_with_the_energy_based_truncation_threshold}, be justified as a truncation threshold for Hamiltonian truncation, through a proof that is similar to the proof above? We remark that this may require different assumptions and the proof will need to be substantially modified. If one were to use the above proof strategy, together with the energy-based truncation threshold, to derive a truncation threshold for the Hamiltonian truncation, then an important obstacle is bounding the third line in \eqref{eq:ham_truncation_error_second_decompose}. This line is bounded, in the proof above, through
\[
 \left\|[\wt{H},H]\overline{\Pi}_{[-\Lambda',\Lambda']}^{\mathrm{all}}e^{-is_2{H}}\Pi_{[-\Lambda_0,\Lambda_0]}^{\mathrm{all}}\right\| \leq  \left\|[\wt{H},H]\right\|\left\|\overline{\Pi}_{[-\Lambda',\Lambda']}^{\mathrm{all}}e^{-is_2{H}}\Pi_{[-\Lambda_0,\Lambda_0]}^{\mathrm{all}}\right\|.
\]
On the right-hand side, $\left\|[\wt{H},H]\right\|$ grows polynomially with the truncation threshold $\wt{\Lambda}$, while $\left\|\overline{\Pi}_{[-\Lambda',\Lambda']}^{\mathrm{all}}e^{-is_2{H}}\Pi_{[-\Lambda_0,\Lambda_0]}^{\mathrm{all}}\right\|$ decays subexponentially with $\Lambda'$. Therefore asymptotically the latter decays faster than the former and consequently we can reach an arbitrarily high precision. If we could only use the energy-based truncation threshold, then the latter term only decays polynomially with $\Lambda'$, and as a result a careful comparison between the rates of growth and decay of the two terms would be needed, and we could only reach an arbitrarily high precision when the latter decays faster than the former. This would require further assumptions not included in our framework.

Moreover, the most appealing feature of the energy-based truncation threshold is that it does not depend on time. However, suppose one could overcome the above mentioned difficulty; then the energy-based quantum state truncation threshold would lead to a time-dependent Hamiltonian truncation threshold, because of the integration over time in \eqref{eq:first_order_trotter_error_exact}, and thus the above appealing feature no longer holds.

\section{Hamiltonian simulations using the HHKL decomposition}
\label{sec:application_to_ham_sim_HHKL}

In this section we consider performing Hamiltonian simulation for U(1) and SU(2) lattice gauge theories and boson-fermion coupling. The basic idea is to simulate the truncated Hamiltonian $\wt{H}$ defined in \eqref{eq:truncate_ham_in_thm} as opposed to the unbounded $H$, with the truncation threshold $\wt{\Lambda}$ chosen according to \eqref{eq:truncation_threshold_boson_fermion} and \eqref{eq:truncation_threshold_LGT} for boson-fermion coupling and lattice gauge theories respectively.

\subsection{Simulating lattice gauge theories}
\label{sec:sim_LGT}

In this section we propose an algorithm to simulate the time evolution of the $(D+1)$-dimensional U(1) and SU(2) lattice gauge theories whose Hamiltonians are of the form \eqref{eq:LGT_ham}. The goal is to perform simulation of a square lattice consisting of $N$ sites up to time $T$ with an error at most $\epsilon$. This algorithm is based on a combination of the HHKL decomposition \cite{HaahHastingsKothariLow2021quantum} and interaction picture Hamiltonian simulation \cite{LowWiebe2018interaction}. We will show that the simulation can be done with gate complexity $\Or(NT\polylog(\Lambda_0 NT\epsilon^{-1}))$, where $\epsilon$ is the allowed error, 
assuming that the initial state is in the span of states whose quantum number is in the range $[-\Lambda_0,\Lambda_0]$ for each gauge link.

As mentioned above, we will be simulating $\wt{H}$ instead of $H$, and the resulting error has been analyzed in Section~\ref{sec:truncating_the_hamiltonian}.
We use $\wt{H}_E$ to denote the truncated electric field part of the Hamiltonian, i.e. $\wt{H}_E=\Pi^{\mathrm{all}}_{[-\wt{\Lambda},\wt{\Lambda}]}H_E\Pi^{\mathrm{all}}_{[-\wt{\Lambda},\wt{\Lambda}]}$, and we adopt similar notation $\wt{H}_M$, $\wt{H}_{GM}$, and $\wt{H}_{B}$ for the other three parts. Moreover we denote
\begin{equation}
    \wt{E}^2_{x,n} = \Pi^{(x,n)}_{[-\wt{\Lambda},\wt{\Lambda}]}E^2_{x,n}\Pi^{(x,n)}_{[-\wt{\Lambda},\wt{\Lambda}]}.
\end{equation}

Note that the Hamiltonians for lattice gauge theories, both the original $H$ and $\wt{H}$, consist of geometrically local terms, and to achieve a linear scaling in both system size and time we consider using the HHKL decomposition developed in \cite{HaahHastingsKothariLow2021quantum}.

\subsubsection{The HHKL decomposition}
\label{sec:sim_LGT_HHKL}

We first use \cite[Lemma~6]{HaahHastingsKothariLow2021quantum} to decompose the time evolution of the entire system into evolution of blocks, each of which, denoted by $\mathcal{B}$, has size $\ell^D=\Or(\polylog(NT\epsilon^{-1}))$ and we only need to simulate its evolution for time $\tau=\Or(1)$.
The entire time evolution is divided into $\Or(T)$ segments and there are $\Or(N)$ such blocks within each segment.

The original \cite[Lemma~6]{HaahHastingsKothariLow2021quantum} requires that all the local terms in the target Hamiltonian have norm bounded by a constant. However, local terms in $\wt{H}$ have norm depending on the truncation threshold $\wt{\Lambda}$, which scales with the system size $N$, time $T$, and allowed error $\epsilon$. We address this issue as follows. 
The only local terms that are not bounded by a constant are the electric field terms in $\wt{H}_E$, i.e. $g_E\wt{E}^2_{x,n}$ for each $(x,n)$, and each of these terms only acts on a single gauge link. We call such terms, i.e. terms that act only on a single lattice site or gauge link (which can be seen as a lattice site as well for this purpose), \emph{on-site} interactions. In Lemma~\ref{lem:lieb_robinson_on_site} of Section~\ref{sec:LR_veclocity_on_site} we show that on-site interactions do not change the Lieb-Robinson velocity. Therefore, even with the terms in $\wt{H}_E$, the system still has a constant Lieb-Robinson velocity, and as a result we can invoke \cite[Lemma~6]{HaahHastingsKothariLow2021quantum} to decompose the time evolution.

\subsubsection{Simulating the blocks}

We now consider simulating the dynamics of each individual block $\mathcal{B}$ of size $\ell^D$ for short time $\tau$. The Hamiltonian for $\mathcal{B}$, which we denote by $\wt{H}^{\mathcal{B}}$, includes all the local terms in $\wt{H}$ that only act on sites and links within the block $\mathcal{B}$. As discussed in Section~\ref{sec:motivating_examples}, each local term can be represented by a sparse matrix with respect to the basis discussed in Section~\ref{sec:common_structures}, and can therefore be encoded by a quantum walk operator \cite{BerryChilds2012black,BerryChildsCleveEtAl2014exponential,ChildsKothariSomma2017}. A sum of these terms can be encoded in an unitary using the linear combination of unitaries (LCU) method \cite{BerryChildsCleveEtAl2014exponential,ChildsKothariSomma2017}. In this way we have an encoding (known as ``block-encoding'' or ``standard-form'' in \cite{low2019hamiltonian,gilyen2019quantum}) of the Hamiltonian $\wt{H}^{\mathcal{B}}$, i.e. a unitary with $\wt{H}^{\mathcal{B}}$ as a matrix block, with a subnormalization factor $\Or(\ell^D \wt{\Lambda}^2)$. Using the Hamiltonian simulation algorithm for encoded Hamiltonians \cite{low2019hamiltonian}, we can then simulate the time evolution of the block $\mathcal{B}$ with gate-complexity $\Or(\ell^{2D} \wt{\Lambda}^2\tau)=\Or(\wt{\Lambda}^2\polylog(NT\epsilon^{-1}))$, which results in a total gate complexity of $\wt{\Or}(NT(\Lambda_0+T)^2\polylog(\epsilon^{-1}))$. This is however not the best method in terms of asymptotic complexity.

The polynomial dependence on $\wt{\Lambda}$ can be improved to be poly-logarithmic using the interaction-picture simulation technique developed in \cite{LowWiebe2018interaction}. We group the local terms in $\mathcal{B}$ into $\wt{H}_{M}^{\mathcal{B}}$, $\wt{H}_{GM}^{\mathcal{B}}$, $\wt{H}_{E}^{\mathcal{B}}$, and $\wt{H}_{B}^{\mathcal{B}}$ depending on whether the term describes fermionic mass energy, gauge-matter interaction, electric field energy, or magnetic field energy. Then the polynomial dependence on $\wt{\Lambda}$ comes only from $\wt{H}_{E}^{\mathcal{B}}$. We note that the time evolution under $\wt{H}_{E}^{\mathcal{B}}$ can be fast-forwarded, i.e. the number of gates required to implement it has a poly-logarithmic dependence on the evolution time multiplied by the Hamiltonian norm. To be more specific, the time evolution due to each electric field term $g_E\wt{E}^2_{x,n}$ for time $t$ can be implemented with gate complexity $\Or(\polylog(\wt{\Lambda}t))$ because this term is represented by a diagonal matrix in both U(1) and SU(2) settings (see \eqref{eq:operators_U(1)} and \eqref{eq:operators_SU(2)} for the two settings respectively). And all of these terms act on different gauge links and therefore commute with each other. To implement $e^{-it\wt{H}_{E}^{\mathcal{B}}}$ we only need to evolve these terms separately, and thus pay a cost of $\Or(\ell^D \polylog(\wt{\Lambda}t))=\Or(\polylog(NT\wt{\Lambda}t\epsilon^{-1}))$.

Now instead of directly simulating the Hamiltonian $\wt{H}^{\mathcal{B}}$, we simulate
\begin{equation}
    \wt{H}^{\mathcal{B}}_I(t) = e^{it\wt{H}_E^{\mathcal{B}}}(\wt{H}_M^{\mathcal{B}}+\wt{H}_{GM}^{\mathcal{B}}+\wt{H}_{B}^{\mathcal{B}})e^{-it\wt{H}_E^{\mathcal{B}}}.
\end{equation}
The original time evolution $e^{-it\wt{H}^{\mathcal{B}}}$ and the interaction picture evolution are related through
\begin{equation}
    e^{-it\wt{H}^{\mathcal{B}}} = e^{-it\wt{H}_E^{\mathcal{B}}}\mathcal{T}e^{-i\int_0^t \dd s \wt{H}^{\mathcal{B}}_I(s)}.
\end{equation}
It then suffices to implement $\mathcal{T}e^{-i\int_0^t \dd s \wt{H}^{\mathcal{B}}_I(s)}$, namely the time evolution due to the time-dependent Hamiltonian $\wt{H}^{\mathcal{B}}_I(s)$. We accomplish this using the truncated Dyson series method in \cite[Corollary~4]{LowWiebe2018interaction}. The time-dependent matrix encoding in \cite[Definition~2]{LowWiebe2018interaction} can be constructed using the the encoding of the local Hamiltonian terms as well as the fast-forwarding of $\wt{H}_E^{\mathcal{B}}$ discussed above. This yields a gate complexity $\Or(\ell^{2D}\tau\polylog(NT\wt{\Lambda}\epsilon^{-1}))=\Or(\polylog(NT\wt{\Lambda}\epsilon^{-1}))$ for implementing the interaction picture time evolution and consequently $e^{-it\wt{H}^{\mathcal{B}}}$ can be implemented with the same gate complexity scaling. Note that here we want to keep the error for simulating this block to be at most $\Or(N^{-1}T^{-1}\epsilon)$ instead of $\epsilon$. This however does not significantly increase the asymptotic scaling because the scaling with respect to the allowed error is poly-logarithmic.

There are in total $\Or(NT)$ such simulations to perform for all the $\Or(N)$ blocks and $\Or(T)$ times steps. Therefore the total gate complexity for implementing the time evolution of the entire system is $\Or(NT\polylog(NT\wt{\Lambda}\epsilon^{-1}))$. Using the truncation threshold given in \eqref{eq:truncation_threshold_LGT}, we have that the total gate complexity for simulating the U(1) or SU(2) lattice gauge theory with $N$ sites up to time $T$ and allowed error $\epsilon$ is $\Or(NT\polylog(\Lambda_0 NT\epsilon^{-1}))$, provided that the initial state is in the support of $\Pi^{\mathrm{all}}_{[-\Lambda_0,\Lambda_0]}$, i.e. the quantum numbers on each gauge link are in the interval $[-\Lambda_0,\Lambda_0]$.

We remark that in using the HHKL decomposition, we need to preserve the locality of fermionic operators after the Jordan-Wigner transformation. This can be done by introducing auxiliary fermionic modes as discussed in \cite{HaahHastingsKothariLow2021quantum} using the method developed in \cite{VerstraeteCirac2005mapping}.

\subsection{Simulating boson-fermion coupling}
\label{sec:sim_boson_fermion}

In this section we consider simulating the Hubbard-Holstein model \cite{holstein1959studies}, which is the simplest model describing the electron-phonon interaction. 
This model is defined on a $D$-dimensional lattice, and each side of the lattice contains $L$ sites where $L^D=N$. 
Each site $x$ in the lattice contains two fermionic modes $c_{x,\sigma}$ (with $\sigma$ denoting either spin up and down) and a bosonic mode $b_x$.
We are interested in the case where $D$ is a constant. The Hamiltonian is
\begin{equation}
    \label{eq:ham_hubbard_holstein}
    H = H_f + H_{fb} + H_b,
\end{equation}
where $H_{f}$ is the Hamiltonian of the Fermi-Hubbard model: 
\begin{equation}
    \label{eq:ham_fermi_hubbard}
    H_{f} = -\sum_{\braket{x,x'},\sigma}(c_{x,\sigma}^{\dagger}c_{x',\sigma}+\mathrm{h.c.})+ U\sum_{x=1}^N (n_{x,\uparrow}-\frac{1}{2})(n_{x,\downarrow}-\frac{1}{2}) - \mu\sum_{r=1}^N n_x,
\end{equation}
and
\begin{equation}
    % \label{eq:ham_hubbard_holstein}
    H_{fb} =  g\sum_{x=1}^N (b_x^{\dagger}+b_x)(n_{x,\uparrow}+n_{x,\downarrow}-1) \quad 
    H_b = \omega_0 \sum_{x=1}^N b_{x}^{\dagger}b_x,
\end{equation}
are the boson-fermion coupling part and bosonic part respectively. The lattice sites are indexed by $x$ and $x'$, and spins are indexed by $\sigma$. It is easy to verify that this model satisfies the general form of boson-fermion coupling in \eqref{eq:ham_fermion_boson_general}.

Here, we propose an algorithm that simulates the above model up to time $T$ and error $\epsilon$ with gate complexity $\wt{\Or}(NT(\sqrt{\Lambda_0}+T)\polylog(\epsilon^{-1}))$, assuming the initial state has no more than $\Lambda_0$ particles on each bosonic mode. Just like in the previous section the algorithm is based on HHKL decomposition \cite{HaahHastingsKothariLow2021quantum} and interaction picture Hamiltonian simulation \cite{LowWiebe2018interaction}.

First we replace the exact Hamiltonian $H$ with the truncated Hamiltonian $\wt{H}$ in \eqref{eq:truncate_ham_in_thm} and simulate the evolution of $\wt{H}$. The resulting error is analyzed in Section~\ref{sec:truncating_the_hamiltonian}. We also denote different parts of the Hamiltonian after truncation by $\wt{H}_f$, $\wt{H}_{fb}$, and $\wt{H}_b$. 

We apply the HHKL method to decompose the entire time evolution into evolution of blocks, each of which is denoted by $\mathcal{B}$, for short time $\tau$. Here again we encounter local terms whose norms are not bounded by a constant, and in this case these terms are contained in $\wt{H}_{fb}$ and $\wt{H}_b$. With the help of Lemma~\ref{lem:lieb_robinson_on_site} however, we can still apply the HHKL decomposition because $\wt{H}_{fb}$ and $\wt{H}_b$ are both on-site. So far the algorithm proceeds in a similar way as that for the lattice gauge theories.

Then we apply interaction picture Hamiltonian simulation to simulate the evolution in each block $\mathcal{B}$. 
We denote by $\wt{H}_f^{\mathcal{B}}$, $\wt{H}_{fb}^{\mathcal{B}}$, and $\wt{H}_b^{\mathcal{B}}$ the fermionic, coupling, and bosonic terms respectively. 
Here, the terms in $\wt{H}_b^{\mathcal{B}}$ can still be fast-forwarded the same way as the electric field terms in lattice gauge theories. 
However, it is not known whether the boson-fermion coupling terms in $\wt{H}_{fb}^{\mathcal{B}}$ can be fast-forwarded.
Therefore when we simulate the interaction picture Hamiltonian
\begin{equation}
    H_I^{\mathcal{B}} = e^{it\wt{H}_b^{\mathcal{B}}}(\wt{H}_f^{\mathcal{B}}+\wt{H}_{fb}^{\mathcal{B}})e^{-it\wt{H}_b^{\mathcal{B}}},
\end{equation}
the dependence on the truncation threshold is not poly-logarithmic. Rather a factor of $\sqrt{\wt{\Lambda}}$ shows up in the subnormalization factor of the encoding of the Hamiltonian because 
\begin{equation}
    \|\Pi^{\mathrm{all}}_{[0,\wt{\Lambda}]}(b_x^{\dagger}+b_x)(n_{x,\uparrow}+n_{x,\downarrow}-1)\Pi^{\mathrm{all}}_{[0,\wt{\Lambda}]}\| \leq 2\sqrt{\wt{\Lambda}+1}.
\end{equation}
There is some subtle difference between the subnormalization factor and the spectral norm, but in the present case they have the same asymptotic scaling.
The number of gates required to simulate the time evolution of a block $\mathcal{B}$ for time $\tau$ is then $\Or(\ell^{2D}\sqrt{\wt{\Lambda}}\tau\polylog(\wt{\Lambda}NT\epsilon^{-1}))=\Or(\sqrt{\wt{\Lambda}}\polylog(\wt{\Lambda}NT\epsilon^{-1}))$ with a target accuracy of $\Or(N^{-1}T^{-1}\epsilon)$ for each block.

We need to perform $\Or(NT)$ such simulations and the number of required gates is therefore $\wt{\Or}(NT\sqrt{\wt{\Lambda}}\polylog(\epsilon^{-1}))$. The truncation threshold $\wt{\Lambda}$ can be chosen according to \eqref{eq:truncation_threshold_boson_fermion}. As a result the total gate complexity for simulating an $N$-site Hubbard-Holstein model for time $T$ up to error $\epsilon$ is $\wt{\Or}(NT(\sqrt{\Lambda_0}+T)\polylog(\epsilon^{-1}))$, assuming the initial state has at most $\Lambda_0$ particles in each bosonic mode.

%%%%%%%%%%%%%%%%%%%%%%%%%%%%%%%%%%%%%%%%%%%%%%%%%%%%%%%%%%%%%%%%%%%%%%%%%%%%%%
\section{Simulating boson-fermion coupling using Trotterization}
\label{sec:boson_fermion_trotter}

In this section, we consider a generic Hamiltonian of a fermion-boson coupled system.
We assume there are $N_f$ fermionic modes (labeled by $i,j$) and $N_b$ bosonic modes (labeled by $\alpha$). We have
\begin{equation}
    \label{eq:ham_fermion_boson_1}
    \begin{aligned}
    H &= \sum_{\gamma=1}^6 H_{\gamma}, \\
    H_1 &= \sum_{ij}t_{ij}c_{i}^{\dagger}c_j, \quad
    H_2 = \sum_{ij}V_{ij}n_i n_j, \\
    H_3 &= \sum_{\alpha ij} g_{ij}^{(\alpha)} c_i^{\dagger}c_j X_{\alpha}, \quad
    H_4 = \sum_{\alpha ij} h_{ij}^{(\alpha)} c_i^{\dagger}c_j P_{\alpha}, \\
    H_5 &= \frac{1}{2}\sum_{\alpha} \omega_{\alpha} X_{\alpha}^2, \quad
    H_6 = \frac{1}{2}\sum_{\alpha} \omega_{\alpha} P_{\alpha}^2,
    \end{aligned}
\end{equation}
where $X_{\alpha}=(b_{\alpha}+b_{\alpha}^{\dagger})/\sqrt{2}$ is the position operator corresponding to the bosonic mode $\alpha$, and $P_{\alpha}=i(b^{\dagger}_{\alpha}-b_{\alpha})/\sqrt{2}$ is the momentum operator. Matrices $t=(t_{ij})$, $V=(V_{ij})$, $g^{(\alpha)}=(g^{(\alpha)}_{ij})$, and $h^{(\alpha)}=(h_{ij}^{(\alpha)})$ are all Hermitian.
Such a Hamiltonian is in general not geometrically local, and therefore the HHKL decomposition we have introduced in the previous section is not directly applicable here. Instead, we consider performing quantum simulation using Trotterization, which can be very efficient when many terms in the Hamiltonian commute~\cite{ChildsSuTranWiebeZhu2021commutator} and can be easier to realize in practice.

The Hamiltonians introduced above belong to a subclass of \eqref{eq:ham_fermion_boson_general} as the fermion interaction term has two indices instead of four. We impose this restriction to ensure that this term can be fast-forwarded. 
We also require $\sum_{ij}g_{ij}^{(\alpha)} c_i^{\dagger}c_j$ to commute with each other for all $\alpha$, and require the same for $\sum_{ij}h_{ij}^{(\alpha)} c_i^{\dagger}c_j$ for all $\alpha$. These requirements are equivalent to 
\begin{equation}
\label{eq:trotter_coupling_commute}
    [g^{(\alpha)},g^{(\alpha')}]=[h^{(\alpha)},h^{(\alpha')}]=0,
\end{equation}
for all bosonic modes $\alpha,\alpha'$, where $g^{(\alpha)}=(g_{ij}^{(\alpha)})$ and $h^{(\alpha)}=(h_{ij}^{(\alpha)})$.
Note in particular that this holds for the Hubbard-Holstein model, since $(n_{x,\uparrow}+n_{x,\downarrow}-1)$ commutes with each other for all sites $x$.

This decomposition of the Hamiltonian into $H_{\gamma}$'s and the above requirements are to ensure that all the $H_{\gamma}$'s can be fast-forwarded. More precisely, the evolution $e^{-itH_{\gamma}}$ due to the truncated Hamiltonian terms can be implemented with gate complexity that depends poly-logarithmically on $t$. The fermionic part $e^{-itH_1}$ can be implemented using a series of Givens rotations \cite{KivlichanMcCleanWiebeEtAl2018quantum}. 
The circuit implementation becomes more challenging when a boson is involved. However, with a suitable particle number truncation, we can still implement the time evolution of these terms. We denote by $\wt{H}_{\gamma}$ the truncated Hamiltonian terms:
\begin{equation}
    \wt{H}_{\gamma} = \Pi_{[0,\wt{\Lambda}]}^{\mathrm{all}}H_{\gamma}\Pi_{[0,\wt{\Lambda}]}^{\mathrm{all}},
\end{equation}
for some $\wt{\Lambda}$ to be chosen.
The coupling parts $e^{-it\wt{H}_2}$ and $e^{-it\wt{H}_3}$ can be implemented using the technique described in \cite[Section IV]{MacridinEtAL2018digital}, where the boson is represented using the eigenbasis of $X_{\alpha}$, combined with the Givens rotation technique. We can implement the coupling term associated with each bosonic mode separately because of our requirement \eqref{eq:trotter_coupling_commute}.
The bosonic parts $e^{-it\wt{H}_4}$ and $e^{-it\wt{H}_5}$ are straightforward to implement using the same technique described in \cite[Section IV]{MacridinEtAL2018digital}.
See also \cite{Sawaya2020} for alternative implementations of the bosonic operators and their cost comparison.

The rest of this section is devoted to analyzing how many times we need to apply $e^{-it\wt{H}_{\gamma}}$ in order to simulate the time evolution $e^{-iTH}$ using Trotterization, starting from an initial state that lies in the support of $\Pi^{\mathrm{all}}_{[0,\Lambda_0]}$ with no more than $\Lambda_0$ bosonic particles on each mode $\alpha$. 
Up to a constant-factor difference, this is also the number of Trotter steps required to reach a target precision, which in turn determines the gate complexity of Hamiltonian simulation.

\subsection{Sources of error}

There are two sources of error that we need to deal with. The first source of error comes from the fact that we are evolving the system with $\wt{H}$ instead of $H$, and this is already analyzed in Theorem~\ref{thm:truncate_ham}. The second source of error is the Trotter error, which will be our focus here. A simple bound for the Trotter error is readily available if we ignore the commutation relation between pairs of the Hamiltonian terms. But here we aim for the commutator scaling described in \cite{ChildsSuTranWiebeZhu2021commutator}, which can be much tighter when many terms in the Hamiltonian commute. 

There is a technical issue that prevents us from directly applying the result of \cite{ChildsSuTranWiebeZhu2021commutator}. After truncation, the commutation relation between the projected position and momentum operators
\begin{equation}
    \wt{X}_{\alpha}=\Pi_{[0,\wt{\Lambda}]}^{\alpha} X_{\alpha} \Pi_{[0,\wt{\Lambda}]}^{\alpha},\quad \wt{P}_{\alpha}=\Pi_{[0,\wt{\Lambda}]}^{\alpha} P_{\alpha} \Pi_{[0,\wt{\Lambda}]}^{\alpha},
\end{equation}
is different from the canonical commutation relation between the exact position and momentum operators. To address this, we use the fact that the exact commutation relation is recovered when the particle number is some distance below the truncation threshold $\wt{\Lambda}$, and this in turn requires carefully tracking the particle number under the exact and truncated time evolution respectively.

Our proof uses the following telescoping lemma:
\begin{lem}
\label{lem:decompose_unitary}
Let $\Pi$ be a projection operator and $U_j$, $\wt{U}_j$ ($j=1,2,\ldots,J$) be unitary operators. We have
\[
\left\|\left(\prod_{j=1}^J \wt{U}_j - \prod_{j=1}^J {U}_j\right)\Pi\right\| \leq \sum_{j=1}^J \left\|\left(\wt{U}_j-U_j\right)\left(\prod_{j'=1}^{j-1} {U}_{j'}\right)\Pi\right\|.
\]
\end{lem}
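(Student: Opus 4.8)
The plan is to prove the bound by a standard telescoping (``hybrid'') decomposition of the difference of the two ordered products, followed by the triangle inequality and the unitary invariance of the spectral norm. Throughout I adopt the convention that $\prod_{j=1}^{J} U_j = U_J U_{J-1}\cdots U_1$, so that the factor $\prod_{j'=1}^{j-1}U_{j'}$ appearing to the right of $(\wt{U}_j - U_j)$ in the statement is the ``earlier'' block of unitaries; the argument is identical for the opposite convention after relabelling.

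First I would introduce the hybrid operators
\[
W_j \;:=\; \wt{U}_J\wt{U}_{J-1}\cdots \wt{U}_{j+1}\,U_j U_{j-1}\cdots U_1, \qquad j = 0, 1, \dots, J,
\]
with the endpoint identifications $W_0 = \prod_{j=1}^J \wt{U}_j$ and $W_J = \prod_{j=1}^J U_j$. For each $1 \le j \le J$ the two consecutive hybrids $W_{j-1}$ and $W_j$ differ only in the $j$-th slot, where $\wt{U}_j$ is replaced by $U_j$, so
\[
W_{j-1} - W_j \;=\; \bigl(\wt{U}_J\cdots \wt{U}_{j+1}\bigr)\bigl(\wt{U}_j - U_j\bigr)\bigl(U_{j-1}\cdots U_1\bigr).
\]
Summing this telescoping identity over $j = 1,\dots,J$ yields $\prod_{j=1}^J \wt{U}_j - \prod_{j=1}^J U_j = \sum_{j=1}^J (W_{j-1}-W_j)$.

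Next I would multiply both sides on the right by $\Pi$, apply the triangle inequality, and then use submultiplicativity of the spectral norm together with the fact that the leading factor $\wt{U}_J\cdots \wt{U}_{j+1}$ is a product of unitaries and hence has norm $1$. This discards that factor from the $j$-th summand, leaving exactly $\bigl\|(\wt{U}_j - U_j)\bigl(\prod_{j'=1}^{j-1}U_{j'}\bigr)\Pi\bigr\|$, which is the desired inequality. There is no substantive obstacle here: the only points needing care are fixing the ordering convention of the products consistently across all steps and invoking unitarity of the trailing $\wt{U}$-block to drop its norm. In fact no property of $\Pi$ beyond being a bounded operator is used in establishing the inequality; the projection structure only matters for how the lemma is subsequently applied.
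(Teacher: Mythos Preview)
Your proposal is correct and takes essentially the same approach as the paper: the paper states the telescoping identity
\[
\prod_{j=1}^J \wt{U}_j - \prod_{j=1}^J {U}_j = \sum_{j=1}^J \left(\prod_{j'=j+1}^{J} \wt{U}_{j'}\right)\left(\wt{U}_j-U_j\right)\left(\prod_{j'=1}^{j-1} {U}_{j'}\right)
\]
(proved by induction on $J$) and then applies the triangle inequality and unitarity of the leading $\wt{U}$-block, which is exactly your hybrid argument written out explicitly.
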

\begin{proof}
This inequality follows immediately from the identity
\[
\prod_{j=1}^J \wt{U}_j - \prod_{j=1}^J {U}_j = \sum_{j=1}^J \left(\prod_{j'=j+1}^{J} \wt{U}_{j'}\right)\left(\wt{U}_j-U_j\right)\left(\prod_{j'=1}^{j-1} {U}_{j'}\right),
\]
which can be proved by induction on $J$.
\end{proof}

Our goal is to simulate the dynamics up to time $T$. We achieve this by dividing the entire time evolution into $R=T/\tau$ steps, each of which has duration $\tau$ and is simulated by a $p$-th order product formula $S(\tau)$. Then the Trotter error can be bounded as
\begin{equation}
    \label{eq:trotter_error_total}
    \begin{aligned}
        \|(e^{-iT\wt{H}}-S(\tau)^R)\Pi^{\mathrm{all}}_{[0,\Lambda_0]}\| &\leq \sum_{j=1}^R \|(S(\tau)-e^{-i\tau \wt{H}})e^{-i(j-1)\tau \wt{H}}\Pi^{\mathrm{all}}_{[0,\Lambda_0]}\| \\
        &\leq \sum_{j=1}^R \|(S(\tau)-e^{-i\tau \wt{H}})\Pi^{\mathrm{all}}_{[0,\Lambda'_0]}e^{-i(j-1)\tau \wt{H}}\Pi^{\mathrm{all}}_{[0,\Lambda_0]}\| \\
        &\quad+ \sum_{j=1}^R \|(S(\tau)-e^{-i\tau \wt{H}})\overline{\Pi}^{\mathrm{all}}_{[0,\Lambda'_0]}e^{-i(j-1)\tau \wt{H}}\Pi^{\mathrm{all}}_{[0,\Lambda_0]}\| \\
        &\leq R \|(S(\tau)-e^{-i\tau \wt{H}})\Pi^{\mathrm{all}}_{[0,\Lambda'_0]}\| \\
        &\quad+ 2\sum_{j=1}^R \|\overline{\Pi}^{\mathrm{all}}_{[0,\Lambda'_0]}e^{-i(j-1)\tau \wt{H}}\Pi^{\mathrm{all}}_{[0,\Lambda_0]}\|,
    \end{aligned}
\end{equation}
where $\overline{\Pi}^{\mathrm{all}}_{[0,\Lambda'_0]}=I-\Pi^{\mathrm{all}}_{[0,\Lambda'_0]}$ is the complementary projection with $\Lambda'_0$ to be chosen later. Here in the first line we have used Lemma~\ref{lem:decompose_unitary}, and in the second and third line we have used the decomposition
\begin{equation}
    e^{-i(j-1)\tau \wt{H}}\Pi^{\mathrm{all}}_{[0,\Lambda_0]} = \Pi^{\mathrm{all}}_{[0,\Lambda'_0]}e^{-i(j-1)\tau \wt{H}}\Pi^{\mathrm{all}}_{[0,\Lambda_0]} + \overline{\Pi}^{\mathrm{all}}_{[0,\Lambda'_0]}e^{-i(j-1)\tau \wt{H}}\Pi^{\mathrm{all}}_{[0,\Lambda_0]}.
\end{equation}

By Theorem~\ref{thm:long_time_bound2} and Lemma~\ref{lem:union_bound} we have
\begin{equation}
\label{eq:choose_Lambda'_0_ineq}
    \|\overline{\Pi}^{\mathrm{all}}_{[0,\Lambda'_0]}e^{-i(j-1)\tau \wt{H}}\Pi^{\mathrm{all}}_{[0,\Lambda_0]}\| \leq \sqrt{N_b} \poly(\chi T,\Lambda_0,\Lambda_0')e^{-\Omega((\sqrt{\Lambda'_0}-\sqrt{\Lambda_0})/(\chi T+1))},
\end{equation}
where $\chi$ is given in \eqref{eq:chi_boson_fermion}.
We now only need to bound $\|(S(\tau)-e^{-i\tau \wt{H}})\Pi^{\mathrm{all}}_{[0,\Lambda'_0]}\|$.

\subsection{Trotter error with bounded particle number}

The main result of \cite[Theorem~3]{ChildsSuTranWiebeZhu2021commutator} is a bound on the Trotter error in terms of the spectral norm of nested commutators of Hamiltonian terms. That bound does not take into account the fact that the initial state has a finite number of particles and is thus not suitable for our purpose. Instead, we use an exact representation of Trotter error, which is provided in Theorems~3 and 5 of \cite{ChildsSuTranWiebeZhu2021commutator}.

In \cite[Theorem~3]{ChildsSuTranWiebeZhu2021commutator} they derive the following expression for the Trotter error:
\begin{equation}
    S(\tau) - e^{-i\tau \wt{H}} = \int_{0}^{\tau} \dd \tau_1 e^{-i(\tau-\tau_1)\wt{H}}S(\tau_1)T(\tau_1),
\end{equation}
and by \cite[Theorem~5]{ChildsSuTranWiebeZhu2021commutator}, $T(\tau_1)$ can be written as
\begin{equation}
    \begin{aligned}
    T(\tau_1) &= \sum_{|\vec{\gamma}|=p+1}\sum_{\ell=1}^{L_{\vec{\gamma}}}\sum_{q=1}^p \tau_1^{p-q}\int_{0}^{\tau_1} \dd \tau_2 C_{\vec{\gamma}\ell q} (\tau_1-\tau_2)^{q-1}\\ 
    &\qquad\qquad\qquad\qquad\cdot U^{\dagger}_{\vec{\gamma}\ell}(\tau_1,\tau_2)[\wt{H}_{\gamma_{p+1}},\cdots[\wt{H}_{\gamma_2},\wt{H}_{\gamma_1}]\cdots]U_{\vec{\gamma}\ell}(\tau_1,\tau_2),
    \end{aligned}
\end{equation}
where
\begin{equation}
    U_{\vec{\gamma}\ell}(\tau_1,\tau_2)  = e^{-ic_{\vec{\gamma}\ell 0}\tau_2 \wt{H}_{\vec{\gamma}\ell 0}}\prod_{\nu=1}^{R_{\vec{\gamma}\ell}}e^{-ic_{\vec{\gamma}\ell \nu}\tau_1 \wt{H}_{\vec{\gamma}\ell \nu}}
\end{equation}
is a product of operator exponentials. In the above equations $\wt{H}_{\vec{\gamma}\ell \nu}\in\{\wt{H}_1,\wt{H}_2,\ldots,\wt{H}_5\}$, $\vec{\gamma}=(\gamma_1,\gamma_2,\ldots,\gamma_{p+1})$ is a string of indices for Hamiltonian terms, and $L_{\vec{\gamma}}$, $C_{\vec{\gamma}\ell q}$, $c_{\vec{\gamma}\ell \nu}$, and $R_{\vec{\gamma}\ell}$ are 
constants that only depend on the Trotter formula but not on the Hamiltonian or time variable $\tau$. Also $C_{\vec{\gamma}\ell q}$ is non-zero only when $\gamma_{p+1}=\gamma_{p}=\cdots=\gamma_{p-q+2}$, 
but this property does not affect the asymptotic gate complexity and will thus be ignored in the subsequent analysis.

With this exact representation of Trotter error, we have
\begin{equation}
\label{eq:trotter_error_decomp_particle_num}
    \begin{aligned}
        \|(S(\tau)-e^{-i\tau \wt{H}})\Pi^{\mathrm{all}}_{[0,\Lambda'_0]}\| &\leq \sum_{|\vec{\gamma}|=p+1}\sum_{\ell=1}^{L_{\vec{\gamma}}}\sum_{q=1}^p \int_0^{\tau}\dd \tau_1\tau_1^{p-q}\int_{0}^{\tau_1} \dd \tau_2 |C_{\vec{\gamma}\ell q}| (\tau_1-\tau_2)^{q-1} \\
        &\qquad\qquad\qquad\qquad\times \|[\wt{H}_{\gamma_{p+1}},\cdots[\wt{H}_{\gamma_2},\wt{H}_{\gamma_1}]\cdots]U_{\vec{\gamma}\ell}(\tau_1,\tau_2)\Pi^{\mathrm{all}}_{[0,\Lambda'_0]}\| \\
        &\leq \sum_{|\vec{\gamma}|=p+1}\sum_{\ell=1}^{L_{\vec{\gamma}}}\sum_{q=1}^p \int_0^{\tau}\dd \tau_1\tau_1^{p-q}\int_{0}^{\tau_1} \dd \tau_2 |C_{\vec{\gamma}\ell q}| (\tau_1-\tau_2)^{q-1} \\
        &\qquad\qquad\qquad\qquad\times \big(\|[\wt{H}_{\gamma_{p+1}},\cdots[\wt{H}_{\gamma_2},\wt{H}_{\gamma_1}]\cdots]\Pi^{\mathrm{all}}_{[0,\Lambda'_1]}\| \\
        &\qquad\qquad\qquad\qquad\qquad+ \|[\wt{H}_{\gamma_{p+1}},\cdots[\wt{H}_{\gamma_2},\wt{H}_{\gamma_1}]\cdots]\|\|\overline{\Pi}^{\mathrm{all}}_{[0,\Lambda'_1]}U_{\vec{\gamma}\ell}(\tau_1,\tau_2)\Pi^{\mathrm{all}}_{[0,\Lambda'_0]}\| \big),
    \end{aligned}
\end{equation}
where we have used the decomposition
\begin{equation}
    U_{\vec{\gamma}\ell}(\tau_1,\tau_2)\Pi^{\mathrm{all}}_{[0,\Lambda'_0]} = \Pi^{\mathrm{all}}_{[0,\Lambda'_1]}U_{\vec{\gamma}\ell}(\tau_1,\tau_2)\Pi^{\mathrm{all}}_{[0,\Lambda'_0]} + \overline{\Pi}^{\mathrm{all}}_{[0,\Lambda'_1]}U_{\vec{\gamma}\ell}(\tau_1,\tau_2)\Pi^{\mathrm{all}}_{[0,\Lambda'_0]},
\end{equation}
for some $\Lambda'_1$ to be chosen later.
Since $\tau=\Or(1)$ ($\tau$ should be chosen to be much smaller than constant to suppress Trotter error) and $\sum_{\nu=0}^{R_{\vec{\gamma}\ell}}|c_{\vec{\gamma}\ell \nu}|$ is a constant, we have from Theorem~\ref{thm:long_time_bound2} that
\begin{equation}
\label{eq:leakage_within_trotter}
    \|\overline{\Pi}^{\mathrm{all}}_{[0,\Lambda'_1]}U_{\vec{\gamma}\ell}(\tau_1,\tau_2)\Pi^{\mathrm{all}}_{[0,\Lambda'_0]}\| = \poly(\chi,\Lambda'_0,\Lambda'_1)e^{-\Omega((\sqrt{\Lambda'_1}-\sqrt{\Lambda'_0})/(\chi+1))}.
\end{equation}
Here $U_{\vec{\gamma}\ell}(\tau_1,\tau_2)$ involves a constant number of operator exponentials each of which is generated by a term $\wt{H}_{\gamma}$ from the Hamiltonian. We note that Theorem~\ref{thm:long_time_bound2} applies to each operator exponential because $\wt{H}_{\gamma}$ also has the structure described in Section~\ref{sec:common_structures}. Thus we can apply Theorem~\ref{thm:long_time_bound2} a constant number of times to arrive at inequality \eqref{eq:leakage_within_trotter}.

We now follow \cite{ChildsSuTranWiebeZhu2021commutator} to define
\begin{equation}
    \wt{\alpha}_{\mathrm{comm}}(\wt{\Lambda}) = \sum_{|\vec{\gamma}|=p+1}\left\|[\wt{H}_{\gamma_{p+1}},\cdots[\wt{H}_{\gamma_2},\wt{H}_{\gamma_1}]\cdots]\right\|.
\end{equation}
There is a $\wt{\Lambda}$ dependence because the truncated Hamiltonian depends on the threshold $\wt{\Lambda}$. 
Furthermore, we define 
\begin{equation}
    \wt{\beta}_{\mathrm{comm}}(\wt{\Lambda},\Lambda'_1) = \sum_{|\vec{\gamma}|=p+1}\left\|[\wt{H}_{\gamma_{p+1}},\cdots[\wt{H}_{\gamma_2},\wt{H}_{\gamma_1}]\cdots]\Pi^{\mathrm{all}}_{[0,\Lambda'_1]}\right\|.
\end{equation}
Then by \eqref{eq:trotter_error_decomp_particle_num} and \eqref{eq:leakage_within_trotter}
\begin{equation}
\label{eq:trotter_error_small_time_bound}
\begin{aligned}
    \|(S(\tau)-e^{-i\tau \wt{H}})\Pi^{\mathrm{all}}_{[0,\Lambda'_0]}\| &= \Or(\wt{\alpha}_{\mathrm{comm}}(\wt{\Lambda})\tau^{p+1}\poly(\chi,\Lambda'_0,\Lambda'_1)e^{-\Omega((\sqrt{\Lambda'_1}-\sqrt{\Lambda'_0})/(\chi+1))}) \\ 
    &\quad+ \Or(\wt{\beta}_{\mathrm{comm}}(\wt{\Lambda},\Lambda'_1)\tau^{p+1}). 
\end{aligned}
\end{equation}
Combining this with \eqref{eq:trotter_error_total} and \eqref{eq:choose_Lambda'_0_ineq}, we bound the total error from the Trotter decomposition as
\begin{equation}
    \label{eq:trotter_error_total_bound}
    \begin{aligned}
        \|(e^{-iT\wt{H}}-S(\tau)^R)\Pi^{\mathrm{all}}_{[0,\Lambda_0]}\| &\leq R\sqrt{N_b} \poly(\chi T,\Lambda_0,\Lambda_0')e^{-\Omega((\sqrt{\Lambda'_0}-\sqrt{\Lambda_0})/(\chi T+1))} \\
        &\quad+ \Or(\wt{\alpha}_{\mathrm{comm}}(\wt{\Lambda})T^{p+1}R^{-p}\poly(\chi,\Lambda'_0,\Lambda'_1)e^{-\Omega((\sqrt{\Lambda'_1}-\sqrt{\Lambda'_0})/(\chi+1))}) \\ 
        &\quad+ \Or(\wt{\beta}_{\mathrm{comm}}(\wt{\Lambda},\Lambda'_1)T^{p+1}R^{-p}),
    \end{aligned}
\end{equation}
where we have used the relation $T=R\tau$.

We now choose $\Lambda'_0$, $\Lambda'_1$, $\wt{\Lambda}$, and $R$ so that the right-hand side of the above inequality is at most $\epsilon$, while simultaneously keeping the truncation error from Theorem~\ref{thm:truncate_ham} below $\epsilon$.

There is one other constraint in our choice of parameters: we need to ensure the canonical commutation relation of $X_{\alpha}$ and $P_{\alpha}$, when replaced by $\wt{X}_{\alpha}$ and $\wt{P}_{\alpha}$, holds exactly when evaluating $\wt{\beta}_{\mathrm{comm}}(\wt{\Lambda},\Lambda'_1)$. 
Note that in $\wt{\beta}_{\mathrm{comm}}(\wt{\Lambda},\Lambda'_1)$ there are at most $2(p+1)$ truncated position and momentum operators multiplied together because each Hamiltonian term is at most quadratic in these operators. Thus, if
\begin{equation}
\label{eq:M_M_1'_relation}
    \wt{\Lambda} \geq \Lambda'_1+2(p+1),
\end{equation}
then we can simply treat $\wt{X}_{\alpha}$ and $\wt{P}_{\alpha}$ as if they satisfy the exact canonical commutation relation when evaluating $\wt{\beta}_{\mathrm{comm}}(\wt{\Lambda},\Lambda'_1)$. We recall that $\wt{\Lambda}$ is the particle number truncation threshold for the Hamiltonian. To be more specific, this means
\begin{equation}
    [\wt{H}_{\gamma_{p+1}},\cdots[\wt{H}_{\gamma_2},\wt{H}_{\gamma_1}]\cdots]\Pi^{\mathrm{all}}_{[0,\Lambda'_1]} = [{H}_{\gamma_{p+1}},\cdots[{H}_{\gamma_2},{H}_{\gamma_1}]\cdots]\Pi^{\mathrm{all}}_{[0,\Lambda'_1]}
\end{equation}
when \eqref{eq:M_M_1'_relation} is satisfied.

With this extra constraint and \eqref{eq:trotter_error_total_bound}, we choose
\begin{equation}
    \label{eq:trotter_choice_of_parameters}
    \begin{aligned}
        \sqrt{\Lambda_0'} &= \sqrt{\Lambda_0} + \wt{\Or}(\chi T\polylog(\mathrm{coef},\epsilon^{-1})), \\
        \sqrt{\Lambda_1'} &= \sqrt{\Lambda_0} + \wt{\Or}(\chi T\polylog(\mathrm{coef},\epsilon^{-1})), \\
        \sqrt{\wt{\Lambda}}    &= \sqrt{\Lambda_0} + \wt{\Or}(\chi T\polylog(\mathrm{coef},\epsilon^{-1})), \\
        R   &= \wt{\Or}(T^{1+1/p}(\wt{\beta}_{\mathrm{comm}}(\wt{\Lambda},\Lambda_1'))^{1/p}\epsilon^{-1/p}),
    \end{aligned}
\end{equation}
where $\mathrm{coef}$ denotes all the coefficients $t,V,g,h,\omega$ in the Hamiltonian $H$, and $\mathcal{C}$ is defined in \eqref{eq:chi_boson_fermion}. This choice of $\wt{\Lambda}$ will also ensure that the Hamiltonian truncation error is upper bounded by $\epsilon$ by Theorem~\ref{thm:truncate_ham}. 
In choosing these parameters, we have omitted the scaling with $\wt{\alpha}_{\mathrm{comm}}(\wt{\Lambda})$.
This is because $\wt{\alpha}_{\mathrm{comm}}(\wt{\Lambda})$ is upper bounded by a polynomial of $\wt{\Lambda}$ and the Hamiltonian coefficients, and it gets absorbed into the poly-logarithmc factors.

\subsection{Bounding the nested commutators}
\label{sec:bounding_the_nested_commutators}

It now remains to bound $\wt{\beta}_{\mathrm{comm}}(\wt{\Lambda},\Lambda_1')$. 
Suppose we are given a series of indices of Hamiltonian terms $\gamma_1,\gamma_2,\ldots$. We will show that 
\begin{equation}
\label{eq:nested_commutator_bound}
    \|[\wt{H}_{\gamma_{p+1}},\cdots[\wt{H}_{\gamma_2},\wt{H}_{\gamma_1}]\cdots]\Pi^{\mathrm{all}}_{[0,\Lambda'_1]}\| \leq A_{\gamma_{p+1}}^{(p)}A_{\gamma_{p-1}}^{(p-2)}\cdots A_{\gamma_2}^{(1)}B_{\gamma_1},
\end{equation}
where
\begin{equation}
    \begin{aligned}
    &A_{1}^{(q)} = 2q\max_i\sum_j |t_{ij}|,\quad B_1 = \sum_{ij}|t_{ij}|, \quad
    A_{2}^{(q)} = 4q\max_i\sum_j |V_{ij}|,\quad B_2 = \sum_{ij}|V_{ij}|, \\
    &A_3^{(q)} = 2q\max_{j}\sum_{\alpha i}|g^{(\alpha)}_{ij}|\sqrt{2(\Lambda'_1+1)}+q\max_{\alpha}\sum_{ ij}|g^{(\alpha)}_{ij}|(2(\Lambda'_1+1))^{-1/2}, \\
    &A_4^{(q)} = 2q\max_{j}\sum_{\alpha i}|h_{ij}^{(\alpha)}|\sqrt{2(\Lambda'_1+1)}+q\max_{\alpha}\sum_{ ij}|h_{ij}^{(\alpha)}|(2(\Lambda'_1+1))^{-1/2}, \\
    &B_3 = \sum_{\alpha ij} |g_{ij}^{(\alpha)}|\sqrt{2(\Lambda'_1+1)},\quad B_4 = \sum_{\alpha ij} |h_{ij}^{(\alpha)}|\sqrt{2(\Lambda'_1+1)}, \\
    &A_5^{(q)}=A_6^{(q)} = q\max_{\alpha}|\omega_{\alpha}|,\quad B_5=B_6=\sum_{\alpha}|\omega_{\alpha}|(\Lambda'_1+1).
    \end{aligned}
\end{equation}
At a high level, $A_{\gamma}^{(q)}$'s quantify the growth of the nested commutator when the nesting layer increases by one, while $B_{\gamma}$'s are chosen to handle the base case when there is only one operator.

Once we have established \eqref{eq:nested_commutator_bound}, we define
\begin{equation}
\label{eq:quantities_AB_trotter}
    A = A_1^{(p)}+\cdots+A_6^{(p)},\quad B=B_1+\cdots+B_6,
\end{equation}
then
\begin{equation}
    \wt{\beta}_{\mathrm{comm}}(\wt{\Lambda},\Lambda_1') =\Or(A^p B).
\end{equation}
This implies that we need 
\begin{equation}
\label{eq:number_trotter_steps}
    R = \wt{\Or}\left(AB^{1/p}T^{1+1/p}\epsilon^{-1/p}\right)
\end{equation}
Trotter steps by \eqref{eq:trotter_choice_of_parameters}. In the above analysis we treat the order $p$ as a constant. The gate complexity depends on how we implement each $e^{-it\wt{H}_{\gamma}}$. 
For concreteness, we analyze the gate complexity of simulating the Hubbard-Holstein model in the next section, although the approach may be extended to simulate other quantum systems within our framework.

We now derive the bound \eqref{eq:nested_commutator_bound} for an arbitrary nested commutator. 
We first note that a nested commutator multiplied to a projection operator  $[\wt{H}_{\gamma_q},\cdots[\wt{H}_{\gamma_2},\wt{H}_{\gamma_1}]\cdots]\Pi^{\mathrm{all}}_{[0,\Lambda'_1]}$ can be written as a linear combination of products of at most $q$ fermionic operators $c_i^{\dagger}c_j$, and at most $q$ projected bosonic position or momentum operators, multiplied to the projection operator at the end. This can be proved inductively. 
We introduce some notations to formalize this observation.
For convenience we denote
\begin{equation}
    Q^{\varsigma}_\alpha = 
    \begin{cases}
    \wt{X}_{\alpha},\ \varsigma=0,\\
    \wt{P}_{\alpha},\ \varsigma=1.
    \end{cases}
\end{equation}
We first define a set of index strings:
\begin{equation}
\label{eq:allowd_indices}
    \Xi_q = \{(\vec{i},\vec{j},\vec{\alpha},\vec{\varsigma}):|\vec{i}|=|\vec{j}|\leq q,|\vec{\alpha}|=|\vec{\varsigma}|\leq q\},
\end{equation}
where $\vec{i}$ and $\vec{j}$ are strings of fermionic mode indices, $\vec{\alpha}$ is a string of bosonic mode indices, and $\vec{\varsigma}$ is a string of $0$'s and $1$'s.
Then the claimed expansion is formally given by
\begin{equation}
    [\wt{H}_{\gamma_q},\cdots[\wt{H}_{\gamma_2},\wt{H}_{\gamma_1}]\cdots]\Pi^{\mathrm{all}}_{[0,\Lambda'_1]} = \sum_{(\vec{i},\vec{j},\vec{\alpha},\vec{\varsigma})\in\Xi_q} S_{(\vec{i},\vec{j},\vec{\alpha},\vec{\varsigma})}^{(q)} \prod_{k=1}^{|\vec{i}|}c_{i_k}^{\dagger}c_{j_k}\prod_{k=1}^{|\vec{\alpha}|}Q_{\alpha_k}^{\varsigma_k}\Pi^{\mathrm{all}}_{[0,\Lambda'_1]},
\end{equation}
for some coefficients $S_{(\vec{i},\vec{j},\vec{\alpha},\vec{\varsigma})}^{(q)}$.
This can be readily proved by induction on $q$.

Now we define
\begin{equation}
    \mathcal{S}_q = \sum_{(\vec{i},\vec{j},\vec{\alpha},\vec{\varsigma})\in\Xi_q} |S_{(\vec{i},\vec{j},\vec{\alpha},\vec{\varsigma})}^{(q)}|(2(\Lambda'_1+1))^{|\vec{\alpha}|/2}.
\end{equation}
Then by the triangle inequality and the fact that $\|\wt{X}_{\alpha}\|,\|\wt{P}_{\alpha}\|\leq \sqrt{2(\Lambda'_1+1)}$, we have 
\begin{equation}
    \|[\wt{H}_{\gamma_q},\cdots[\wt{H}_{\gamma_2},\wt{H}_{\gamma_1}]\cdots]\Pi^{\mathrm{all}}_{[0,\Lambda'_1]}\|\leq \mathcal{S}_q.
\end{equation}
Therefore we only need to show
\begin{equation}
\label{eq:one_norm_coef_commutators}
    \mathcal{S}_q \leq A_{\gamma_q}^{q-1}\cdots A_{\gamma_2}^{1} B_{\gamma_1}.
\end{equation}
This is done inductively by examining the commutator $\left[\wt{H}_{\gamma},\prod_{k=1}^{|\vec{i}|}c_{i_k}^{\dagger}c_{j_k}\prod_{k=1}^{|\vec{\alpha}|}Q_{\alpha_k}^{\varsigma_k}\right]\Pi^{\mathrm{all}}_{[0,\Lambda'_1]}$ for each $\gamma$, which gives an upper bound for $\mathcal{S}_{q+1}$ that depends on $\mathcal{S}_q$.
Combined with the fact that $\|\wt{H}_{\gamma}\|\leq B_{\gamma}$ we will establish \eqref{eq:one_norm_coef_commutators}.

For simplicity, we only provide the proof for the induction step when $\gamma_{q+1}=3$ corresponds to one of the boson-fermion coupling terms, which together with $\gamma_{q+1}=4$, are the most complicated ones to analyze among all the possible choices of $\gamma_{q+1}$. We have
\begin{equation}
\label{eq:commutator_H2}
    \begin{aligned}
    \left[\wt{H}_3,\prod_{k=1}^{|\vec{i}|}c_{i_k}^{\dagger}c_{j_k}\prod_{k=1}^{|\vec{\alpha}|}Q_{\alpha_k}^{\varsigma_k}\right]\Pi^{\mathrm{all}}_{[0,\Lambda'_1]}
    &= \sum_{\alpha ij} \left[g_{ij}^{(\alpha)}c_i^{\dagger}c_j X_{\alpha},\prod_{k=1}^{|\vec{i}|}c_{i_k}^{\dagger}c_{j_k}\prod_{k=1}^{|\vec{\alpha}|}Q_{\alpha_k}^{\varsigma_k}\right]\Pi^{\mathrm{all}}_{[0,\Lambda'_1]} \\
    &= \sum_{\alpha ij} g_{ij}^{(\alpha)}\left[c_i^{\dagger}c_j,\prod_{k=1}^{|\vec{i}|}c_{i_k}^{\dagger}c_{j_k}\right]X_{\alpha}\prod_{k=1}^{|\vec{\alpha}|}Q_{\alpha_k}^{\varsigma_k}\Pi^{\mathrm{all}}_{[0,\Lambda'_1]} \\
    &\quad + \sum_{\alpha ij} g_{ij}^{(\alpha)}\left(\prod_{k=1}^{|\vec{i}|}c_{i_k}^{\dagger}c_{j_k}\right)c_i^{\dagger}c_j\left[X_{\alpha},\prod_{k=1}^{|\vec{\alpha}|}Q_{\alpha_k}^{\varsigma_k}\right]\Pi^{\mathrm{all}}_{[0,\Lambda'_1]},
    \end{aligned}
\end{equation}
where we have used the identity that for any operators $O_1,O_2,O_3,O_4$, 
\begin{equation}
    [O_1\otimes O_2,O_3\otimes O_4] = [O_1,O_3]\otimes O_2O_4 + O_1 O_3\otimes [O_2,O_4].
\end{equation}
We then apply the commutation rule
\begin{equation}
    [c_i^{\dagger}c_j,c_k^{\dagger}c_l] = c_i^{\dagger}c_l\delta_{jk} - c_k^{\dagger}c_j\delta_{il},
\end{equation}
so the second line of \eqref{eq:commutator_H2} becomes
\begin{equation}
    \sum_{k'=1}^{|\vec{i}|}c_{i_{|\vec{i}|}}^{\dagger}c_{j_{|\vec{i}|}}\cdots\left(\sum_{\alpha i} g^{(\alpha)}_{ii_{k'}}c_i^{\dagger}c_{j_{k'}}-\sum_{\alpha j}g^{(\alpha)}_{j_{k'}j}c_{i_{k'}}^{\dagger}c_j\right)\cdots c_{i_1}^{\dagger}c_{j_1}X_{\alpha}\prod_{k=1}^{|\vec{\alpha}|}Q_{\alpha_k}^{\varsigma_k}\Pi^{\mathrm{all}}_{[0,\Lambda'_1]}.
\end{equation}
In the above sum of products of fermionic and bosonic operators, the number of bosonic operators in the product is increased by $1$, and the absolute value of the coefficients $g^{(\alpha)}_{ii_{k'}}$ and $g^{(\alpha)}_{j_{k'}j}$ add up to at most $2q\max_{j}\sum_{\alpha i}|g^{(\alpha)}_{ij}|$. Therefore the contribution to $\mathcal{S}_{q+1}$ is at most $2q\max_{j}\sum_{\alpha i}|g^{(\alpha)}_{ij}|\sqrt{2(\Lambda'_1+1)}\mathcal{S}_{q}$.

For the third line in \eqref{eq:commutator_H2}, we have
\begin{equation}
\begin{aligned}
    &\sum_{\alpha ij} g_{ij}^{(\alpha)}\left(\prod_{k=1}^{|\vec{i}|}c_{i_k}^{\dagger}c_{j_k}\right)c_i^{\dagger}c_j\sum_{k'=1}^{|\vec{\alpha}|}\delta_{\alpha,\alpha_{k'}}\delta_{\varsigma_{k'},1}Q^{\varsigma_{|\vec{\alpha}|}}_{\alpha_{|\vec{\alpha}|}}\cdots Q^{\varsigma_{k'+1}}_{\alpha_{k'+1}}Q^{\varsigma_{k'-1}}_{\alpha_{k'-1}}\cdots Q^{\varsigma_{1}}_{\alpha_{1}}\Pi^{\mathrm{all}}_{[0,\Lambda'_1]} \\
    &= \sum_{ij} \left(\prod_{k=1}^{|\vec{i}|}c_{i_k}^{\dagger}c_{j_k}\right)c_i^{\dagger}c_j\sum_{k'=1}^{|\vec{\alpha}|}g_{ij}^{\alpha_{k'}}\delta_{\varsigma_{k'},1}Q^{\varsigma_{|\vec{\alpha}|}}_{\alpha_{|\vec{\alpha}|}}\cdots Q^{\varsigma_{k'+1}}_{\alpha_{k'+1}}Q^{\varsigma_{k'-1}}_{\alpha_{k'-1}}\cdots Q^{\varsigma_{1}}_{\alpha_{1}}\Pi^{\mathrm{all}}_{[0,\Lambda'_1]},
\end{aligned}
\end{equation}
where we have used the canonical commutation relation between $X_{\alpha}$ and $P_{\alpha}$ on $\wt{X}_{\alpha}$ and $\wt{P}_{\alpha}$. This is justified because the nested commutator is multiplied to the projection operator $\Pi^{\mathrm{all}}_{[0,\Lambda'_1]}$ and we have imposed the constraint \eqref{eq:M_M_1'_relation}.
The sum of the absolute value of the coefficients is at most $q\max_{\alpha}\sum_{ij}|g^{(\alpha)}_{ij}|$ and the the number of bosonic operators in the product is reduced by $1$. Therefore the contribution to $S_{q+1}$ is at most $q\max_{\alpha}\sum_{ij}|g^{(\alpha)}_{ij}|(2(\Lambda'_1+1))^{-1/2}S_q$.

Combining our analysis for the second and third lines of \eqref{eq:commutator_H2} we have
\begin{equation}
    S_{q+1}\leq 2q\max_{j}\sum_{\alpha i}|g^{(\alpha)}_{ij}|\sqrt{2(\Lambda'_1+1)}\mathcal{S}_{q} + q\max_{\alpha}\sum_{ij}|g^{(\alpha)}_{ij}|(2(\Lambda'_1+1))^{-1/2}S_q = A_2^{(q-1)}\mathcal{S}_{q},
\end{equation}
if $\gamma_{q+1}=3$. The commutators with the other $\wt{H}_{\gamma}$'s can be analyzed in a similar way.
The proof of \eqref{eq:nested_commutator_bound} is now completed.

\subsection{Simulating the Hubbard-Holstein model with Trotterization}
\label{sec:hubbard_holstein_trotter}

We recall the definition of the Hubbard Holstein model given in Section~\ref{sec:sim_boson_fermion}: The Hamiltonian is
\begin{equation}
    % \label{eq:ham_hubbard_holstein}
    H = H_f + H_{fb} + H_b,
\end{equation}
where $H_{f}$ is the Hamiltonian of the Fermi-Hubbard model:
\begin{equation}
    % \label{eq:ham_fermi_hubbard}
    H_{f} = -\sum_{\braket{x,x'},\sigma}(c_{x,\sigma}^{\dagger}c_{x',\sigma}+\mathrm{h.c.})+ U\sum_{x=1}^N (n_{x,\uparrow}-\frac{1}{2})(n_{x,\downarrow}-\frac{1}{2}) - \mu\sum_{r=1}^N n_x,
\end{equation}
and
\begin{equation}
    % \label{eq:ham_hubbard_holstein}
    H_{fb} =  g \sum_{x=1}^N (b_x^{\dagger}+b_x)(n_{x,\uparrow}+n_{x,\downarrow}-1) \quad 
    H_b = \omega_0 \sum_{x=1}^N b_{x}^{\dagger}b_x,
\end{equation}
are the boson-fermion coupling part and bosonic part respectively. The lattice sites are indexed by $x$ and $x'$, and spins are indexed by $\sigma$.
As in Section~\ref{sec:sim_boson_fermion}, we assume for simplicity that all model parameters except for the system size $N$, i.e. $g$, $\omega_0$, $U$, $\mu$, are all constants. We consider the case where the time evolution starts with an initial state that has at most $\Lambda_0$ bosonic particles at each site.

We note that this Hamiltonian satisfies the general form of boson-fermion coupling Hamiltonians given in \eqref{eq:ham_fermion_boson_1}. 
Therefore we can directly apply our above analysis to analyze the number of required Trotter steps. 
First we note that all the quantities involved in $A$ given in \eqref{eq:quantities_AB_trotter}, i.e.
\[
\|t\|_1,\ \|V\|_1,\ \max_j\sum_{\alpha i}|g^{(\alpha)}_{ij}|,\ \max_j\sum_{\alpha i}|h_{ij}^{(\alpha)}|,\ \max_{\alpha}\sum_{ij}|g^{(\alpha)}_{ij}|,\ \max_{\alpha}\sum_{ij}|h_{ij}^{(\alpha)}|,\ \max_{\alpha}|\omega_{\alpha}|,
\]
are upper bounded by some constants. This follows from the sparsity of the coefficient matrices $t$, $V$, $g^{(\alpha)}$, $h^{(\alpha)}$ (in fact $h^{(\alpha)}=0$ in this model). Similarly, all the quantities involved in $B$ given in \eqref{eq:quantities_AB_trotter}, i.e.
\[
\sum_{ij}|t_{ij}|,\ \sum_{ij}|V_{ij}|,\ \sum_{\alpha ij} |g_{ij}^{(\alpha)}|,\ \sum_{\alpha ij} |h_{ij}^{(\alpha)}|,\ \sum_{\alpha}|\omega_{\alpha}|,
\]
are all $\Or(N)$. Therefore we have
\begin{equation}
    A = \Or(\sqrt{\Lambda'_1}),\quad B=\Or(N\Lambda'_1).
\end{equation}
Then by \eqref{eq:number_trotter_steps} the number of Trotter steps required to simulate the Hubbard-Holstein model is
\begin{equation}
    R = \Or\left(\sqrt{\Lambda'_1}(N\Lambda'_1)^{1/p}T^{1+1/p}\epsilon^{-1/p}\right).
\end{equation}
Note that $\Lambda'_1$ has the asymptotic scaling described in \eqref{eq:trotter_choice_of_parameters}.
Taking into account the fact that for the Hubbard-Holstein model $\Tr|g^{(\alpha)}|,\Tr|h^{(\alpha)}|=\Or(1)$, which implies further that $\chi=\Or(1)$, we have
\begin{equation}
    \Lambda'_1 = \left(\sqrt{\Lambda_0}+\wt{\Or}(T\polylog(N\epsilon^{-1}))\right)^2,
\end{equation}
which gives
\begin{equation}
    R = \wt{\Or}\left(N^{1/p}(\sqrt{\Lambda_0}+T)^{1+2/p}T^{1+1/p}\epsilon^{-1/p}\right).
\end{equation}
Each Trotter step can be implemented with $\wt{\Or}(N\polylog(\Lambda_0 T\epsilon^{-1}))$ gates, and therefore the total gate complexity is
\begin{equation}
    \wt{\Or}\left(N^{1+1/p}(\sqrt{\Lambda_0}+T)^{1+2/p}T^{1+1/p}\epsilon^{-1/p}\right).
\end{equation}
For large $p$, this almost matches the gate complexity derived in Section~\ref{sec:sim_boson_fermion} based on the HHKL decomposition.

%%%%%%%%%%%%%%%%%%%%%%%%%%%%%%%%%%%%%%%%%%%%%%%%%%%%%%%%%%%%%%%%%%%%%%%%%%%%
%%
\section{A gate complexity lower bound for simulating bosons}
\label{sec:lower_bound}
 
In Sections~\ref{sec:sim_boson_fermion} and \ref{sec:boson_fermion_trotter}, we have discussed the gate complexity of simulating the Hubbard-Holstein model. 
One distinctive feature is that the scaling with respect to time is almost quadratic, instead of being almost linear for simulating bounded Hamiltonians. 
In this section we construct a class of Hamiltonians acting on a single bosonic mode and a register of qubits, for which performing simulation up to time $T$ will require at least $\wt{\Omega}(T^2)$ gates. This shows that simulation involving bosons cannot in general be expected to have a linear dependence on time. Note that here by simulation we mean simulating only the qubit part of the boson-qubit coupled system, and as a result we only need to deal with a finite-dimensional Hilbert space.

Specifically we consider the time evolution of a bosonic mode coupled to a register containing $N$ qubits. 
We will label the bosonic mode by a subscript $\beta$ and the qubit register by a subscript $q$. 
A product state is written as $\ket{\psi}_{\beta}\ket{\phi}_q$, where $\ket{\psi}_{\beta}$ is the state of the bosonic mode, and $\ket{\phi}_q$ is the state of the qubit register. We call this qubit register the \emph{$q$-register} because later we need an additional qubit register. When simulating the time evolution of this system, we consider a unitary circuit $W$ acting jointly on an ancilla register, which we label as $\mathrm{anc}$, and the $q$-register. We will also denote by $\ket{\lambda}_{\beta}$ the $\varepsilon$-particle state of the bosonic mode.
\begin{thm}
\label{thm:lower_bound}
    For any integers $N$ and $T$ such that $1\leq \sqrt{N}\leq T\leq 2^{N/2}$,
    there exists a boson-qubit coupled Hamiltonian $H = Ub + b^{\dagger} U^{\dagger}$,
    where $b$ and $b^{\dagger}$ are the bosonic annihilation and creation operators respectively,
    and $U$ is a unitary acting on the bosonic mode and $N$ qubits (the $q$-register) that preserves the bosonic number.
    If a quantum circuit $W$ satisfies
    \begin{equation}
    \label{eq:approx_expectation}
    \big|\bra{0}_{\beta}\bra{\phi}_q e^{iTH}(I_{\beta}\otimes O) e^{-iTH}\ket{0}_{\beta}\ket{\phi}_q-\bra{0}_{\mathrm{anc}}\bra{\phi}_q W^{\dagger}(I_{\mathrm{anc}}\otimes O)W\ket{0}_{\mathrm{anc}}\ket{\phi}_q\big|\leq 0.1
    \end{equation}
    for all $\ket{\phi}_q$,
    then $W$ must use at least $\wt{\Omega}(NT^2)$ 2-qubit gates.
    Here, $I_{\beta}$ and $I_{\mathrm{anc}}$ are the identity operator on the bosonic mode and the ancilla register respectively, and $O=\ket{0}\bra{0}\otimes I$ is the projection onto the $\ket{0}$ state of the first qubit of the $q$-register.
    
    The quantum circuit $W$ may use an arbitrarily large number of ancilla qubits, and gates in $W$ may be non-local and come from a possibly infinite gate set.
\end{thm}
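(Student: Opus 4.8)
The plan is to solve the dynamics in closed form, reduce the simulation task to approximately implementing a single fixed projective measurement on the $q$-register, and then use a counting argument to exhibit a Hamiltonian for which that measurement admits no small circuit.

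\emph{Step 1 (solve the dynamics by dressing).} Since $U$ preserves the bosonic number it is block diagonal in the Fock basis of the mode, $U=\sum_{\lambda\ge0}\ketbra{\lambda}{\lambda}_\beta\otimes V_\lambda$ with each $V_\lambda$ unitary on the $q$-register. I would introduce the dressing unitary $R=\sum_{\lambda\ge0}\ketbra{\lambda}{\lambda}_\beta\otimes W_\lambda$ with $W_0=I$, $W_{\lambda+1}=V_\lambda^\dagger W_\lambda$, so $W_\lambda=V_{\lambda-1}^\dagger\cdots V_0^\dagger$; a one-line computation on the $\lambda\mapsto\lambda\pm1$ hopping terms of $H=Ub+b^\dagger U^\dagger$ gives $R^\dagger H R=(b+b^\dagger)\otimes I_q$. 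Hence $e^{-iTH}=R\,(D_T\otimes I_q)\,R^\dagger$ with $D_T=e^{-iT(b+b^\dagger)}$ a displacement of the mode, and applying this to $\ket{0}_\beta\ket{\phi}_q$ (using $W_0=I$) gives $e^{-iTH}\ket{0}_\beta\ket{\phi}_q=\sum_{\lambda\ge0}\alpha_\lambda\,\ket{\lambda}_\beta\otimes W_\lambda\ket{\phi}_q$ with $|\alpha_\lambda|^2=e^{-T^2}T^{2\lambda}/\lambda!$ the $\mathrm{Poisson}(T^2)$ weights. Thus the first quantity in \eqref{eq:approx_expectation} equals $\bra{\phi}M_0\ket{\phi}$ with $M_0=\mathbb{E}_{\lambda\sim\mathrm{Poisson}(T^2)}\big[W_\lambda^\dagger O W_\lambda\big]$.

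\emph{Step 2 (freeze the walk above the Poisson bulk).} I would then restrict to Hamiltonians with $V_\lambda=I$ for all $\lambda\ge L$, where $L=\big\lfloor T^2-c_0 T\ln T\big\rfloor$ for a large constant $c_0$; then $W_\lambda=W_L=:W^\star$ for all $\lambda\ge L$, and a Chernoff bound gives $\Pr_{\lambda\sim\mathrm{Poisson}(T^2)}[\lambda<L]\le T^{-\Omega(\ln T)}\le 0.01$ throughout the stated range of $(N,T)$ (small cases absorbed into the $\wt{\Omega}$). Hence $\|M_0-(W^\star)^\dagger O W^\star\|\le 0.02$, so any circuit $W$ obeying \eqref{eq:approx_expectation} satisfies $\big|\bra{\phi}(W^\star)^\dagger O W^\star\ket{\phi}-\bra{0}_{\mathrm{anc}}\bra{\phi}W^\dagger(I_{\mathrm{anc}}\otimes O)W\ket{0}_{\mathrm{anc}}\ket{\phi}\big|\le 0.2$ for every unit vector $\ket{\phi}_q$. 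Letting $E=\bra{0}_{\mathrm{anc}}W^\dagger(I_{\mathrm{anc}}\otimes O)W\ket{0}_{\mathrm{anc}}$ be the effective POVM element $W$ implements on the $q$-register, and using $\max_{\|\phi\|=1}|\bra{\phi}A\ket{\phi}|=\|A\|$ for Hermitian $A$, this reads $\|E-P^\star\|\le 0.2$ with $P^\star=(W^\star)^\dagger O W^\star$ a rank-$2^{N-1}$ projector. So any valid $W$ with $G$ two-qubit gates yields a $G$-gate circuit realizing a POVM element $0.2$-close in operator norm to $P^\star$.

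\emph{Step 3 (pick $W^\star$ by counting).} The remaining freedom is $V_0,\dots,V_{L-1}$, i.e.\ $W^\star=V_{L-1}^\dagger\cdots V_0^\dagger$. I would restrict each $V_\lambda$ $(0\le\lambda<L)$ to a single brickwork layer, a tensor product of $\lfloor N/2\rfloor$ two-qubit gates on a fixed pairing of the qubits, each gate drawn from a fixed finite $(1/100)$-net of $SU(4)$; then $W^\star$ ranges over all depth-$L$ brickwork circuits on $N$ qubits from this net, a family of size $2^{\Theta(NL)}=2^{\Theta(NT^2)}$ (and this $U$ is efficiently implementable, so the matching upper bound $\wt{\Or}(NT^2)$ also applies). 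I would argue this family contains $2^{\Omega(NT^2)}$ unitaries whose projectors $P^\star$ are pairwise more than $0.4$-separated in operator norm: the key input is that the map from the $\Theta(NT^2)$ gate parameters to $P^\star$ has differential of rank $\min\{\Theta(NT^2),\Theta(4^N)\}=\Theta(NT^2)$ at a generic point (using $NT^2\lesssim 4^N$, which is where $T\le2^{N/2}$ is used), so its image is a manifold of dimension $\Theta(NT^2)$ with the corresponding metric entropy. Conversely, a circuit with $G$ two-qubit gates, arbitrary gate set, and arbitrarily many ancillas realizes, at the relevant precision, at most $2^{\wt{\Or}(G)}$ distinct effective POVM elements $E$ (only the $\le G+1$ qubits in the backward light cone of the output qubit matter, and discretizing each gate to precision $1/\mathrm{poly}(G,N)$ costs $\wt{\Or}(G)$ bits). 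Since each $E$ is $0.2$-close to at most one of the $>0.4$-separated $P^\star$'s, for $G$ smaller than $NT^2$ by more than a polylogarithmic factor some $P^\star$ in the family, hence some Hamiltonian $H$, is realized by no such circuit; that $H$ forces $G=\wt{\Omega}(NT^2)$. (The hypothesis $\sqrt N\le T$ is used only to make $L=\Theta(T^2)$ and to place $NT^2$ in the super-linear regime.)

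\emph{Main obstacle.} The hard part is the nondegeneracy/packing claim in Step 3: that the depth-$\Theta(T^2)$ brickwork family genuinely yields $2^{\Omega(NT^2)}$ operator-norm-separated projectors, rather than collapsing onto a low-dimensional set through cancellations in $V_{L-1}^\dagger\cdots V_0^\dagger$ or through the stabilizer of $O$ absorbing the freedom. I would handle it either by a Jacobian-rank computation (showing $(V_\lambda)\mapsto P^\star$ is an immersion at a generic point, so its image has the claimed dimension and metric entropy) or by planting $\Omega(NT^2)$ independent parameters into disjoint portions of the circuit and checking directly that distinct settings give well-separated $P^\star$; in either case one must also verify that conjugating the simple observable $O$ by a generic brickwork of depth $\gtrsim N$ really moves it in $\Theta(NT^2)$ independent directions modulo $\mathrm{Stab}(O)$. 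Step 1's exact solution, Step 2's Poisson tail bound, and the elementary circuit-counting estimate are all routine.
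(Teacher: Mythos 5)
Your Steps 1 and 2 are correct and essentially reproduce the paper's argument: your dressing unitary $R$ gives exactly the closed-form evolution of Lemma~\ref{lem:generalized_position}, and your Poisson tail bound plays the role of the paper's choice $T'=\Theta(T)$ (you freeze the walk at $L$ slightly below $T^2$ and evolve for exactly time $T$, which is a minor and arguably cleaner variant since the theorem statement uses $e^{-iTH}$). Your counting upper bound for $G$-gate circuits with ancillas and an infinite gate set is also the standard light-cone-plus-discretization argument and matches what the paper imports from \cite[Lemma~8]{HaahHastingsKothariLow2021quantum}.

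The genuine gap is the packing claim in Step 3, which you correctly flag as the main obstacle but do not resolve — and it is the entire content of the lower bound. The difficulty is worse than a missing Jacobian computation: to get metric entropy $2^{\Omega(NT^2)}$ for the image $\{P^\star\}$ you need a \emph{quantitative} statement that exponentially many parameter settings map to projectors that are pairwise $>0.4$-separated in operator norm, and generic full rank of the differential does not deliver this. You would need uniform lower bounds on the singular values of the Jacobian over a large-measure set, control of how perturbations of early layers propagate through the product $V_{L-1}^\dagger\cdots V_0^\dagger$, and an accounting of the very large stabilizer of $O$ (all unitaries block-diagonal with respect to $O$, a subgroup of dimension $\Theta(4^N)$) that could absorb most of the circuit's freedom. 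None of these is routine. The paper sidesteps all of it combinatorially: it restricts the layers $V_\lambda$ to form a circuit $U_{\mathrm{circ}}$ that \emph{exactly computes a Boolean function} $f$ in the sense of \cite{HaahHastingsKothariLow2021quantum}, and invokes their Lemma~8, which asserts that depth-$T^2$ circuits on $N$ qubits realize $2^{\wt\Omega(NT^2)}$ distinct Boolean functions. Two circuits computing distinct $f\neq f'$ differ on some computational basis input $x$, so $\bra{x}P^\star\ket{x}$ and $\bra{x}P'^\star\ket{x}$ differ by $1$, giving operator-norm separation $1$ for free — no geometry required. Your argument is completed by exactly this substitution: replace the generic brickwork family in Step 3 by the Boolean-function family of \cite[Lemma~8]{HaahHastingsKothariLow2021quantum}, and the rest of your reduction (a $0.2$-accurate $E$ identifies at most one member of a $1$-separated family, so $2^{\wt\Or(G)}\geq 2^{\wt\Omega(NT^2)}$) goes through.
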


In essence, this theorem asserts the existence of boson-qubit coupled systems whose single-qubit measurement statistics after evolving for time $T$ require $\wt{\Omega}(NT^2)$ gates to approximate to constant precision.

To prove Theorem~\ref{thm:lower_bound} we need to use the following lemma:
\begin{lem}
\label{lem:generalized_position}
Let $H=Ub + b^{\dagger} U^{\dagger}$, 
where $U = \sum_{\lambda=0}^{\infty}\ket{\lambda}_{\beta}\bra{\lambda}_{\beta}\otimes U_{\lambda}$.
Then
\[
e^{-itH}\ket{0}_{\beta}\ket{\phi} = e^{-t^2/2}\sum_{\lambda=0}^{\infty} \frac{(-it)^{\lambda}}{\sqrt{\lambda !}}\ket{\lambda}_{\beta}\left(\prod_{k=0}^{\lambda-1} U^{\dagger}_k \ket{\phi}\right)
\]
\end{lem}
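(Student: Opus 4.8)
The plan is to treat the claimed right-hand side as an explicit ansatz $\ket{\Psi(t)}$, verify that it solves the Schrödinger equation $i\,\partial_t\ket{\Psi(t)}=H\ket{\Psi(t)}$ with initial condition $\ket{\Psi(0)}=\ket{0}_\beta\ket{\phi}$, and then conclude $\ket{\Psi(t)}=e^{-itH}\ket{0}_\beta\ket{\phi}$ by uniqueness. This is cleaner than expanding the Dyson (Taylor) series of $e^{-itH}$ directly, which would force one through Wick-type combinatorics for $H^n\ket{0}_\beta\ket{\phi}$ arising from the mixed raising/lowering structure of $H$.

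First I would rewrite $H$ in the bosonic occupation-number basis. Since $U=\sum_{\lambda\geq0}\ketbra{\lambda}{\lambda}_\beta\otimes U_\lambda$ is block diagonal and $b=\sum_{\lambda\geq0}\sqrt{\lambda+1}\,\ketbra{\lambda}{\lambda+1}_\beta$, one gets $Ub=\sum_{\lambda\geq0}\sqrt{\lambda+1}\,\ketbra{\lambda}{\lambda+1}_\beta\otimes U_\lambda$, hence
\begin{equation}
  H=\sum_{\lambda\geq0}\sqrt{\lambda+1}\,\bigl(\ketbra{\lambda}{\lambda+1}_\beta\otimes U_\lambda+\ketbra{\lambda+1}{\lambda}_\beta\otimes U_\lambda^\dagger\bigr).
\end{equation}
Writing a general state as $\sum_{\lambda\geq0}\ket{\lambda}_\beta\ket{\psi_\lambda}$, the Schrödinger equation becomes the coupled system $i\,\partial_t\ket{\psi_\lambda}=\sqrt{\lambda+1}\,U_\lambda\ket{\psi_{\lambda+1}}+\sqrt{\lambda}\,U_{\lambda-1}^\dagger\ket{\psi_{\lambda-1}}$, with the last term absent for $\lambda=0$. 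I would then substitute the ansatz $\ket{\psi_\lambda(t)}=c_\lambda(t)\,V_\lambda\ket{\phi}$ with $c_\lambda(t)=e^{-t^2/2}(-it)^\lambda/\sqrt{\lambda!}$ and $V_\lambda=\prod_{k=0}^{\lambda-1}U_k^\dagger$ ordered so that $U_0^\dagger$ acts first ($V_0=I$). A one-line computation gives $\dot c_\lambda=-i\sqrt{\lambda+1}\,c_{\lambda+1}-i\sqrt{\lambda}\,c_{\lambda-1}$, so the ansatz solves the system precisely when $V_\lambda=U_\lambda V_{\lambda+1}$ and $V_\lambda=U_{\lambda-1}^\dagger V_{\lambda-1}$, both of which hold by the definition of $V_\lambda$ and unitarity of the $U_k$. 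The initial condition is immediate since $c_\lambda(0)=\delta_{\lambda0}$.

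It remains to handle convergence and to make the uniqueness step rigorous. The series defining $\ket{\Psi(t)}$ converges absolutely in norm for every $t$, and is in fact normalized: $\|\ket{\Psi(t)}\|^2=\sum_\lambda|c_\lambda(t)|^2\|V_\lambda\ket{\phi}\|^2=\sum_\lambda e^{-t^2}t^{2\lambda}/\lambda!=1$, which incidentally reproduces the Poissonian occupation statistics quoted in the main text. The termwise-differentiated series and the series obtained by applying $H$ also converge uniformly on compact $t$-intervals — for the latter because $\sum_\lambda \lambda\,|c_\lambda(t)|^2=t^2<\infty$, so $\ket{\Psi(t)}$ lies in the domain of $H$ — which legitimizes the term-by-term manipulations. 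Finally, setting $\ket{D(t)}=\ket{\Psi(t)}-e^{-itH}\ket{0}_\beta\ket{\phi}$, both summands are differentiable in $t$ and lie in the domain of the self-adjoint $H$, so $\tfrac{d}{dt}\|\ket{D(t)}\|^2=2\,\Im\,\bra{D(t)}H\ket{D(t)}=0$; since $\ket{D(0)}=0$ this forces $\ket{D(t)}\equiv0$. I expect the only genuinely delicate part to be exactly this analytic bookkeeping — domain membership, uniform convergence of the differentiated series, and (essential) self-adjointness of $H$ — whereas matching the recursions for $V_\lambda$ is routine.
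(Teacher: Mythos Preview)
Your argument is correct. The recursion $\dot c_\lambda=-i\sqrt{\lambda+1}\,c_{\lambda+1}-i\sqrt{\lambda}\,c_{\lambda-1}$ together with $V_\lambda=U_\lambda V_{\lambda+1}$ and $V_\lambda=U_{\lambda-1}^\dagger V_{\lambda-1}$ checks out, and your treatment of the analytic issues (norm convergence, domain membership, uniqueness via $\tfrac{d}{dt}\|D(t)\|^2=0$) is more careful than what the paper does.

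However, your route is genuinely different from the paper's. The paper observes that $\tilde b:=Ub$ and $\tilde b^\dagger=b^\dagger U^\dagger$ satisfy the canonical commutation relation $[\tilde b,\tilde b^\dagger]=1$ (this uses only that $U$ is unitary and commutes with the number operator), so $H=\tilde b+\tilde b^\dagger$ is literally a displacement generator in the new ladder operators. The Kermack--McCrae disentangling identity then gives $e^{-itH}=e^{-t^2/2}e^{-it\tilde b^\dagger}e^{-it\tilde b}$, and since $\tilde b\ket{0}_\beta\ket{\phi}=0$ one expands $e^{-it\tilde b^\dagger}$ as a Taylor series, using $(\tilde b^\dagger)^\lambda\ket{0}_\beta\ket{\phi}=\sqrt{\lambda!}\,\ket{\lambda}_\beta V_\lambda\ket{\phi}$. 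The paper's proof is shorter and conceptually explains \emph{why} the result looks exactly like a coherent state: it is one, for a dressed pair of ladder operators. Your direct Schr\"odinger verification is more elementary---it does not rely on recognizing the hidden CCR or knowing the displacement-operator factorization---and it handles the functional-analytic side explicitly, which the paper leaves implicit. Either approach is fine for this lemma.
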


\begin{proof}
Denoting $\wt{b}=Ub$, we have
\begin{equation}
    [\wt{b},\wt{b}^{\dagger}] = [b,b^{\dagger}]=1.
\end{equation}
Therefore $\wt{b}$ and $\wt{b}^{\dagger}$ can be treated as a new pair of annihilation and creation operators. By the Kermac-McCrae identity we have
\begin{equation}
    e^{-itH} = e^{-\frac{1}{2}t^2} e^{-it\wt{b}^{\dagger}}e^{-it\wt{b}}.
\end{equation}
This can then be used to prove the lemma by using the Taylor expansion and the fact that
\begin{equation}
    e^{-it\wt{b}}\ket{0}_{\beta}\ket{\phi} = \ket{0}_{\beta}\ket{\phi}.
\end{equation}
\end{proof}

\begin{proof}[Proof of Theorem~\ref{thm:lower_bound}]
First we consider a quantum circuit $U_{\mathrm{circ}}$ that acts on $N$ qubits and has depth $T^2$. It can then be written as
\begin{equation}
    U_{\mathrm{circ}} = \prod_{m=0}^{T^2-1} U^{\dagger}_{\lambda},
\end{equation}
where each $U_\lambda$ acts on $N$ qubits and has depth one. We also define $U_{\lambda} = I$ for all $\lambda\geq T^2$. Then we let the unitary $U$ in the theorem be 
\begin{equation}
    U=\sum_{\lambda=0}^{\infty} \ket{\lambda}_{\beta}\bra{\lambda}_{\beta}\otimes U_{\lambda}.
\end{equation}
Note that by construction we have $[U,\ket{\lambda}_{\beta}\bra{\lambda}_{\beta}]=0$ and therefore $U$ preserves the particle number in the bosonic mode.

We will show that by running time evolution $e^{-iT'H}$ for $T'=\Theta(T)$ starting from $\ket{0}_{\beta}\ket{\phi}_q$, and performing measurement on the first qubit in the $q$-register, we will be able to approximately sample from the distribution generated by running $U_{\mathrm{circ}}$ and then measuring the first qubit (note that $U_{\mathrm{circ}}$ acts only on register $q$). In this procedure we trace out the bosonic mode and focus only on the qubits.

By Lemma~\ref{lem:generalized_position}, we have
\begin{equation}
\label{eq:from_lem_generalized_position}
    e^{-iT'H}\ket{0}_{\beta}\ket{\phi}_q = e^{-T'^2/2}\sum_{\lambda=0}^{\infty} \frac{(-iT')^{\lambda}}{\sqrt{\lambda !}}\ket{\lambda}_{\beta}\left(\prod_{k=0}^{\lambda-1} U^{\dagger}_k \ket{\phi}_q\right).
\end{equation}
Now note that for any summand on the right-hand side with $\lambda\geq T^2$, we have
\begin{equation}
    \ket{\lambda}_{\beta}\left(\prod_{k=0}^{\lambda-1} U^{\dagger}_k \ket{\phi}_q\right) = \ket{\lambda}_{\beta} U_{\mathrm{circ}} \ket{\phi}_q.
\end{equation}
As a result we can write
\begin{equation}
    e^{-iT'H}\ket{0}_{\beta}\ket{\phi}_q = \ket{\Psi^{\perp}}_{\beta q} + \mathcal{A}\ket{\psi}_{\beta} U_{\mathrm{circ}} \ket{\phi}_q,
\end{equation}
where $\ket{\Psi^{\perp}}_{\beta q}$ is the sum of the first $T^2$ terms on the right-hand side of \eqref{eq:from_lem_generalized_position}, and 
\begin{equation}
    \ket{\psi}_{\beta} = \frac{e^{-T'^2/2}}{\mathcal{A}}\sum_{\lambda=T^2}^{\infty} \frac{(-iT')^{\lambda}}{\sqrt{\lambda !}}\ket{\lambda}_{\beta},\quad \mathcal{A} = \left(e^{-T'^2}\sum_{\lambda=T^2}^{\infty}\frac{T'^{2\lambda}}{\lambda !}\right)^{1/2}.
\end{equation}
Note that the normalization factor $\mathcal{A}$ is chosen so that $\|\ket{\psi}_{\beta}\|=1$.
In the above quantum state $e^{-iT'H}\ket{0}_{\beta}\ket{\phi}_q$, the bosonic particle number satisfies the Poisson distribution with mean $T'^2$. Because the Poisson distribution decays rapidly away from the mean \cite{Canonne2017PoissonTail}, we can choose $T'=\Theta(T)$ so that 
\begin{equation}
    \|\ket{\Psi^{\perp}}_{\beta q}\|^2 = e^{-T'^2}\sum_{\lambda=0}^{T^2-1}\frac{T'^{2\lambda}}{\lambda !}\leq 0.0025,
\end{equation}
and consequently
\begin{equation}
\begin{aligned}
    &\|e^{-iT'H}\ket{0}_{\beta}\ket{\phi}_q - \ket{\psi}_{\beta} U_{\mathrm{circ}}\ket{\phi}_q\| \\
    &\leq \|e^{-iT'H}\ket{0}_{\beta}\ket{\phi}_q - \mathcal{A}\ket{\psi}_{\beta} U_{\mathrm{circ}}\ket{\phi}_q\| + (1-\mathcal{A}) \\
    &\leq  2 \|\ket{\Psi^{\perp}}_{\beta q}\| \leq  0.1,
\end{aligned}
\end{equation}
where in going from the second line to the third line we have used the fact that $\|\ket{\Psi^{\perp}}_{\beta q}\|+\mathcal{A}\leq 1$.
Therefore
\begin{equation}
    \big|\bra{0}_{\beta}\bra{\phi}_qe^{iT'H}(I_{\beta}\otimes O)e^{-iT'H}\ket{0}_{\beta}\ket{\phi}_q - \bra{\phi}_q U_{\mathrm{circ}}^{\dagger} O U_{\mathrm{circ}}\ket{\phi}_q\big|\leq 0.2,
\end{equation}
where $O=\ket{0}\bra{0}\otimes I$. If a circuit $W$ as described in the theorem satisfies the inequality \eqref{eq:approx_expectation}, then by the triangle inequality
\begin{equation}
\label{eq:approx_sampling}
    \big|\bra{0}_{\mathrm{anc}}\bra{\phi}_q W^{\dagger}(I_{\mathrm{anc}}\otimes O)W\ket{0}_{\mathrm{anc}}\ket{\phi}_q - \bra{\phi}_q U_{\mathrm{circ}}^{\dagger} O U_{\mathrm{circ}}\ket{\phi}_q\big|\leq 0.3.
\end{equation}
This means the measurement outcome generated by running the circuit $U_{\mathrm{circ}}$ can be simulated by running the circuit $W$.

With the above setup, we then use $U_{\mathrm{circ}}$ to compute Boolean functions in the sense defined in \cite{HaahHastingsKothariLow2021quantum}: for a Boolean function $f:\{0,1\}^N\to\{0,1\}$, we say $U$ computes the Boolean function \emph{with high probability} if measuring the first qubit of $U\ket{x_1x_2\cdots x_N0\cdots 0}$ yields $f(x)$ with probability at least $2/3$. We also say 
$U$ computes the Boolean function \emph{exactly} if measuring the first qubit of $U\ket{x_1x_2\cdots x_N0\cdots 0}$ yields $f(x)$ with probability $1$.

By \eqref{eq:approx_sampling}, we know that if $U_{\mathrm{circ}}$ computes a Boolean function $f$ exactly, then $W$ computes the same Boolean function with high probability. By \cite[Lemma 8]{HaahHastingsKothariLow2021quantum}, we can choose $U_{\mathrm{circ}}$ acting on $N$ qubits and with depth $T^2$ to compute $2^{\wt{\Omega}(T^2 N)}$ different Boolean functions exactly. If $W$ uses $G$ 2-qubit gates, then by \cite[Lemma 8]{HaahHastingsKothariLow2021quantum} different $W$ can compute at most $2^{\wt{\Or}(G\log(N))}$ different Boolean functions with high probability. Therefore $G=\wt{\Omega}(T^2N)$, which completes the proof.
\end{proof}

%%%%%%%%%%%%%%%%%%%%%%%%%%%%%%%%%%%%%%%%%%%%%%%%%%%%%%%%%%%%%%%%%%%%%%%%%%%%%%
\section{Quantum number distribution tail bound in eigenstates}
\label{sec:tail}

If we would like to prepare an eigenstate of a Hamiltonian of the form \eqref{eq:ham_general} on a quantum computer, then we need to be able to store this eigenstate using a finite number of qubits. 
This reaffirms the need to truncate infinite dimensional Hilbert spaces.
A natural approach is to truncate the local quantum number $\lambda$, which, as discussed in Section~\ref{sec:common_structures}, is the local bosonic particle number in the setting of boson-fermion coupling, the electric field value in the setting of U(1) lattice gauge theory, and the total angular momentum in the setting of SU(2) lattice gauge theory. 

In this section, we will show that the probability of a spectrally isolated eigenstate having a local quantum number beyond a certain threshold can be bounded, and we call this the \emph{tail bound}. This tail bound justifies cutting off the part of the Hilbert space with local quantum number beyond the threshold, thus enabling us to store eigenstates using a finite number of qubits. We describe the result in the following theorem:

\begin{thm}[Quantum number distribution tail bound]
    Let $H=H_W+H_R$ be a Hamiltonian satisfying \eqref{eq:conditions_ham_general} with parameters $\chi$ and $r$.
    Assume that $\ket{\Psi}$ is an eigenstate of $H$ corresponding to an eigenvalue $\varepsilon$ with multiplicity $1$, and that $\varepsilon$ is separated from the rest of the spectrum of $H$ by a spectral gap $\delta$. Moreover, we assume the absolute value of the quantum number distribution has a finite expectation
    \[
    \sum_{\lambda}|\lambda|\braket{\Psi|\Pi_{\lambda}|\Psi}=\bar{\lambda}<\infty.
    \]
    Then for any $\epsilon>0$, we can choose $\Lambda$ satisfying
    \[
    \Lambda^{1-r} = (2\bar{\lambda})^{1-r} + \Or(\chi \delta^{-1}\log^2(\epsilon^{-1})+\log(\epsilon^{-1})),
    \]
    such that
    $
    \|\overline{\Pi}_{[-\Lambda,\Lambda]}\ket{\Psi}\|\leq \epsilon.
    $
\end{thm}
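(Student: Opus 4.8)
The plan is to combine the approximate eigenstate projector
\[
\wt{P}_\varepsilon = \frac{\sigma}{\sqrt{2\pi}}\int_{-T}^{T}\dd t\, e^{-\frac12\sigma^2 t^2}e^{-i\varepsilon t}e^{itH}
\]
with the long-time leakage bound of Theorem~\ref{thm:long_time_bound2}. First I would fix the mode or link $\nu$, abbreviate $\Pi_S=\Pi^{(\nu)}_S$, set $\Lambda_0=\lceil 2\bar\lambda\rceil$, and take $\ket{\phi_0}=\Pi_{[-\Lambda_0,\Lambda_0]}\ket{\Psi}$ as the initial state. Markov's inequality applied to $|\lambda|$, whose expectation is $\bar\lambda$, gives $\|\overline{\Pi}_{[-\Lambda_0,\Lambda_0]}\ket{\Psi}\|^2=\sum_{|\lambda|>2\bar\lambda}\braket{\Psi|\Pi_\lambda|\Psi}\le\tfrac12$, hence $\braket{\Psi|\phi_0}=\|\ket{\phi_0}\|^2\ge\tfrac12$. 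Since $P_\varepsilon=\ketbra{\Psi}{\Psi}$ yields $P_\varepsilon\ket{\phi_0}=\braket{\Psi|\phi_0}\ket{\Psi}$, the exact identity $\overline{\Pi}_{[-\Lambda,\Lambda]}\ket{\Psi}=\braket{\Psi|\phi_0}^{-1}\overline{\Pi}_{[-\Lambda,\Lambda]}P_\varepsilon\ket{\phi_0}$ together with the triangle inequality and $\|\ket{\phi_0}\|\le1$ reduces the task to
\[
\|\overline{\Pi}_{[-\Lambda,\Lambda]}\ket{\Psi}\|\le 2\,\|\overline{\Pi}_{[-\Lambda,\Lambda]}\wt{P}_\varepsilon\ket{\phi_0}\|+2\,\|P_\varepsilon-\wt{P}_\varepsilon\|.
\]

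Next I would bound the two terms separately. For the approximation error, extending the Gaussian integral to all of $\RR$ produces $e^{-(H-\varepsilon)^2/(2\sigma^2)}$, which acts as the identity on the $\varepsilon$-eigenspace and has norm at most $e^{-\delta^2/(2\sigma^2)}$ on the orthogonal complement (where $H-\varepsilon$ has modulus $\ge\delta$), while restricting the integral to $[-T,T]$ costs at most $\operatorname{erfc}(\sigma T/\sqrt2)\le e^{-\sigma^2T^2/2}$ in operator norm; thus $\|P_\varepsilon-\wt{P}_\varepsilon\|\le e^{-\delta^2/(2\sigma^2)}+e^{-\sigma^2T^2/2}$. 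Choosing $\sigma=\Theta\big(\delta/\sqrt{\log(\epsilon^{-1})}\big)$ and then $T=\Theta\big(\sigma^{-1}\sqrt{\log(\epsilon^{-1})}\big)=\Theta\big(\delta^{-1}\log(\epsilon^{-1})\big)$ makes this at most $\epsilon/4$. For the other term, $\ket{\phi_0}$ lies in the range of $\Pi_{[-\Lambda_0,\Lambda_0]}$ with $\Lambda_0=\lceil 2\bar\lambda\rceil$, so Theorem~\ref{thm:long_time_bound2} bounds $\|\overline{\Pi}_{[-\Lambda,\Lambda]}e^{itH}\ket{\phi_0}\|$, uniformly over $|t|\le T$, by $\poly(\chi T,\bar\lambda,\Lambda)\exp\!\big(-\Omega\big((\Lambda^{1-r}-(2\bar\lambda)^{1-r})/(2(1-r)\chi T+1)\big)\big)$; integrating against the weight $\tfrac{\sigma}{\sqrt{2\pi}}e^{-\sigma^2t^2/2}$, whose mass over $[-T,T]$ is at most $1$, shows $\|\overline{\Pi}_{[-\Lambda,\Lambda]}\wt{P}_\varepsilon\ket{\phi_0}\|$ obeys the same bound.

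Finally I would substitute $T=\Theta(\delta^{-1}\log(\epsilon^{-1}))$, so that $2(1-r)\chi T+1=\Or(\chi\delta^{-1}\log(\epsilon^{-1})+1)$, and choose $\Lambda$ large enough that the displayed exponential (times its polynomial prefactor) is at most $\epsilon/4$. This requires $\Lambda^{1-r}-(2\bar\lambda)^{1-r}=\Omega\big((\chi\delta^{-1}\log(\epsilon^{-1})+1)(\log(\epsilon^{-1})+\log\Lambda)\big)$, which is self-consistent because $\Lambda^{1-r}$ outgrows $\log\Lambda$; after absorbing the lower-order logarithmic prefactor and the $\lceil\cdot\rceil$-rounding (both $\Or(\log(\epsilon^{-1}))$), one obtains $\Lambda^{1-r}=(2\bar\lambda)^{1-r}+\Or\big(\chi\delta^{-1}\log^2(\epsilon^{-1})+\log(\epsilon^{-1})\big)$, and combining with the two-term bound gives $\|\overline{\Pi}_{[-\Lambda,\Lambda]}\ket{\Psi}\|\le\epsilon$. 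I expect the main obstacle to be precisely this last balancing act: one must check that the time $T$ at which $\wt{P}_\varepsilon$ resolves the spectrally isolated eigenstate — dictated by the gap $\delta$ and the target precision — is short enough that the leakage bound evaluated at that $T$ still forces only a polylogarithmic (in $\epsilon^{-1}$) increase of $\Lambda$ above $2\bar\lambda$. A secondary technical point is establishing the norm inequality $\|P_\varepsilon-\wt{P}_\varepsilon\|\le e^{-\delta^2/(2\sigma^2)}+e^{-\sigma^2T^2/2}$ cleanly via the spectral calculus and the Gaussian tail bound.
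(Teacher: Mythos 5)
Your proposal is correct and follows essentially the same route as the paper's proof: Markov's inequality to truncate $\ket{\Psi}$ at $2\bar{\lambda}$, the Gaussian-filtered approximate projector $\wt{P}_{\varepsilon}$ with the same $\sigma,T$ balancing against the gap $\delta$, and Theorem~\ref{thm:long_time_bound2} applied inside the time integral. Your use of the exact identity $\overline{\Pi}_{[-\Lambda,\Lambda]}\ket{\Psi}=\braket{\Psi|\phi_0}^{-1}\overline{\Pi}_{[-\Lambda,\Lambda]}P_{\varepsilon}\ket{\phi_0}$ on the unnormalized truncated state is a slightly cleaner bookkeeping than the paper's normalized intermediate state $\ket{\wt{\Psi}}$, but it is not a substantively different argument.
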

\begin{proof}
We define the projection operator into the $\varepsilon$-eigensubspace by
\begin{equation}
    P_{\varepsilon} = \ket{\Psi}\bra{\Psi}.
\end{equation}
This projection operator, and its approximator to be introduced later, will be the main technical tool in this proof. We first apply a projection operator to truncate the eigenstate $\ket{\Psi}$:
\begin{equation}
    \alpha\ket{\zeta} = \Pi_{[-2\bar{\lambda},2\bar{\lambda}]}\ket{\Psi},
\end{equation}
where $\ket{\zeta}$ is a normalized quantum state and $\alpha= \| \Pi_{[-2\bar{\lambda},2\bar{\lambda}]}\ket{\Psi}\|>0$. Because of the assumption $\sum_{\lambda}\lambda\braket{\Psi|\Pi_{\lambda}|\Psi}=\bar{\lambda}$, we have by Markov's inequality that
\begin{equation}
    \braket{\Psi|\Pi_{[-2\bar{\lambda},2\bar{\lambda}]}|\Psi} \geq 1/2,
\end{equation}
and therefore 
\begin{equation}
    \alpha = \sqrt{\braket{\Psi|\Pi_{[-2\bar{\lambda},2\bar{\lambda}]}|\Psi}} \geq 1/\sqrt{2}.
\end{equation}
This further implies
\begin{equation}
    |\braket{\Psi|\zeta}| = \braket{\Psi|\Pi_{[-2\bar{\lambda},2\bar{\lambda}]}|\Psi}/\alpha=\alpha\geq 1/\sqrt{2}.
\end{equation}

We then apply an approximation of $P_{\varepsilon}$ to $\ket{\zeta}$. Note that $P_{\varepsilon}\ket{\zeta}$ is exactly the eigenstate $\ket{\Psi}$ up to a constant factor. Therefore applying an approximation of $P_{\varepsilon}$ will yield a quantum state that is close to the eigenstate. The approximation of $P_{\varepsilon}$ is constructed as
\begin{equation}
    \wt{P}_{\varepsilon} = \frac{\sigma}{\sqrt{2\pi}} \int_{-T}^T \dd t\ e^{-\frac{1}{2}\sigma^2 t^2} e^{-i\varepsilon t}e^{itH}.
\end{equation}
We will show that $\wt{P}_{\varepsilon}$ is close to $P_{\varepsilon}$ when $\sigma$ is small and $T$ is large. 
First we have
\begin{equation}
\label{eq:decompose_approx_error_eigenprojection}
\begin{aligned}
    P_{\varepsilon} - \wt{P}_{\varepsilon} &= \left(P_{\varepsilon} - e^{-\frac{(H-\varepsilon)^2}{2\sigma^2}}\right) + \left(e^{-\frac{(H-\varepsilon)^2}{2\sigma^2}} - \wt{P}_{\varepsilon}\right) \\
    &= \left(P_{\varepsilon} - e^{-\frac{(H-\varepsilon)^2}{2\sigma^2}}\right) + \frac{\sigma}{\sqrt{2\pi}} \int_{|t|\geq T} \dd t\ e^{-\frac{1}{2}\sigma^2 t^2} e^{-i\varepsilon t}e^{itH},
\end{aligned}
\end{equation}
where we have used the identity
\begin{equation}
    e^{-\frac{(H-\varepsilon)^2}{2\sigma^2}} = \frac{\sigma}{\sqrt{2\pi}} \int_{-\infty}^{\infty} \dd t\ e^{-\frac{1}{2}\sigma^2 t^2} e^{-i\varepsilon t}e^{itH}.
\end{equation}
For the first term on the second line of \eqref{eq:decompose_approx_error_eigenprojection}, we have
\begin{equation}
    \|P_{\varepsilon} - e^{-\frac{(H-\varepsilon)^2}{2\sigma^2}}\|\leq e^{-\frac{\delta^2}{2\sigma^2}}, 
\end{equation}
and for the second term we have
\begin{equation}
    \Big\|\frac{\sigma}{\sqrt{2\pi}} \int_{|t|\geq T} \dd t\ e^{-\frac{1}{2}\sigma^2 t^2} e^{-i\varepsilon t}e^{itH}\Big\| \leq \frac{\sigma}{\sqrt{2\pi}}\int_{|t|\geq T} \dd t\ e^{-\frac{1}{2}\sigma^2 t^2} \leq \sqrt{\frac{2}{\pi}}e^{-\frac{\sigma^2 T^2}{2}},
\end{equation}
where we have used \cite[Theorem~1]{Chang2011Chernoff} for the second inequality. Denoting the sum of these two bounds by $\epsilon_1$, we have
\begin{equation}
    \|P_{\varepsilon} - \wt{P}_{\varepsilon}\|\leq e^{-\frac{\delta^2}{2\sigma^2}} + \sqrt{\frac{2}{\pi}}e^{-\frac{\sigma^2 T^2}{2}} = \epsilon_1.
\end{equation}
We choose $\sigma$ and $T$ so that $\epsilon_1\leq 1/2\sqrt{2}$.

By applying the approximate projection operator we obtain a quantum state $\ket{\wt{\Psi}}$:
\begin{equation}
    \beta\ket{\wt{\Psi}} = \wt{P}_{\varepsilon}\ket{\zeta},
\end{equation}
where $\ket{\wt{\Psi}}$ is a normalized quantum state and $\beta=\|\wt{P}_{\varepsilon}\ket{\zeta}\|>0$. We have
\begin{equation}
    \big|\beta-\|P_{\varepsilon}\ket{\zeta}\|\big| =  \big|\|\wt{P}_{\varepsilon}\ket{\zeta}\|-\|P_{\varepsilon}\ket{\zeta}\|\big| \leq \|P_{\varepsilon} - \wt{P}_{\varepsilon}\|\leq \epsilon_1,
\end{equation}
and as a result
\begin{equation}
    \beta \geq \|P_{\varepsilon}\ket{\zeta}\| - \epsilon_1 = |\braket{\Psi|\zeta}| - \epsilon_1 \geq 1/2\sqrt{2}.
\end{equation}
We then demonstrate that $\ket{\wt{\Psi}}$ is close to $\ket{\Psi}$, which follows from
\begin{equation}
    \label{eq:eigenstate_and_approx_eigenstate_err}
    \begin{aligned}
    \|\ket{\Psi}-\ket{\wt{\Psi}}\| &= \|\beta^{-1}\wt{P}_{\varepsilon}\ket{\zeta}-\ket{\Psi}\| \\
    &\leq \beta^{-1}\|\wt{P}_{\varepsilon}-P_{\varepsilon}\| + \|\beta^{-1}P_{\varepsilon}\ket{\zeta}-\ket{\Psi}\| \\
    &= \beta^{-1}\|\wt{P}_{\varepsilon}-P_{\varepsilon}\| + \big|\beta^{-1}-\|P_{\varepsilon}\ket{\zeta}\|^{-1} \big| \|P_{\varepsilon}\ket{\zeta}\| \\
    &\leq \beta^{-1}\epsilon_1 + \beta^{-1}\big|\beta-\|P_{\varepsilon}\ket{\zeta}\|\big| \\
    &\leq 2\beta^{-1}\epsilon_1 \leq 4\sqrt{2}\epsilon_1.
    \end{aligned}
\end{equation}

We now consider the local quantum number tail bound for $\ket{\wt{\Psi}}$. For any $\Lambda\geq 0$, we have
\begin{equation}
    \label{eq:tail_bound_approx_eigenstate}
    \begin{aligned}
    \|\overline{\Pi}_{[-\Lambda,\Lambda]}\ket{\wt{\Psi}}\| &= \beta^{-1}\|\overline{\Pi}_{[-\Lambda,\Lambda]}\wt{P}_{\varepsilon}\ket{\zeta}\| \\
    &= \beta^{-1}\|\overline{\Pi}_{[-\Lambda,\Lambda]}\wt{P}_{\varepsilon}\Pi_{[-2\bar{\lambda},2\bar{\lambda}]}\ket{\zeta}\| \\
    &\leq \beta^{-1}\|\overline{\Pi}_{[-\Lambda,\Lambda]}\wt{P}_{\varepsilon}\Pi_{[-2\bar{\lambda},2\bar{\lambda}]}\| \\
    &\leq \frac{\sigma}{\beta\sqrt{2\pi}}\int_{-T}^{T}\dd t\ e^{-\frac{1}{2}\sigma^{2}t^2}\|\overline{\Pi}_{[-\Lambda,\Lambda]}e^{itH}\Pi_{[-2\bar{\lambda},2\bar{\lambda}]}\| \\
    &\leq \poly(\bar{\lambda},\Lambda,\chi T)\exp\left(-\Omega\left(\frac{\Lambda^{1-r}-(2\bar{\lambda})^{1-r}}{2(1-r)\chi T+1}\right)\right)
    \end{aligned}
\end{equation}
where we used the fact that $\Pi_{[-2\bar{\lambda},2\bar{\lambda}]}\ket{\zeta}=\ket{\zeta}$ going from the first line to the second line, and Theorem~\ref{thm:long_time_bound2} in deriving the last line.

Now combining \eqref{eq:eigenstate_and_approx_eigenstate_err} and \eqref{eq:tail_bound_approx_eigenstate} we have
\begin{equation}
\begin{aligned}
    \|\overline{\Pi}_{[-\Lambda,\Lambda]}\ket{\Psi}\| &\leq 4\sqrt{2}\left(e^{-\frac{\delta^2}{2\sigma^2}} + \sqrt{\frac{2}{\pi}}e^{-\frac{\sigma^2 T^2}{2}}\right) \\
    &\quad+  \poly(\bar{\lambda},\Lambda,\chi T)\exp\left(-\Omega\left(\frac{\Lambda^{1-r}-(2\bar{\lambda})^{1-r}}{2(1-r)\chi T+1}\right)\right).
\end{aligned}
\end{equation}
Therefore to ensure $\|\overline{\Pi}_{[-\Lambda,\Lambda]}\ket{\Psi}\|\leq \epsilon$ we can choose $\sigma$ and $T$ to scale like
\begin{equation}
    \sigma = \Theta(\delta/\sqrt{\log(\epsilon^{-1})}),\quad T = \Theta(\sigma^{-1}\sqrt{\log(\epsilon^{-1})})=\Theta(\delta^{-1}\log(\epsilon^{-1})),
\end{equation}
and choose $\Lambda$ to be
\begin{equation}
\begin{aligned}
    \Lambda^{1-r} &= (2\bar{\lambda})^{1-r} + \Theta\left(\log(\epsilon^{-1})(2(1-r)\chi T+1)\right) \\
    &=(2\bar{\lambda})^{1-r} + \Or\left(\chi \delta^{-1}\log^2(\epsilon^{-1})+\log(\epsilon^{-1})\right).
\end{aligned}
\end{equation}

\end{proof}

%%%%%%%%%%%%%%%%%%%%%%%%%%%%%%%%%%%%%%%%%%%%%%%%%%%%%%%%%%%%%%%%%%%%%%%%%%%%%%
\section{Lieb-Robinson velocity with on-site interaction}
\label{sec:LR_veclocity_on_site}

In this section we show that the Lieb-Robinson velocity is unaffected by any on-site interaction. 
This fact has been proved in \cite[Section~2]{NachtergaeleRazSchleinSims2009lieb}, although their result is not completely in line with what is required in this work. Therefore we provide our own theorem and proof in this section.
We use the notation in \cite[Lemma~5]{HaahHastingsKothariLow2021quantum}. We consider a lattice $\Lambda$, with $\operatorname{dist}$ denoting the lattice distance. A Hamiltonian is \emph{geometrically local} if and only if it is a sum of terms supported on lattice sites that are within a ball of constant radius in terms of the lattice distance, and the norm of each term is also bounded by a constant. By an \emph{on-site interaction}, we mean an operator that can be written as a sum of terms each of which is supported on a single site. More precisely, an operator $O$ is an on-site interaction if and only if
\begin{equation}
    O = \sum_{p\in\Lambda} O_p,
\end{equation}
where each $O_p$ is supported on site $p$.

For any time-dependent Hamiltonian $A(t)$, we use $U^A_t$ to denote the evolution under this Hamiltonian for time $t$, i.e.,
\begin{equation}
    U^A_t = \mathcal{T}e^{-i\int_0^t \dd s A(s)}.
\end{equation}

\begin{lem}
\label{lem:lieb_robinson_on_site}
Let $H(t)=h(t)+B(t)$ be a Hamiltonian on lattice $\Lambda$, where 
\[
h(t) = \sum_{X\subset \Lambda} h_X(t),
\]
is geometrically local,
and $B(t)$ is an on-site interaction, i.e.
$
B(t) = \sum_{p\in\Lambda} B_p(t).
$
Assume that
$
\zeta_0 = \max_{p,t} \sum_{Z\ni p} \|h_Z(t)\|
$
is a constant.
Then there exists constants $v_{LR}>0$ and $\mu>0$ such that for any $X\subset\Omega\subset\Lambda$, we have
\[
\|U^{H}_t O_X (U^{H}_t)^{\dagger}-U^{H_{\Omega}}_t O_X (U^{H_{\Omega}}_t)^{\dagger}\|\leq \Or(|X|\|O_X\|e^{-\mu \ell}),
\]
provided $t$ satisfies $v_{LR}|t|<\ell$. Here, $\ell=\operatorname{dist}(\Lambda\setminus \Omega,X)$ and
\[
H_{\Omega}(t) = \sum_{X\subset \Omega} h_X(t) + \sum_{p\in\Omega} B_p(t).
\]
Moreover $v_{LR}$ and $\mu$ do not depend on $B(t)$.
\end{lem}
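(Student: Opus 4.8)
The plan is to eliminate the on-site part $B(t)$ by passing to the interaction picture, verify that this transformation preserves geometric locality together with every constant that feeds into a Lieb-Robinson bound, and then quote the standard bound (as in \cite[Lemma~5]{HaahHastingsKothariLow2021quantum}) for the transformed Hamiltonian.

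First, since $B(t)=\sum_{p\in\Lambda}B_p(t)$ with $B_p(t)$ supported on the single site $p$, the terms on different sites form a mutually commuting family, so the propagator factorizes as a tensor product of single-site unitaries, $W_t:=U^{B}_t=\bigotimes_{p\in\Lambda}W^{(p)}_t$ with $W^{(p)}_t=\mathcal{T}e^{-i\int_0^t B_p(s)\,\dd s}$. Writing $U^{H}_t=W_t V_t$ and carrying out the usual interaction-picture manipulation for time-dependent generators gives $V_t=\mathcal{T}e^{-i\int_0^t\tilde h(s)\,\dd s}$ with $\tilde h(s)=W_s^{\dagger}h(s)W_s=\sum_{X}W_s^{\dagger}h_X(s)W_s$. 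Because $W_s$ is a product over sites, $W_s^{\dagger}h_X(s)W_s$ is still supported on $X$ and has the same operator norm as $h_X(s)$; hence $\tilde h$ is geometrically local with the same interaction range, and $\max_{p,t}\sum_{Z\ni p}\|W_t^{\dagger}h_Z(t)W_t\|=\zeta_0$ is unchanged. The identical computation applied to $H_\Omega$ gives $U^{H_\Omega}_t=W^\Omega_t V^\Omega_t$, where $W^\Omega_t=\bigotimes_{p\in\Omega}W^{(p)}_t$ and $V^\Omega_t=\mathcal{T}e^{-i\int_0^t\tilde h_\Omega(s)\,\dd s}$ with $\tilde h_\Omega(s)=\sum_{X\subset\Omega}W_s^{\dagger}h_X(s)W_s$; note that for $X\subset\Omega$ only the factors of $W_s$ on $\Omega$ act nontrivially, so $\tilde h_\Omega$ is exactly the restriction of $\tilde h$ to $\Omega$.

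The key cancellation is now the following. The operator $V^\Omega_t O_X(V^\Omega_t)^{\dagger}$ is supported inside $\Omega$ (its generator $\tilde h_\Omega$ involves only sites of $\Omega$, and $X\subset\Omega$), so conjugating it by $W_t=W^\Omega_t\otimes W^{\Lambda\setminus\Omega}_t$ is the same as conjugating it by $W^\Omega_t$ alone. Combining this observation with unitary invariance of the spectral norm,
\begin{align*}
\bigl\|U^{H}_t O_X(U^{H}_t)^{\dagger}-U^{H_\Omega}_t O_X(U^{H_\Omega}_t)^{\dagger}\bigr\|
&=\bigl\|W_t V_t O_X V_t^{\dagger}W_t^{\dagger}-W^\Omega_t V^\Omega_t O_X(V^\Omega_t)^{\dagger}(W^\Omega_t)^{\dagger}\bigr\|\\
&=\bigl\|W_t\bigl(V_t O_X V_t^{\dagger}-V^\Omega_t O_X(V^\Omega_t)^{\dagger}\bigr)W_t^{\dagger}\bigr\|\\
&=\bigl\|V_t O_X V_t^{\dagger}-V^\Omega_t O_X(V^\Omega_t)^{\dagger}\bigr\|.
\end{align*}
The right-hand side is precisely the quantity controlled by a Lieb-Robinson bound for the geometrically local, time-dependent Hamiltonian $\tilde h$. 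Applying \cite[Lemma~5]{HaahHastingsKothariLow2021quantum}, whose velocity $v_{LR}$ and decay rate $\mu$ depend only on $\zeta_0$ and the lattice geometry — and therefore not on $B(t)$ — yields the claimed bound $\Or(|X|\,\|O_X\|\,e^{-\mu\ell})$ whenever $v_{LR}|t|<\ell$.

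The only real obstacle is bookkeeping: one must check that $\tilde h$ genuinely inherits the same locality radius and the same constant $\zeta_0$ as $h$ (so that the Lieb-Robinson constants are literally unchanged), and that $\tilde h_\Omega$ coincides with the honest restriction of $\tilde h$ to $\Omega$; both facts hinge on $B$ being strictly on-site, which makes $W_s$ factorize across sites. The factorization $U^{H}_t=W_tV_t$ for time-dependent generators is the standard Dyson/interaction-picture identity and is routine. With these in hand, Lemma~\ref{lem:lieb_robinson_on_site} is an immediate corollary of the bounded-norm Lieb-Robinson estimate already in the literature.
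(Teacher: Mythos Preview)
Your proposal is correct and follows essentially the same approach as the paper: pass to the interaction picture with respect to $B(t)$, observe that the on-site structure makes $U^B_t$ a tensor product of single-site unitaries so that the transformed Hamiltonian $\tilde h$ retains the same support structure and the same constant $\zeta_0$, verify that the restriction to $\Omega$ commutes with this transformation, and then invoke \cite[Lemma~5]{HaahHastingsKothariLow2021quantum}. The paper carries out the cancellation by choosing the auxiliary observable $\wt{O}_X=(U^B_t)^{\dagger}O_X U^B_t$ and substituting, whereas you argue directly that conjugating $V^\Omega_t O_X(V^\Omega_t)^{\dagger}$ by $W_t$ versus $W^\Omega_t$ agree; these are two phrasings of the same observation.
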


\begin{proof}
Define
\begin{equation}
    h_I(t) = (U^B_t)^{\dagger} h(t) U^B_t.
\end{equation}
Then by switching to the interaction picture, we have
\begin{equation}
\label{eq:interaction_pic_whole_system}
    U^H_t = U^B_t U^{h_I}_t.
\end{equation}
Similarly,
\begin{equation}
\label{eq:interaction_pic_omega}
    U^{H_{\Omega}}_t = U^{B_{\Omega}}_t U^{h_{I,\Omega}}_t,
\end{equation}
where
\begin{equation}
    B_{\Omega}(t) = \sum_{p\in\Omega} B_p(t),\quad h_{I,\Omega}(t)=(U^{B_{\Omega}}_t)^{\dagger} \sum_{X\subset\Omega} h_X(t) U^{B_{\Omega}}_t.
\end{equation}
Note that we have the identity 
\begin{equation}
    h_{I,\Omega}(t)=(U^{B_{\Omega}}_t)^{\dagger} \sum_{X\subset\Omega} h_X(t) U^{B_{\Omega}}_t = (U^{B}_t)^{\dagger} \sum_{X\subset\Omega} h_X(t) U^{B}_t
\end{equation}
because all the terms in $B$ are on-site. This would not hold if there were terms of $B$ acting across the boundary of $\Omega$. The above equation tells us that the local terms of $h_{I,\Omega}(t)$ agree completely with the local terms of $h_{I}(t)$ in the region $\Omega$.

We now consider the time evolution with $h_I(t)$. Note that $h_I(t)$ is a geometrically local Hamiltonian, and is the sum of terms $ (U^{B}_t)^{\dagger} h_X(t) U^{B}_t$. For these terms we have
\begin{equation}
    \max_{p,t} \sum_{Z\ni p} \|(U^{B}_t)^{\dagger} h_Z(t) U^{B}_t\| = \max_{p,t} \sum_{Z\ni p} \|h_Z(t) \| = \zeta_0.
\end{equation}
Therefore by \cite[Lemma 5]{HaahHastingsKothariLow2021quantum}, for any operator $\wt{O}_X$ supported on $X\subset \Lambda$, we have
\begin{equation}
\label{eq:from_lemma5_hhkl}
    \|(U^{h_I}_t)^{\dagger} \wt{O}_X U^{h_I}_t - (U^{h_{I,\Omega}}_t)^{\dagger} \wt{O}_X U^{h_{I,\Omega}}_t \| \leq |X|\|\wt{O}_X\|\frac{(2\zeta_0 |t|)^{\ell}}{\ell !}
\end{equation}
if $\Omega$ is such that $X\subset\Omega$ and $\ell=\operatorname{dist}(\Lambda\setminus \Omega,X)$.

Choosing
\begin{equation}
    \wt{O}_X = (U^B_t)^{\dagger} O_X U^B_t = (U^{B_{\Omega}}_t)^{\dagger} O_X U^{B_{\Omega}}_t,
\end{equation}
we obtain
\begin{equation}
\label{eq:LR_bound_piece_together_1}
    (U^{h_I}_t)^{\dagger} \wt{O}_X U^{h_I}_t = (U^{h_I}_t)^{\dagger} (U^B_t)^{\dagger} O_X U^B_t U^{h_I}_t = (U^H_t)^{\dagger}O_X U^H_t,
\end{equation}
where we have used \eqref{eq:interaction_pic_whole_system}. Similarly,
\begin{equation}
\label{eq:LR_bound_piece_together_2}
    (U^{h_{I,\Omega}}_t)^{\dagger} \wt{O}_X U^{h_{I,\Omega}}_t = (U^{h_{I,\Omega}}_t)^{\dagger} (U^{B_{\Omega}}_t)^{\dagger} O_X U^{B_{\Omega}}_t U^{h_{I,\Omega}}_t = (U^{H_{\Omega}}_t)^{\dagger}O_X U^{H_{\Omega}}_t,
\end{equation}
where we have used \eqref{eq:interaction_pic_omega}. Substituting \eqref{eq:LR_bound_piece_together_1} and \eqref{eq:LR_bound_piece_together_2} into \eqref{eq:from_lemma5_hhkl} gives
\begin{equation}
    \|U^{H}_t O_X (U^{H}_t)^{\dagger}-U^{H_{\Omega}}_t O_X (U^{H_{\Omega}}_t)^{\dagger}\|\leq \frac{(2\zeta_0 |t|)^{\ell}}{\ell !},
\end{equation}
which is the same as Eq.~(4) in \cite[Lemma 5]{HaahHastingsKothariLow2021quantum}. The existence of constants $v_{LR}>0$ and $\mu>0$ follows immediately as in \cite[Lemma 5]{HaahHastingsKothariLow2021quantum}. Note that $v_{LR}$ and $\mu$ depend entirely on $\zeta_0$, and since $\zeta_0$ does not depend on $B(t)$, $v_{LR}$ and $\mu$ are independent of $B(t)$ either.
\end{proof}

Here we compare this result with that of \cite[Lemma 5]{HaahHastingsKothariLow2021quantum}. Note that in our setup, $h(t)$ is a geometrically local Hamiltonian satisfying the requirements in \cite[Lemma 5]{HaahHastingsKothariLow2021quantum}, but this is not true for $H(t)$, since with $B(t)$ its local terms can be unbounded. The above lemma means that adding an on-site interaction does not change the Lieb-Robinson velocity. Heuristically, on-site interactions can cause variables to vary rapidly on site, but they don't affect how rapidly information propagates from one site to another.

In Section~\ref{sec:sim_LGT} we considered simulating time evolution due to the truncated version of the Hamiltonian in \eqref{eq:LGT_ham}, which we denoted by $\wt{H}$.
There we perform simulation using the HHKL decomposition \cite{HaahHastingsKothariLow2021quantum}, and the cost of simulation depends on the Lieb-Robinson velocity.
In the context of Lemma~\ref{lem:lieb_robinson_on_site} we set
\begin{equation}
    h(t) = \wt{H}_M + \wt{H}_{GM} + \wt{H}_B,
\end{equation}
which is time-independent in this case, but we nevertheless keep the time dependence in order to be consistent with Lemma~\ref{lem:lieb_robinson_on_site}.
For $H(t)$ in Lemma~\ref{lem:lieb_robinson_on_site} we have $H(t)=h(t)+\wt{H}_E$.
We note that $h(t)$ is a geometrically-local Hamiltonian, and the corresponding $\zeta_0$ is a constant. Also we note that $\wt{H}_E$ is on-site and thus does not contribute to the propagation of operators. Therefore they meet the conditions in Lemma~\ref{lem:lieb_robinson_on_site}. By Lemma~\ref{lem:lieb_robinson_on_site}, the Lieb-Robinson velocity of evolving with $\wt{H}$ is the same as the Lieb-Robinson velocity $v_{LR}$ of evolving with $h(t)$ and is therefore a constant.
The same reasoning applies to the Hubbard-Holstein model in Section~\ref{sec:sim_boson_fermion}.

%%%%%%%%%%%%%%%%%%%%%%%%%%%%%%%%%%%%%%%%%%%%%%%%%%%%%%%%%%%%%%%%%%%%%%%%%%%%%%
\section{Other examples of applicable models}
\label{sec:other_examples}

In this section we briefly discuss some other example models that can be analyzed within our framework. We have introduced a generic Hamiltonian \eqref{eq:ham_fermion_boson_general} describing boson-fermion coupling, and analyzed in detail how to quantumly simulate a specific model of the form  \eqref{eq:ham_fermion_boson_general}, namely the Hubbard-Holstein model, which describes electron-phonon interaction. Another common model describing electron-phonon interaction is the Fr{\"o}hlich model \cite{Frohlich1954electrons}. 

\vspace{1em}
\textit{The Fr{\"o}hlich model.}
The Hamiltonian of this model can be written as
\begin{equation}
\label{eq:Frohlich_ham}
    H = \frac{1}{2}\hat{P}^2_{\mathrm{el}} + \sum_q \omega_q (b_q^{\dagger}b_q + \frac{1}{2}) + \sum_q (V_q b_q^{\dagger} e^{-iq\cdot \hat{r}_{\mathrm{el}}}+V_q^* b_q e^{iq\cdot \hat{r}_{\mathrm{el}}}),
\end{equation}
where $q$ denotes the bosonic momentum, $b_q$ is the bosonic annihilation operator, $\hat{r}_{\mathrm{el}}$ and $P_{\mathrm{el}}$ are electron position and momentum operators respectively. In \eqref{eq:ham_fermion_boson_general} we wrote the fermionic part in second quantization and the bosonic part in first quantization. We therefore rewrite \eqref{eq:Frohlich_ham} accordingly. We use $c_{k\sigma}$ to denote the annihilation operator for an electron with momentum $k$ and spin $\sigma$. Then,
\begin{equation}
    \frac{1}{2}\hat{P}^2_{\mathrm{el}} = \sum_{k\sigma}\varepsilon_k c_{k\sigma}^{\dagger}c_{k\sigma},\quad e^{-iq\cdot \hat{r}_{\mathrm{el}}} = \sum_{k\sigma} c_{k-q\sigma}^{\dagger}c_{k\sigma}.
\end{equation}
For the bosonic part we have $b_q = \frac{1}{\sqrt{2}}(X_q+iP_q)$, and therefore we can rewrite the Hamiltonian as
\begin{equation}
\label{eq:Frohlich_ham_rewrite}
    H = \sum_{k\sigma}\varepsilon_k c_{k\sigma}^{\dagger}c_{k\sigma} + \sum_q \omega_q (X_q^2 + P_q^2) + \sum_q\sum_{k\sigma} (V_q b_q^{\dagger} c_{k-q\sigma}^{\dagger}c_{k\sigma}+V_q^* b_q c_{k+q\sigma}^{\dagger}c_{k\sigma}).
\end{equation}
We thus see that the Hamiltonian is of the form \eqref{eq:ham_fermion_boson_general}, and therefore has the structure described in Section~\ref{sec:common_structures}. The number of particles in each bosonic mode under time evolution and energy eigenstates can be analyzed using the results in Sections~\ref{sec:leakage_time_evo}, \ref{sec:truncating_the_hamiltonian}, and \ref{sec:tail}. 

We also observe that the ab initio Hamiltonians describing electron-phonon coupling \cite{Giustino2017electron}, if no anharmonic terms are included, can also be analyzd within our framework due to its similarity to the Fr{\"o}hlich Hamiltonian.
\vspace{1em}

Besides boson-fermion coupling, spin-boson coupling can also be analyzed within the framework of this work. As an example, we consider the Dicke model which describes light-matter interaction \cite{Dicke1954coherence,Hepp1973superradiant}. 

\vspace{1em}
\textit{The Dicke model.}
The model Hamiltonian can be written as 
\begin{equation}
\label{eq:Dicke_ham}
    H = \omega_c b^{\dagger} b + \omega_z \sum_{i=1}^N \sigma_i^z + \frac{g}{\sqrt{N}}(b+b^{\dagger})\sum_{i=1}^N\sigma_i^x,
\end{equation}
where $\sigma^x_i$ and $\sigma^z_i$ are the Pauli-X and Z matrices respectively acting on site $i$, and $b$ is the annihilation operator for a bosonic mode corresponding to photons. We note that this Hamiltonian has the structure described in Section~\ref{sec:common_structures}. We choose
\begin{equation}
    H_W = \frac{g}{\sqrt{N}}(b+b^{\dagger})\sum_{i=1}^N\sigma_i^x,\quad H_R=\omega_c b^{\dagger} b + \omega_z \sum_{i=1}^N \sigma_i^z.
\end{equation}
Then $H_R$ preserves the bosonic particle number, $H_W$ changes the bosonic particle number by $\pm 1$, and 
\begin{equation}
    \|H_W\Pi_{[0,\Lambda]}\| \leq 2g\sqrt{N}\sqrt{\Lambda+1},
\end{equation}
where $\Pi_{[0,\Lambda]}$ is the projection operator into the subspace with at most $\Lambda$ bosonic particles. Therefore \eqref{eq:conditions_ham_general} is satisfied if we choose $\chi=2g\sqrt{N}$ and $r=1/2$. We thus see that this model can also be analyzed within our framework.

%%%%%%%%%%%%%%%%%%%%%%%%%%%%%%%%%%%%%%%%%%%%%%%%%%%%%%%%%%%%%%%%%%%%%%%%%%%%%%
\section{Comparison with the energy-based truncation threshold}
\label{sec:comparison_with_the_energy_based_truncation_threshold}

In Ref.~\cite{JordanLeePreskill2012quantum}, to simulate the $\phi^4$ theory, a truncation threshold is chosen for the field value at each lattice site based on energy conservation and Chebyshev's inequality. This is a very general method and can be applied to the systems studied in this work. Here we compare the truncation threshold obtained using that method with the one in this work in two settings. In the first setting we consider a single bosonic mode, and in the second we consider the Hubbard-Holstein model consisting of $N$ sites. We find that the truncation threshold in this work tends to be lower than the energy-based one 
if the truncation is made for short-time evolution of large systems with high precision.

\subsection{A single bosonic mode}

We consider a system of a single bosonic mode with the Hamiltonian
\begin{equation}
\label{eq:single_bosonic_mode_ham}
    H = b + b^{\dagger} + \omega_0 b^{\dagger}b.
\end{equation}
In this setting, the particle number expectation value can be bounded by the energy as
\begin{equation}
    E = \braket{H}\geq \omega_0\braket{b^{\dagger}b} - |\braket{b+b^{\dagger}}|\geq \omega_0\braket{b^{\dagger}b}-2\sqrt{\braket{b^{\dagger}b}+1},
\end{equation}
where the second inequality follows from the Cauchy–Schwarz inequality.
As a result,
\begin{equation}
    \braket{b^{\dagger}b} \leq \left(\omega_0^{-1}+\sqrt{1+\omega_0^{-1}E+\omega_0^{-2}}\right)^2-1.
\end{equation}

Denoting the quantum state at time $t$ by $\ket{\psi(t)}$, 
we have by Markov's inequality that
\begin{equation}
    (\Lambda+1)\braket{\psi(t)|\overline{\Pi}_{[0,\Lambda]}|\psi(t)} \leq \braket{b^{\dagger}b}.
\end{equation}
Note that $\|\overline{\Pi}_{[0,\Lambda]}\ket{\psi(t)}\|^2=\braket{\psi(t)|\overline{\Pi}_{[0,\Lambda]}|\psi(t)}$. Consequently if we want to keep $\|\overline{\Pi}_{[0,\Lambda]}\ket{\psi(t)}\|\leq \epsilon$, it suffices to choose
\begin{equation}
\label{eq:energy_based_threshold}
    \Lambda + 1 \geq \frac{\left(\omega_0^{-1}+\sqrt{1+\omega_0^{-1}E+\omega_0^{-2}}\right)^2-1}{\epsilon^2}.
\end{equation}
This is the energy-based truncation threshold.

We further consider the case where we start with at most $\Lambda_0$ particles. Then we have
\begin{equation}
    E \leq \omega_0\braket{b^{\dagger}b}+2\sqrt{\braket{b^{\dagger}b}+1} \leq \omega_0\Lambda_0+2\sqrt{\Lambda_0+1}.
\end{equation}
Substituting this into \eqref{eq:energy_based_threshold} we have
\begin{equation}
    \Lambda +1 \geq \frac{(2\omega_0^{-1}+\sqrt{\Lambda_0+1})^2-1}{\epsilon^2}.
\end{equation}
This can be directly compared with the truncation threshold \eqref{eq:truncation_threshold_boson_fermion} in this work
\begin{equation}
    \Lambda = \left(\sqrt{\Lambda_0}+\Or(T\polylog(\epsilon^{-1}))\right)^2.
\end{equation}
We can see that although the energy-based threshold has the nice property that it is independent of $T$, if $T$ is not too large and high precision is required, the truncation threshold in this work is lower, which corresponds to fewer qubits needed and smaller simulation error. 

\subsection{The Hubbard-Holstein model}
\label{sec:compare_thresholds_Hubbard_Holstein}

We then consider the Hubbard Holstein model introduced in \eqref{eq:ham_hubbard_holstein}. We rewrite the Hamiltonian as
\begin{equation}
    H = H_f + \sum_x \left( g(b_x+b_x^{\dagger})(n_{x,\uparrow}+n_{x,\downarrow}-1) + \omega_0 b_x^{\dagger}b_x \right).
\end{equation}
Note that for any $x$,
\begin{equation}
    \braket{g(b_x+b_x^{\dagger})(n_{x,\uparrow}+n_{x,\downarrow}-1) + \omega_0 b_x^{\dagger}b_x} \geq \omega_0 \braket{b_x^{\dagger}b_x} - 2|g|\sqrt{\braket{b_x^{\dagger}b_x}+1}\geq -\frac{g^2}{\omega_0}-\omega_0.
\end{equation}
As a result
\begin{equation}
\begin{aligned}
    E &= \braket{H} \geq  g(b_{x'}+b_{x'}^{\dagger})(n_{x',\uparrow}+n_{x',\downarrow}-1) + \omega_0 b_{x'}^{\dagger}b_{x'}  \\
    &+E_{f,0} + \sum_{x\neq x'} \left( g(b_x+b_x^{\dagger})(n_{x,\uparrow}+n_{x,\downarrow}-1) + \omega_0 b_x^{\dagger}b_x \right) \\
    &\geq  \omega_0 \braket{b_{x'}^{\dagger}b_{x'}} - 2|g|\sqrt{\braket{b_{x'}^{\dagger}b_{x'}}+1}  +E_{f,0} - (N-1)(\frac{g^2}{\omega_0}+\omega_0),
\end{aligned}
\end{equation}
where $E_{f,0}$ is the ground state energy of $H_f$.
Therefore
\begin{equation}
    \omega_0 \braket{b_{x'}^{\dagger}b_{x'}} - 2|g|\sqrt{\braket{b_{x'}^{\dagger}b_{x'}}+1}  \leq E - E_{f,0} + (N-1)(\frac{g^2}{\omega_0}+\omega_0).
\end{equation}
Now we assume all parameters in the model, except for $\omega_0$, are constants, and we only consider the scaling with respect to the system size. Consequently, $|E_{f,0}|=\Or(N)$, which implies
\begin{equation}
    \omega_0 \braket{b_{x'}^{\dagger}b_{x'}} - 2|g|\sqrt{\braket{b_{x'}^{\dagger}b_{x'}}+1} \leq E + \Or(\omega_0^{-1}N).
\end{equation}
Therefore we can bound the particle number expectation value on site $x'$:
\begin{equation}
\label{eq:expectation_value_bound_energy_based_Hubbard_Holstein}
    \braket{b_{x'}^{\dagger}b_{x'}}\leq \Or(\omega_0^{-2} N+\omega_0^{-1}E).
\end{equation}

Again we denote the quantum state at time $t$ by $\ket{\psi(t)}$. The projection operator into the subspace with at most $\Lambda$ particles in bosonic mode $x$ is denoted by $\Pi^{(x)}_{[0,\Lambda]}$ and we denote $\Pi^{\mathrm{all}}_{[0,\Lambda]}=\prod_x \Pi^{(x)}_{[0,\Lambda]}$. In order to ensure
\begin{equation}
    \|\overline{\Pi}^{\mathrm{all}}_{[0,\Lambda]}\ket{\psi(t)}\|\leq \epsilon,
\end{equation}
it suffices to require that
\begin{equation}
    \|\overline{\Pi}^{(x')}_{[0,\Lambda]}\ket{\psi(t)}\|\leq N^{-1/2}\epsilon,
\end{equation}
for all $x'$. Using \eqref{eq:expectation_value_bound_energy_based_Hubbard_Holstein} and Markov's inequality, we thus need to choose the truncation threshold $\Lambda$ to scale as
\begin{equation}
\Lambda = \Or\left(\frac{\omega_0^{-2} N^2+\omega_0^{-1} N E}{\epsilon^2}\right).
\end{equation}
We compare this energy-based truncation threshold with the one derived in this work in \eqref{eq:truncation_threshold_boson_fermion}, which for the Hubbard-Holstein model is
\begin{equation}
    \Lambda = \left(\sqrt{\Lambda_0}+\Or(T\polylog(N\epsilon^{-1}))\right)^2.
\end{equation}
We see that besides the advantage mentioned in the single mode setting there is also an exponentially better scaling with respect to the system size.

\begin{figure}
    \centering
    \includegraphics[width=0.45\textwidth]{boson_growth_system_size.pdf}
    \includegraphics[width=0.45\textwidth]{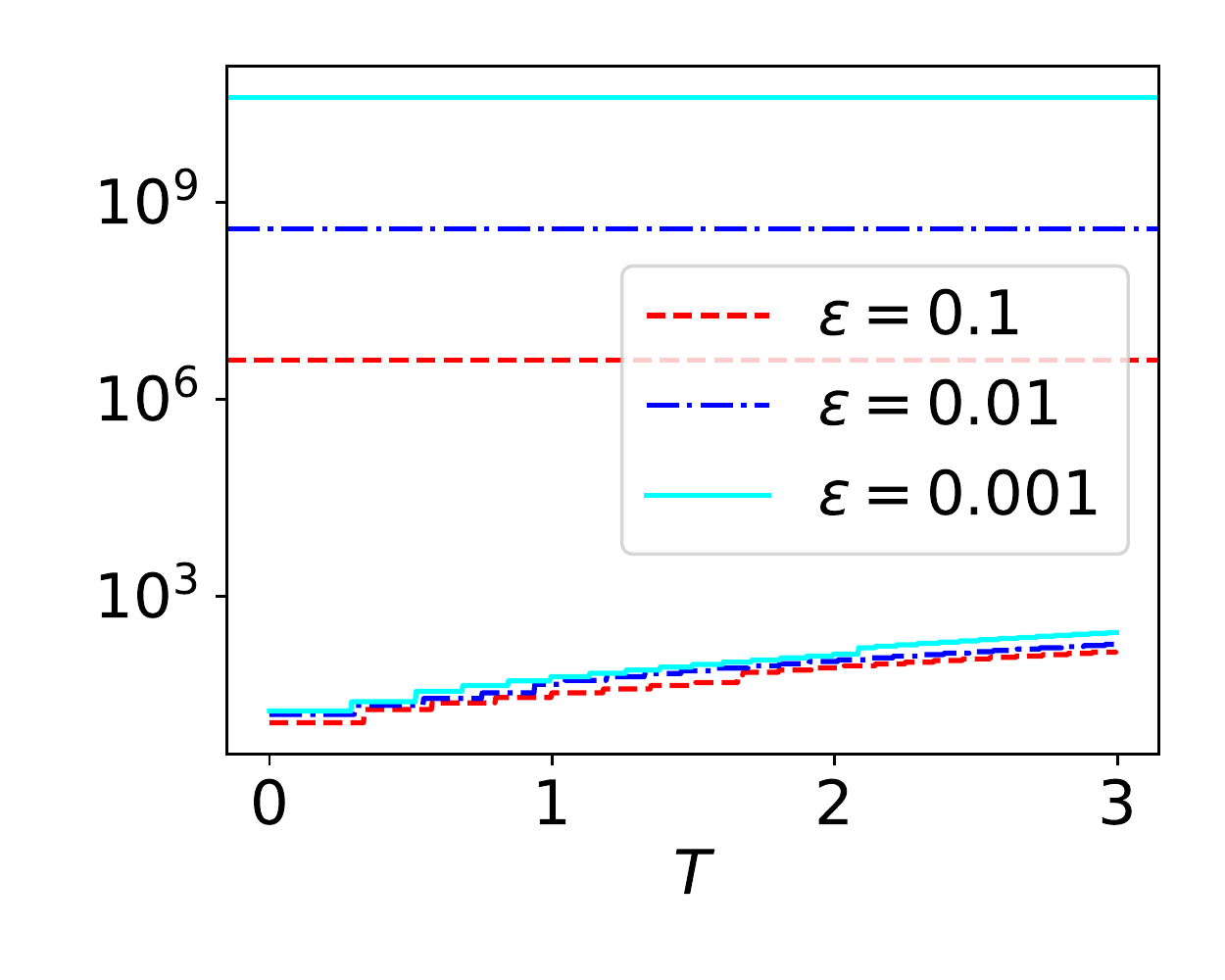}
    \includegraphics[width=0.45\textwidth]{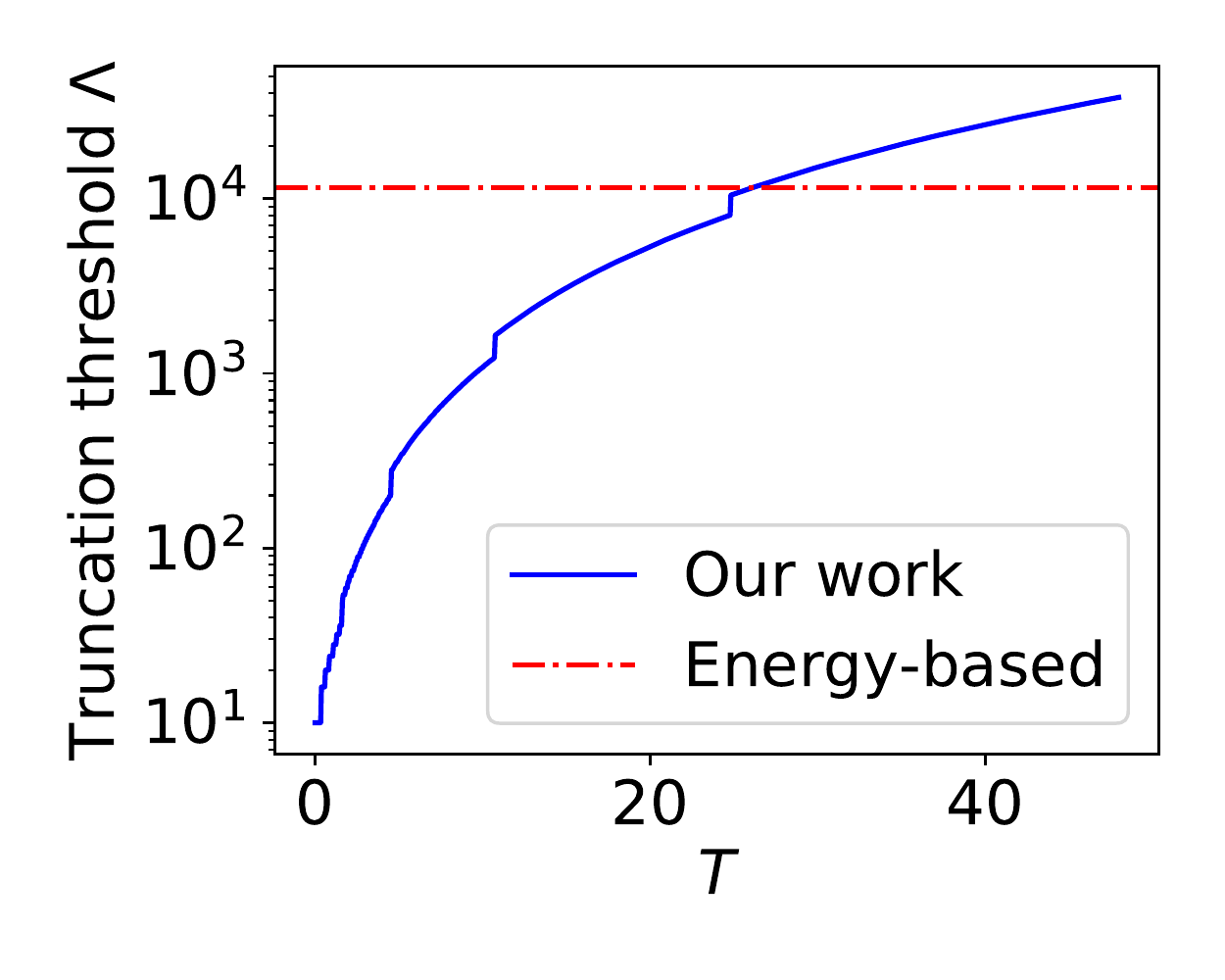}
    \caption{The truncation thresholds $\Lambda$ required for Hubbard-Holstein model of different sizes $N$ (upper left) and to achieve different precisions (upper right). The curves below show the $\Lambda$ obtained in this work and the horizontal lines above show the $\Lambda$ obtained using the energy-based method. The model parameters are chosen according to \cite{KlossReichmanTempelaar2019multiset}: $\omega_0=1,g=0.5,U=0,\mu=0$. We assume the initial state has at most $\Lambda_0=4$ particles in each bosonic mode. In the upper left panel we set $\epsilon=10^{-2}$ and in the upper right panel we set $N=100$. The panel below shows the cross-over of the two truncation thresholds using the two methods for $N=5$ and $\epsilon=0.1$.
    The truncation thresholds in this work are computed by using \eqref{eq:choose_M(t)} and choosing the smallest integer $\Delta$ to satisfy the precision requirement.
    } 
    \label{fig:boson_growth_size_and_precision}
\end{figure}

In Figure~\ref{fig:boson_growth_size_and_precision} we compare the truncation threshold $\Lambda$ computed using the method of this work and the energy-based method of \cite{JordanLeePreskill2012quantum} for the Holstein model, which is a special case of the Hubbard-Holstein model with $U=0$, with parameters chosen according to \cite{KlossReichmanTempelaar2019multiset}. We assume the initial state is a tensor product between the fermionic ground state and a quantum state of the bosonic modes that has at most $\Lambda_0=4$ particles in each mode. We clearly see that when the system size becomes larger or when the precision requirement is higher, our method yields a lower truncation threshold than the energy-based method.

\end{document}